\documentclass[aps,prd,nofootinbib,notitlepage,unsortedaddress]{revtex4-1}
\usepackage{enumitem}
\usepackage{dsfont}
\usepackage{amsfonts}
\usepackage{amssymb}
\usepackage[latin1]{inputenc}
\usepackage[T1]{fontenc}
\usepackage{graphicx}
\usepackage{mathtools}
\usepackage{bm}
\usepackage{bbm}
\usepackage[normalem]{ulem}
\usepackage[mathscr]{euscript}
\usepackage[dvipsnames]{xcolor}

\usepackage[colorlinks=true]{hyperref}
\usepackage{amsthm}

\newcommand{\R}{\mathbb R}

\newcommand{\id}{\text{id}}

\newcommand{\D}{\mathrm{d}}

\newcommand{\db}{\bar\partial}

\newcommand{\y}{y}
\newcommand{\nab}{\mathring{\nabla}}

\def\br#1\er{\textcolor{red}{#1}}
\def\bb#1\eb{\textcolor{blue}{#1}}
\def\bo#1\eo{\textcolor{orange}{#1}}

 \newtheorem{prop}{Proposition}
\newtheorem{cor}[prop]{Corollary}
\newtheorem{lem}[prop]{Lemma}
\newtheorem{theor}[prop]{Theorem}
\newtheorem{rem}[prop]{Remark}
\newtheorem{defi}[prop]{Definition}
\newtheorem{ex}[prop]{Example}

\begin{document}
\title{Einstein-Kropina Metrics and Their Application in Finsler Gravity} %

%\FV{This is a good title, but crrently I'd propose "Einstein-Kropina metrics and their application in Finsler Gravity", or even "Einstein-Kropina metrics and their relativistic application"} 

\author{Sjors Heefer}
\email{s.j.heefer@tue.nl}
\affiliation{Department of Mathematics and Computer Science, Eindhoven University of Technology, Eindhoven, The Netherlands}
\author{Fidel F.\ Villase\~nor}
\email{fidel.fernandez@uclm.es}
\affiliation{Department of Mathematics, % Escuela T\'ecnica Superior de Ingenier\'ia Industrial de Ciudad Real, 
University of Castilla-La Mancha, Ciudad Real, Spain}
\author{Andrea Fuster}
\email{a.fuster@tue.nl}
\affiliation{Department  of Mathematics and Computer Science, Eindhoven University of Technology, Eindhoven, The Netherlands}

\begin{abstract}
% We generalize the Einstein condition for Kropina metrics obtained in the positive definite setting by Zhang and Shen to all signatures. As examples of Einstein-Kropina metrics, we construct new explicit ones in Riemannian signature and the first Lorentzian ones. Next, we use this Einstein condition to (locally) classify all Einstein-Kropina solutions to Pfeifer and Wohlfarth's field vacuum equation (including a possible cosmological constant term) for Finsler gravity in arbitrary dimension and in positive definite as well as Lorentzian signature. In dimension 4 or lower, no nontrivial solutions exist. In dimension 5 and higher, the solutions are precisely those for which the pseudo-Riemannian metric $\alpha$ is locally a product of the real line with a Ricci-flat pseudo-Riemannian metric of one dimension lower, and the 1-form $\beta$ is the unique (up to sign) unit 1-form on the first factor. In particular, and very surprisingly, all Einstein-Kropina solutions to the $\Lambda$-vacuum equation are Berwald and Ricci-flat, and the cosmological constant necessarily vanishes.

    We generalize the Einstein condition for Kropina metrics obtained in the positive definite setting by Zhang, Shen and others to all signatures. As examples of Einstein-Kropina metrics $L=L_{a,b}$, we construct new explicit positive definite ones and the first ones with Lorentzian signature. Next, we classify all Einstein-Kropina solutions to the vacuum equation of Finsler gravity by Pfeifer and Wohlfarth, in arbitrary dimension and including a possible cosmological constant  $\Lambda$. For the Lorentzian and positive definite cases, the local picture is as follows. In dimension 4 or lower, only the trivial solution exists: a Minkowski or Euclidean space, essentially.  In dimension 5 and higher, the solutions are those $L_{a,b}$ for which the metric $a$ is a product of the real line with a Ricci-flat metric (Lorentzian or Riemannian) and the vector field $b$ is the unique unit vector on the first factor. As a very surprising rigidity phenomenon, all Einstein-Kropina solutions to the $\Lambda$-vacuum equation are Berwald and Ricci-flat, and the cosmological constant necessarily vanishes.
\end{abstract}
\maketitle
\tableofcontents

% \section*{Structure paper}

% \begin{itemize}
%     \item Motivation
%     \begin{itemize}
%         \item Finsler metric that is both Einstein and weakly weakly Landsberg is automatically a $\Lambda$-vacuum sol to the field equation [remark later that this never happens for Kropina unless Ricci-flat]
%         \item No known examples of this combination. Examples of Einstein spaces? not much but in Kropina the Einstein condition is known. 
%     \end{itemize}
%     \item Generalize Einstein condition to arbitrary signatures. 
%     \begin{itemize}
%         \item Somewhere make a clear list with all identities for unit Killing etc.
%     \end{itemize}
%     \item  Give our known examples of nontrivial Einstein Kropina metrics
%     \begin{itemize}
%         \item Include 3D classification (const curv)
%         \item Examples in all dimensions except 4D (and in  1D trivially) \SH{what about 2D?}
%         \item Observe our properly Einstein examples not solutions. We kept finding this and then we were able to prove it never happens (see later).
%     \end{itemize}
%     \item $\Lambda$-Vacuum solutions characterization

% \end{itemize}

\section{Introduction}

 In this work, we study Kropina metrics of Einstein type in arbitrary signature. We cover aspects of intrinsic mathematical interest as well as those specific to the application of Kropina metrics in Pfeifer and Wohlfarth's Finslerian extension of general relativity: Finsler gravity \cite{Pfeifer:2011xi,Pfeiferetal2025}. \\

In contrast to pseudo-Riemannian Einstein metrics, their Finslerian generalizations---Einstein-Finsler metrics---have so far been studied primarily in the positive definite setting. Even there, the results are scarce:  characterization for special classes of metrics \cite{Robles_2003,ZhangXia2014}, some explicit examples \cite{BaoRobles,Cheng2017,Einstein_warped_Shen, TANG201883} and a Schur theorem in some cases \cite{villasenor:2024,YU2010290}.  In Lorentzian signature there are almost no results (we are only aware of \cite{TANG201883}), even though in this case, questions such as the existence of examples or the Schur theorem are of interest not only in geometry but also for relativistic applications, especially in cosmology. Now, in positive definite signature, Zhang, Shen and others have obtained a fairly rigid characterization of Einstein-Kropina metrics \cite{ZHANG201380,ZShen2014}. This makes the Kropina class a promising playground for a first investigation of Einstein-Finsler metrics in other signatures.  \\
 
 % with comparatively few results pertaining to other signatures. Moreover, while explicit examples of properly Einstein-Finsler metrics have been obtained in  , such examples are scarce, especially non-positive-definite ones. Some classification results exist, however, for special classes of Finsler metrics. In particular, in the positive definite setting, Zhang and Shen have obtained a characterization of Einstein-Kropina metrics in terms of two simple conditions for the defining pseudo-Riemannian metric $a$ and vector field $b$ \cite{ZHANG201380}. For this reason, the class of Kropina metrics provides a promising playground to start investigating the properties of Einstein-Finsler metrics also in other signatures. \\

%  Apart from their intrinsic mathematical interest, Einstein-Finsler metrics are also expected to play an important role in Finsler geometric extensions of general relativity, or Finsler gravity, for short. In Pfeifer and Wohlfarth's Finsler gravity framework \cite{Pfeifer:2011xi,Hohmann_2019}, spacetime is endowed with a Finsler metric $L$, which in $n>2$ spacetime dimensions %in the presence of a cosmological constant $\Lambda$ 
% satisfies Pfeifer and Wohlfarth's equation in vacuum (precise definitions will follow later):
% \begin{equation}
%    \left(n+2\right)\mathrm{Ric}-g^{ij}\mathrm{Ric}_{\cdot i\cdot j}\,L-2L\mathcal{P}=0.
%     \label{vacuum eq}
% \end{equation}
% \SH{check if we explain the $\cdot i$ notation for the vertical derivatives. I don't think we use it anywhere else.}

 Concerning the physical role of Einstein-Kropina (and more generally, Einstein-Finsler) metrics, in Pfeifer and Wohlfarth's Finsler gravity framework \cite{Pfeifer:2011xi,Hohmann_2019}, spacetime is endowed with a pseudo-Finsler metric $L$; in vacuum and $n>2$ spacetime dimensions, it satisfies the equation
\begin{equation}
   \left(n+2\right)\mathrm{Ric}-g^{ij}\bar\partial_i\bar\partial_j\mathrm{Ric}\,L-2\mathcal{P}\,L=0.
    \label{vacuum eq}
\end{equation}
(Precise definitions will follow later.)  In a forthcoming paper \cite{Lambda-vacuum-paper}, we generalize this equation to accommodate a possibly nonvanishing cosmological constant $\Lambda$. After adding the corresponding term in the natural Finsler generalization of the Einstein-Hilbert action, the variation of the action leads to
\begin{equation}
    \left(n+2\right)\mathrm{Ric}-g^{ij}\bar\partial_i\bar\partial_j\mathrm{Ric}\,L-2\mathcal{P}\,L+2\Lambda\,L=0.
    \label{lambda vacuum eq}
\end{equation}
This equation is to be interpreted as the Finslerian gravitational field equation in the presence of a cosmological constant $\Lambda$, in arbitrary spacetime dimension $n$: the \emph{$\Lambda$-vacuum equation}, in short. Interestingly, in contrast to the situation in general relativity, there are no known identities in Finsler geometry that force $\Lambda$ to be constant, and hence $\Lambda$ is in principle allowed to vary over the base manifold. For pseudo-Riemannian manifolds, the equation reduces to Einstein's vacuum equation with cosmological constant, and for $\Lambda=0$ to Pfeifer and Wohfarth's vacuum equation \eqref{vacuum eq}.\\

% We emphasize that even though one of the motivations of the Finsler gravity program has been to provide a geometric explanation for phenomena such as the accelerated expansion of the universe \textit{without} invoking a cosmological constant or dark matter (see \cite{Pfeiferetal2025} for promising recent developments), it is nevertheless natural to keep the assumptions of the theory as general as possible. In particular, there is no reason to exclude the possibility that $\Lambda\neq 0$ a priori. From this perspective, the $\Lambda$-vacuum equation \eqref{lambda vacuum eq} and its solutions are of clear importance.\\
One of the motivations for developing Finsler gravity is to provide a geometric explanation for phenomena such as the accelerated expansion of the universe without invoking a cosmological constant or dark matter (see \cite{Pfeiferetal2025} for promising recent developments). Nevertheless, in this work we keep the assumptions of the theory as general as possible since there is no reason to \emph{a priori} exclude the possibility that $\Lambda\neq 0$; this is also important for the sake of comparison with general relativity. From this perspective,  \eqref{lambda vacuum eq} and its solutions are of clear physical importance, in addition to being mathematically interesting.\\

It is well known that in general relativity, solutions to Einstein's vacuum field equations with cosmological constant are given precisely by (pseudo-Riemannian) Einstein spacetimes. While this correspondence does not identically generalize to the Finsler realm, it has the following close analog \cite{Lambda-vacuum-paper}. An Einstein-Finsler metric solves the $\Lambda$-vacuum equation \eqref{lambda vacuum eq} if and only if its Landsberg scalar $\mathcal P$ is isotropic, i.e. $\mathcal P = \mathcal P(x)$ (again, precise definitions follow later). Conversely, under some additional analyticity assumptions, solutions with isotropic Landsberg scalar must be Einstein. Clearly, then, Einstein metrics play an important role in the analysis of the field equation also in the Finslerian setting. \\

Motivated by these observations, we dedicate the present article to the study of Einstein-Kropina metrics, with two main goals: 
\begin{enumerate}[label=(\roman*)]
    \item to characterize Einstein-Kropina metrics in arbitrary signature and, especially, construct new, explicit examples; 
    \item to obtain and classify Einstein-Kropina solutions to the $\Lambda$-vacuum equation \eqref{lambda vacuum eq}.
\end{enumerate}
% First, regarding the Einstein condition for Kropina metrics, the situation is as expected and indeed identical to the positive definite one. A Kropina metric built from a pseudo-Riemannian one, $a$, and a vector field, $b$, written without loss of generality in such a way that $\langle b,b\rangle:=a(b,b)=1$, is Einstein if and only if $a$ is Einstein and $b$ is Killing (Th. \ref{theor:Krop_Einstein_cond}).  
% That means that to construct examples of Einstein-Kropina metrics, one must find pseudo-Riemannian Einstein metrics that admit a unit Killing vector field. We give explicit examples of such pairs $(a,b)$ in all nontrivial dimensions except %\footnote{(Remarkably, this is again one of those situations where the case $n=4$ turns out to behave very differently from the rest)}
% $n=4$ \textcolor{red}{with all possible signatures.} \SH{and signs of $\langle b,b\rangle$} These examples come in essentially two types: the pseudo-Riemannian metric $a$ is either a constant curvature space or an Einstein-Sasaki manifold, or a direct product of one of these with another Einstein metric. \\

First, regarding the Einstein condition for Kropina metrics, we prove that the situation is as in the positive definite case \cite{ZHANG201380}. Let $L=L_{a,b}$ be the Kropina metric built from a pseudo-Riemannian metric $a$, and a vector field $b$,  written without loss of generality in such a way that $\langle b,b\rangle:=a(b,b)=1$. Then, $L$ is Einstein if and only if $a$ is Einstein and $b$ is Killing (Theorem \ref{theor:Krop_Einstein_cond}). Einstein-Kropina metrics can thus be constructed by finding such pairs $(a,b)$. In all nontrivial dimensions except for $n=4$, we find proper (i.e. non-Ricci-flat) examples with all possible signatures. The examples (\S \ref{sebsec:examples}) are essentially of three types: the pseudo-Riemannian metric $a$ is of constant curvature, Einstein-Sasaki, or a direct product of these.\\

Interestingly, this leaves the case of dimension $4$ unsettled: there, it remains an open question whether proper Einstein-Kropina metrics exist at all. This problem, finding $4$D Einstein metrics $a$ that admit a unit Killing field $b$, seems to be much more difficult. Another noteworthy case is $n=3$ with Riemannian (resp. Lorentzian) signature, where the round metric on the sphere $S^3$ (resp. the $AdS_3$ metric) is the only possibility for $a$ such that the Kropina metric is Einstein. In higher dimensions, on the other hand, the picture is very different. For example, we find infinitely many distinct Einstein-Kropina metrics on $S^2\times S^3$.\\

Second, regarding the classification of solutions to the $\Lambda$-vacuum equation, we prove the following surprising result (Theorem \ref{general vacuum solutions}). A Kropina metric $L$ with building blocks $a$ and $b$ written (w.l.o.g.) such that $\langle b,b\rangle=1$ is an Einstein solution of \eqref{lambda vacuum eq} if and only if the following are satisfied:
\begin{itemize}
        \item $a$ is \emph{Ricci-flat},
        \item $b$ is Killing,
        \item $\nab_j b^k\nab_k b^i=0$, where $\nab$ is the Levi-Civita connection of $a$.
\end{itemize}
Moreover, in this case, $L$ is itself Ricci-flat and weakly weakly Landsberg ($\mathcal P = 0$), and necessarily $\Lambda = 0$. This is very unexpected: it establishes, in particular, that the Finslerian vacuum equation does not admit proper Einstein-Kropina solutions \emph{even when a cosmological constant is included}. \\

In case $a$ is Lorentzian or Riemannian, the result can be refined (Theorem \ref{riem and lor classification}). In these signatures, the condition $\nab_j b^k\nab_k b^i=0$ actually implies that $b$ is parallel, $\nab_j b^i = 0$. This, in turn, implies that $a$ and $b$ are locally of the form
\begin{align}
    % \label{eq:riem_lorentz_sols}
a=(\D x^1)^2+ \sum_{k,\ell=2}^n \bar{a}_{k\ell}(x^2,\ldots,x^n)\,\D x^k \otimes \D x^\ell, \qquad b=\partial_{x^1},
\end{align}
where $\bar{a}_{k\ell}$ are the components of a Ricci-flat $(n-1)$-dimensional pseudo-Riemannian metric. The converse is true as well: any $(a,b)$ locally of this form makes up an Einstein-Kropina metric that solves \eqref{lambda vacuum eq}, with $\Lambda=0$;  %In fact, it has been known for some time that any $(\alpha,\beta)$-metric constructed from \eqref{eq:riem_lorentz_sols} is a vacuum solution \cite{Heefer_2023_Finsler_grav_waves, Heefer:2024Thesis}.
the latter has been known for some time \cite{Heefer_2023_Finsler_grav_waves, Heefer:2024Thesis}. What is surprising is that, in the Einstein-Kropina class, these solutions exhaust the possibilities and this rigidity is unaffected by the addition of the cosmological constant. An extra result here is that at least when $b$ is timelike, Ricci-flatness of $L_{a,b}$, which was necessary for it to be a $\Lambda$-vacuum solution, becomes sufficient (Theorem \ref{theor:ricci-flat_sufficient_cond}).  \\

 In low dimensions, our conclusions become especially restrictive. In fact (Corollary \ref{cor:solutions_n=3,4}), for $n=3$ or $4$ and $a$ Riemannian or Lorentzian, the Einstein-Kropina metric $L_{a,b}$ is a ($\Lambda$-)vacuum solution if and only if $a$ is locally isometric to Euclidean space or Minkowski spacetime and $b$ is locally a generator of unit translations. Thus, nontrivial solutions exist only for $n\geq 5$. We end with an explicit example of a Lorentzian vacuum solution in the lowest nontrivial dimension $n=5$.

\section{Preliminaries}
\subsection{Conventions and basic definitions}\label{sec:conventions}
We start by outlining our conventions for the various geometric quantities that we use throughout the paper. For details, see e.g. \cite{Heefer:2024Thesis}.\\

Let $M$ be a connected smooth manifold with $n:=\dim M\geq 2$. If $x = (x^1,\dots,x^n)$ are coordinates on $M$, we denote by $(x^i,y^i)$ the induced coordinates on the tangent bundle $TM$. Most of the time, we will represent tensor fields by their components in these arbitrary charts, the Einstein convention being in force in the indices $i,j,k,\ell,m\in\left\{1,\ldots,n\right\}$. By abuse of notation, we will sometimes also denote a generic point in $TM$ by $(x,y)$. A \emph{conic subbundle} of $TM$ is an open subset $\mathcal A\subset TM$ which is conic in the sense that $(x,\lambda y)\in\mathcal A$ for any $(x,y)\in\mathcal A$ and $\lambda>0$, and which satisfies $\pi(\mathcal A) = M$, where $\pi:TM\to M$ is the canonical projection. %The latter property says that the fibers $\mathcal A_{x}$ of $\mathcal A$ are nonempty. 
To keep things as general as possible, we will work with the following minimal definition of Finsler structure, where we use the notation
\begin{align}
    \partial_i = \frac{\partial}{\partial x^i}, \qquad \bar\partial_i = \frac{\partial}{\partial y^i}.
\end{align}
\begin{defi}\label{def:Finsler_Lagrangian}
A pseudo-Finsler metric %\footnote{Sometimes this object is called \emph{pseudo-Finsler Lagrangian} and the word `metric' is reserved for a corresponding 1-homogeneous object. Here we simply refer to the 2-homogeneous object as the metric.} 
on $M$ is a smooth map $L:\mathcal A\to \R$ on a conic subbundle $\mathcal A$ of $TM$, such that
\begin{enumerate} [label=(\roman*)]
	\item $L$ is positively homogeneous of degree two:
	\begin{align}
	L(x,\mu y) =\mu^2 L(x, y)\,,\quad \forall \mu>0\,;
	\end{align}
	\item The \textit{fundamental tensor}, with components $g_{ij} := \db_i\db_j \left(\frac{1}{2}L\right)$, is nondegenerate.
\end{enumerate}
\end{defi}
%\SH{I think we probably want to keep it general and use this `minimal' definition in this paper.} 
\noindent Given a pseudo-Finsler metric, the \emph{Cartan tensor} and \emph{mean Cartan tensor} are defined as 
\begin{align}
    C_{ijk} := \tfrac{1}{2}\bar\partial_i g_{jk},\qquad C_{i} := g^{jk}C_{ijk}
\end{align}
(as usual, $\left(g^{ij}\right)$ denotes the inverse of $\left(g_{ij}\right)$). The \emph{spray coefficients} are defined as 
\begin{align}
    G^i := \tfrac{1}{2}g^{ik} \left(y^j \partial_j\bar\partial_k L - \partial_k L\right).
\end{align}
%\SH{(Note this is twice the $G^i$ in \cite{ZHANG201380}).}
The \emph{canonical nonlinear connection} can then be obtained as
\begin{align}
    N^j_i := \tfrac{1}{2}\bar\partial_i G^j.
\end{align}
In terms of the horizontal derivatives 
\begin{align}
    \delta_i:= \partial_i - N^j{}_i\db_j,
\end{align}
we may then define the nonlinear \emph{curvature tensor} and \emph{Ricci scalar} as follows: 

\begin{align}
    R(\partial_i,\partial_j) = -[\delta_i,\delta_j]  \eqqcolon R^k{}_{ij}\bar\partial_k, \quad  R^k_{ij} = \delta_i N^k_j - \delta_j N^k_i;
    \qquad \text{Ric} := R^i{}_{ij}y^j = y^j\left(\delta_i N^i_j - \delta_j N^i_i\right).
\end{align}
% \begin{align}
%     \text{Ric} = R^i{}_{ij}y^j = y^j(\delta_i N^i_j - \delta_j N^i_i) %= \partial_jG^j - \tfrac{1}{4}\bar\partial_iG^j\bar\partial_jG^i - \tfrac{1}{2}y^i\bar\partial_j\partial_iG^j + \tfrac{1}{2}G^i\bar\partial_j\bar\partial_iG^j.
% \end{align}
%\SH{This agrees with the convention in \cite{ZHANG201380}.} 

The \emph{Landsberg tensor} and \emph{mean Landsberg tensor} are defined as
\begin{align}
    P_{jk\ell} :=-\tfrac{1}{4}y_i \, \bar\partial_j\bar\partial_k\bar\partial_\ell G^i, \qquad P_j := g^{k\ell}P_{jk\ell}.
\end{align}
See e.g. \cite{Heefer:2024Thesis} for various equivalent definitions; in particular, the mean Landsberg tensor may be computed as $P_j = \nabla C_j$, where $\nabla$ is the dynamical covariant derivative. This may be defined as $\nabla \equiv y^j\nabla^B_{\delta_j}$ in terms of the Berwald connection $\nabla^B$ with horizontal Christoffel symbols ${}^B\Gamma^k_{ij} := \tfrac{1}{2}\bar\partial_i\bar\partial_jG^k$. We define the \emph{Landsberg scalar} as 
\begin{align}\label{eq:Landsb_scal_0hom}
    \mathcal P := g^{ij}\left(\bar\partial_j\left(\nabla P_i\right) + \nabla^B_{\delta_j}P_i \right).
\end{align}
This scalar combination appears in the Finslerian field equation \eqref{lambda vacuum eq}.

 \begin{defi}\label{def:Einstein}
   A pseudo-Finsler metric $L\colon\mathcal{A}\rightarrow\mathbb{R}$ is said to be: 
    \begin{itemize}
        \item \emph{Einstein} if there exists a function $\lambda\colon M\rightarrow\mathbb{R}$ (the \emph{Einstein coefficient}) such that $\mathrm{Ric}=\lambda\,L$;

        \item \emph{Weakly weakly Landsberg} if $\mathcal{P}=0$ on $\mathcal{A}$.
        
    \end{itemize}
\end{defi} 

\noindent In the Finslerian setting, being Einstein (e.g. \cite{Robles_2003} or \cite[Definition 2.6]{villasenor:2025}) means that the $0$-homogeneous version of the Ricci scalar is \emph{isotropic}, i.e., $\frac{\mathrm{Ric}}{L}(x,y)=\frac{\mathrm{Ric}}{L}(x)$($=\lambda(x)$, and at those $(x,y)\in \mathcal{A}$ with $L\neq 0$). We sometimes speak of an ``Einstein-Finsler'' metric $L$, but we emphasize that this may have any (definite or indefinite) signature, by which we mean that of $g_{ij}$. Among the pseudo-Finsler metrics of Einstein type are counted the \emph{Ricci-flat} ones, those with $\mathrm{Ric}=0$.

In the pseudo-Riemannian case, the Schur theorem guarantees that the Einstein coefficient must be constant (except in dimension 2); it is as of yet still an open question whether the same is true in the Finslerian setting. Only partial results have been established, see \cite{villasenor:2025} and the references therein.

Finally, being weakly weakly Landsberg is a particular case of having an isotropic Landsberg scalar ($\mathcal{P}(x,y)=\mathcal{P}(x)$). These two conditions are implied by the \emph{weak Landsberg} ($P_i=0$), the \emph{Landsberg} ($P_{ijk}=0$) and the \emph{Berwald} ($N^i_j(x,y)=\Gamma^i_{jk}(x)\,y^k$) ones.

\begin{rem}
    Suppose that $a$ is a pseudo-Riemannian metric on $M$ (so $a_x\colon T_xM\times T_xM\rightarrow\mathbb{R}$ is nondegenerate at each $x\in M$) and the pseudo-Finsler metric is that corresponding to $a$, namely given by $L(x,y)=a_x(y,y)$. Then, the nonlinear curvature and Ricci scalar are related to the Riemann curvature $\mathring R^k{}_{\ell ij}$ and Ricci tensor $\mathring R_{ij}$ of $a_{ij}$ via
\begin{align}
\label{eq:relation_curv_finsler_riemm}
R^k{}_{ij}(x,y)= \mathring  R^k{}_{\ell ij} (x)\,y^\ell, \qquad \mathrm{Ric}(x,y)= \mathring R_{ij}(x)\,y^i y^j.
\end{align}
Here we adhere to the following conventions:
\begin{align}\label{eq:affine_curvatures}
    \mathring R^k{}_{\ell ij} = \partial_i\mathring\Gamma^k_{j\ell} - \partial_j\mathring\Gamma^k_{i\ell} + \mathring\Gamma^k_{im}\mathring\Gamma^m_{j\ell} - \mathring\Gamma^k_{jm}\mathring\Gamma^m_{i\ell}, \qquad \mathring R_{\ell k} = \mathring R^i{}_\ell{}_{ik},
\end{align}
in terms of the Christoffel symbols $\mathring{\Gamma}^i_{jk}$ of $a$.

These conventions ensure that the Einstein coefficient of an Einstein pseudo-Riemannian metric agrees with the coefficient of its corresponding Einstein-Finsler metric, in the following sense. $L_a:=A$ is Einstein with Einstein coefficient $\kappa$ precisely if $\mathring R_{ij} = \kappa\,a_{ij}$, i.e. if $a$ is Einstein as a pseudo-Riemannian metric with Einstein constant $\kappa\in\mathbb{R}$. Moreover, if the signature of $a_{ij}$ is positive definite $(+\ldots +)$ or Lorentzian \emph{with mostly plus convention} $(+\ldots + -)$, then \eqref{eq:affine_curvatures} are such that the round sphere $S^n$ and de Sitter spacetime $dS_n$ have a positive Einstein constant and a positive sectional curvature. 

We stress that using a negative definite $(-\ldots -)$ or \emph{mostly minus} Lorentzian $(-\ldots -+)$ metric $a$ would give $S^n$ and $dS_n$ negative Einstein constant and sectional curvature. In some places of this paper, it will turn out to be convenient to allow both Lorentzian conventions, so it is important to keep this fact in mind.
\end{rem}

\subsection{Kropina metrics}

Let $a=a_{ij}(x)\,\D x^i\otimes\D x^j$ be a pseudo-Riemannian metric and $b=b^i(x)\,\partial_{x^i}$ be a vector field, always assumed to be smooth on $M$. We shall work with the functions\footnote{In the literature on positive definite Finsler geometry, one usually prefers the quantities $\alpha,\beta$, where $\alpha  = \sqrt{A}$. To avoid ill-defined square roots, we always work directly with $A$, partly keeping the traditional notation for $(\alpha,\beta)$-metrics and partly following that of \cite{Voicu_spacetime_cond_2023}. }%, which is then well defined as $a_x(y,y)$ is guaranteed to be nonnegative, in contrast to the case at hand here. \bb We partly follow the notation of \cite{Voicu_spacetime_cond_2023}. \eb}
$A,\beta\colon TM\rightarrow\R$ given, for $y\in T_xM\subset TM$, by
\begin{equation}
    A(x,y):=a_x(y,y),\qquad\beta(x,\y):=a_x(b_x,y).
    \label{A and beta}
\end{equation}
As $\beta$ is linear in $\y$, it can be viewed as a $1$-form on $M$: the $a$-dual of $b$ (but it is this vector field that we consider as primary). Another important function shall be $\langle b,b\rangle = a(b,b)$, given by
\begin{align}
    \langle b,b\rangle(x):=a_x(b_x,b_x). 
\end{align}
When $a$ is positive definite, $\langle b,b\rangle$ is the squared norm of the vector field $b$, but in general, it may have any sign, consistent with the different possibilities for the `causal character' of $b$.

\begin{defi}\label{def:Kropina_metric}
    A \emph{Kropina pre-metric} is a function $L\colon\mathcal{A}\rightarrow\R$ of the form $L = A^2/\beta^2$, where $\mathcal A$ is a conic subbundle of $TM$. If, additionally, $L$ satisfies the axioms of a pseudo-Finsler metric, then $L$ is called a \emph{Kropina metric}.
\end{defi}

%\begin{defi}\label{def:Kropina_metric}
%    A \emph{Kropina metric} is a pseudo-Finsler metric $L\colon\mathcal{A}\rightarrow\R$ of the form $L = A^2/\beta^2$.
    %\[
    %L(x,\y)=\frac{A(x,\y)^2}{\beta(x,\y)^2}\qquad \forall(x,\y)\in\mathcal{A}.
    %\]
%\end{defi}

\noindent Clearly, any Kropina metric is the restriction to $\mathcal{A}$ of the Kropina pre-metric $L^{\rm max}\colon\mathcal{A}^{\rm max}\rightarrow \R$ defined on the maximal domain 
\begin{align}
    \mathcal{A}^{\rm max}:=\left\{(x,\y)\in TM\colon \beta(x,\y)\neq 0\right\},\qquad L^{\rm max}:=\frac{A^2}{\beta^2}.
\end{align}
The pair $(\mathcal{A}^{\rm max},L^{\rm max})$ is unique given $(a,b)$. Since the domains of our Kropina metrics $L$ must satisfy $\pi(\mathcal{A})=M$ and $\mathcal{A}\subseteq\mathcal{A}^{\rm max}$, it already follows that $b$ cannot have zeros on $M$.

For some purposes, it is more convenient to work with $L^{\rm max}$ rather than $L$. In particular, we can relate the signature of $L^{\rm max}$ with that of $A$, using a technique from \cite[Lemma  1.1.2]{ChernShen2005}. We consider an interpolation between the two governed by a parameter $t\in\left[0,1\right]$. %Define
% \[
% \Psi_t\colon\R\setminus\left\{0\right\}\longrightarrow\R,\qquad\Psi_t _t(s):=1-t+ts^{-1};
% \]
% \[
% L^{t}\colon\mathcal{A}^{\rm max}\cap\left\{A(x,\y)\neq 0\right\}\longrightarrow\R,\qquad L^{t}(x,\y):=A(x,\y) \Psi_t _t(\frac{\beta(x,\y)^2}{A(x,y)})=\left(1-t\right)A(x,\y)+t\frac{A(x,\y)^2}{\beta(x,\y)^2}.
% \]
% ---------------------\\
For each $t\in[0,1]$, define
\begin{align}
    L^{t}\colon\mathcal{A}^{\rm max}\longrightarrow\R,\qquad L^{t}(x,\y):=\left(1-t\right)A(x,\y)+t\frac{A(x,\y)^2}{\beta(x,\y)^2},
\end{align}
and denote the fundamental tensor of $L^t$ by $g^t_{ij} = \tfrac{1}{2}\bar\partial_i\bar\partial_j L^t$. Note that whenever $A\neq 0$, we can write $L^t$ as
\begin{align}
    L^t = A\Psi_t(s), \qquad s = \beta^2/A, \qquad \Psi_t(s):=1-t+ts^{-1}.
\end{align}
The result \cite[Eq. (A1)]{Voicu_spacetime_cond_2023} states that for such an expression,  one has 
\begin{align}\label{eq:det_Voicu}
    \det (g^t_{ij})= \Psi_t ^2(\Psi_t -s\Psi_t ')^{n-3} \frac{\partial}{\partial s}\left[\left(s-\langle b,b\rangle\right)\frac{(\Psi_t -s\Psi_t ')^2}{\Psi_t }\right]\det a_{ij}
\end{align}
whenever $\Psi_t -s\Psi_t '\neq 0$ (the prime $'$ means derivative with respect to $s$). This leads to the following explicit formula.
\begin{lem} The determinant of $g^t_{ij}$ is given by 
    \begin{align}
        \det (g^t_{ij})= \left(1-t+\frac{2t}{s}\right)^{n-2}\left[ (1-t)^2 + \frac{3\langle b,b\rangle(1-t)t}{s^2} + \frac{2\langle b,b\rangle t^2}{s^3} \right]\det a_{ij},
    \end{align}
\end{lem}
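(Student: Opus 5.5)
The plan is to substitute the explicit profile $\Psi_t(s)=1-t+ts^{-1}$ into the determinant formula \eqref{eq:det_Voicu} and simplify. Two extension steps then handle the points excluded by that formula: those where $A=0$ and those where $\Psi_t-s\Psi_t'=0$.

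\emph{Algebraic simplification.} Write $u:=1-t$ and $B:=\langle b,b\rangle$. Then
\begin{align}
\Psi_t=\frac{us+t}{s},\qquad \Psi_t'=-\frac{t}{s^2},\qquad Q:=\Psi_t-s\Psi_t'=\frac{us+2t}{s}=1-t+\frac{2t}{s}.
\end{align}
This already produces the factor $(1-t+2t/s)$ appearing in the claim. Next,
\begin{align}
\frac{Q^2}{\Psi_t}=\frac{(us+2t)^2}{s(us+t)},\qquad f(s):=(s-B)\frac{(us+2t)^2}{s(us+t)}.
\end{align}
Since $\Psi_t^2Q^{n-3}f'=Q^{n-2}\cdot\Psi_t^2f'/Q$, it suffices to verify the identity
\begin{align}
\Psi_t^2\,f'(s)=Q\left[u^2+\frac{3But}{s^2}+\frac{2Bt^2}{s^3}\right].
\end{align}
I will compute $f'$ by logarithmic differentiation:
\begin{align}
\frac{f'}{f}=\frac{1}{s-B}+\frac{2u}{us+2t}-\frac1s-\frac{u}{us+t}.
\end{align}
I then put everything over the common denominator $s^2(us+t)^2$ and multiply by $\Psi_t^2=(us+t)^2/s^2$. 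What remains is to check that the resulting numerator equals $(us+2t)\left(u^2s^3+3Buts+2Bt^2\right)$. This is a finite polynomial identity in $s$; I will verify it by expanding both sides and comparing coefficients of $s^0,\dots,s^4$. Note that the terms in $B$ must combine so that the factor $(s-B)$ cancels. This bookkeeping is the only real computation, and I expect it to be the main (if routine) obstacle.

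\emph{Extension to the excluded points.} The case $t=0$ is trivial, since then $L^0=A$ and both sides equal $\det a_{ij}$. For $t>0$, note that $1/s=A/\beta^2$ is smooth on $\mathcal{A}^{\rm max}$. Hence the right-hand side of the claimed formula is a polynomial in $A/\beta^2$, and the left-hand side $\det(g^t_{ij})$ is smooth on $\mathcal{A}^{\rm max}$ as well. The set where $A\neq0$ and $Q\neq0$ is the complement of the zero sets of two nontrivial rational functions of $y$ on each fiber, so it is open and dense. The identity established there therefore extends by continuity to all of $\mathcal{A}^{\rm max}$, which completes the proof.
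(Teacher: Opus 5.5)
Your proposal is correct and follows the paper's route: substitute $\Psi_t$ into \eqref{eq:det_Voicu}, simplify, and extend by continuity. You use logarithmic differentiation where the paper uses the quotient and product rules, and the numerator identity you need does hold; your density argument is also slightly more thorough than the paper's, since it covers the locus $1-t+2t/s=0$ as well as $A=0$ (the same reasoning handles $\Psi_t=0$, where $Q^2/\Psi_t$ in the cited formula is undefined).
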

whenever $1-t+\frac{2t}{s}\neq 0$, with the understanding that $1/s=0$ if $A=0$.
\begin{proof}
This is mostly a matter of computation. Suppose first that $A\neq 0$, so that \eqref{eq:det_Voicu} applies. We have $\Psi_t -s\Psi_t ^\prime=1-t+2ts^{-1}$ and hence 
\begin{align}
    \frac{\left(\Psi_t -s\Psi_t ^\prime\right)^2}{\Psi_t }%=\frac{\left(1-t+2ts^{-1}\right)^2}{1-t+ts^{-1}}
=\frac{\left[\left(1-t\right)s+2t\right]^2}{\left(1-t\right)s^2+ts},
\end{align}
with derivative
\begin{align}
    \frac{\partial}{\partial s}\left[\frac{\left(\Psi_t -s\Psi_t ^\prime\right)^2}{\Psi_t }\right]=\frac{2\left[\left(1-t\right)s+2t\right]\left(1-t\right)\left[\left(1-t\right)s^2+ts\right]-\left[\left(1-t\right)s+2t\right]^2\left[2\left(1-t\right)s+t\right]}{\left[\left(1-t\right)s^2+ts\right]^2}.
\end{align}
According to \eqref{eq:det_Voicu}, $\det g^t_{ij}/\det a_{ij} = (\Psi_t -s\Psi_t ^\prime)^{n-2}\psi_t = (1-t+2ts^{-1})^{n-2}\psi_t$, where $\psi_t$ is given by
\begin{align}
    \begin{split}
    \psi_t = &\quad\frac{\Psi_t ^2}{\Psi_t -s\Psi_t ^\prime}\frac{\partial}{\partial s}\left[\left(s-\langle b,b\rangle\right)\frac{\left(\Psi_t -s\Psi_t ^\prime\right)^2}{\Psi_t }\right] =\Psi_t \left(\Psi_t -s\Psi_t ^\prime\right)+\left(s-\langle b,b\rangle\right)\frac{\Psi_t ^2}{\Psi_t -s\Psi_t ^\prime}\frac{\partial}{\partial s}\left[\frac{\left(\Psi_t -s\Psi_t ^\prime\right)^2}{\Psi_t }\right] \\
    &=\left(1-t+ts^{-1}\right)\left(1-t+2ts^{-1}\right)+\left(s-\langle b,b\rangle\right)\frac{2s\left(1-t\right)\left[\left(1-t\right)s^2+ts\right]-s\left[\left(1-t\right)s+2t\right]\left[2\left(1-t\right)s+t\right]}{s^4} \\
    &=\left(1-t\right)^2+\frac{3\left(1-t\right)t}{s}+\frac{2t^2}{s^2}+\left(\langle b,b\rangle-s\right)\left(\frac{3(1-t)t}{s^2} + \frac{2t^2}{s^3}\right) \\
    &= \left(1-t\right)^2+\langle b,b\rangle\left(\frac{3(1-t)t}{s^2} + \frac{2t^2}{s^3}\right).
\end{split}
\end{align}
This yields the desired formula whenever $A\neq 0$. Then, by continuous extension, the formula must hold on all of the maximal domain $\mathcal A^{\text{max}}$.
\end{proof}
\noindent As a special case, by setting $t=1$ in the Lemma, we obtain the determinant of the fundamental tensor of a Kropina (pre-)metric. %The result shows that $\langle b,b\rangle$ is necessarily nowhere vanishing and hence, that we may always assume w.l.o.g. that $\langle b,b\rangle=1$, by rescaling $a$, as discussed above. BELOW
\begin{cor}\label{cor:det-krop}
    The determinant of the fundamental tensor components of a Kropina pre-metric is 
    \begin{align}
        \det g_{ij} = 2^{n-1}\langle b,b\rangle\left(\frac{A}{\beta^2}\right)^{n+1}\det a_{ij}.
    \end{align}
    Thus, $g$ is nondegenerate at a given $(x,y)\in \mathcal A^\text{max}$ if and only if $A(x,y)\neq 0$ and $a_x(b_x,b_x)\neq 0$.
    
    \noindent In particular, if the pair $(a,b)$ constitutes a Kropina metric, then the vector field $b$ is nowhere null with respect to $a$.
\end{cor}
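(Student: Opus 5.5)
The plan is to specialize the preceding Lemma to $t=1$ and simplify. At $t=1$ the prefactor $1-t+\tfrac{2t}{s}$ becomes $2/s$. Whenever $A\neq 0$, this is nonzero, since $s=\beta^2/A$ is then finite and nonzero (recall that $\beta\neq 0$ on $\mathcal A^{\rm max}$). The bracket collapses to $2\langle b,b\rangle/s^3$, because the $(1-t)$ terms vanish. Hence
\begin{align}
\det g_{ij} = \left(\frac{2}{s}\right)^{n-2}\frac{2\langle b,b\rangle}{s^3}\det a_{ij} = 2^{n-1}\langle b,b\rangle\, s^{-(n+1)}\det a_{ij},
\end{align}
and substituting $1/s = A/\beta^2$ gives the claimed formula at every point with $A\neq 0$.

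Next I would extend the formula to points where $A=0$. Both sides are smooth, indeed polynomial in $A$ and $1/\beta$, on $\mathcal A^{\rm max}$. The set $\{A\neq 0\}$ is dense in each fiber, since $A$ is a nonzero quadratic form, so the identity holds everywhere by continuity. This is the same continuity argument used at the end of the Lemma's proof. I expect this to be the only delicate point: the Lemma's hypothesis $1-t+2t/s\neq0$ fails when $A=0$ (where $1/s=0$), so the extension must be argued by continuity rather than read off from the Lemma. Alternatively, one could note that $\det g_{ij}$ is a smooth function of $(x,y)$ on $\mathcal A^{\rm max}$ and compare it with the right-hand side as smooth functions.

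The nondegeneracy statement then follows at once. Since $\det a_{ij}\neq 0$ and $\beta\neq 0$ on $\mathcal A^{\rm max}$, the expression $\det g_{ij}$ vanishes exactly when $A(x,y)=0$ or $\langle b,b\rangle(x)=0$.

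For the final claim, let $(a,b)$ define a Kropina metric on a conic subbundle $\mathcal A\subseteq\mathcal A^{\rm max}$ with $\pi(\mathcal A)=M$. For each $x\in M$ there is some $y$ with $(x,y)\in\mathcal A$, and $g$ must be nondegenerate there. This forces $a_x(b_x,b_x)\neq0$, so $b$ is nowhere null.
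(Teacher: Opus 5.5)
Your proposal is correct and follows the paper's route: the paper obtains the corollary by setting $t=1$ in the preceding Lemma, which gives $(2/s)^{n-2}\cdot 2\langle b,b\rangle s^{-3}$, and then substitutes $1/s=A/\beta^2$. Your extra steps make explicit what the paper leaves implicit: extending the formula by continuity to the set $\{A=0\}$, where the Lemma's hypothesis $1-t+2t/s\neq 0$ fails at $t=1$, and deriving the nondegeneracy and nowhere-null consequences.
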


\begin{rem}
    Before we proceed to determine the signature of a Kropina (pre-)metric, it is useful to observe that $a$ and $b$ are not unique given $(\mathcal{A}^{\rm max},L^{\rm max})$. Indeed, $L^{\rm max}$ remains the same under changes $(a,b)\mapsto (a,-b)$ and $(a,b)\mapsto (fa,b)$ whenever $f\in C^\infty(M)$ is nowhere vanishing (recall \eqref{A and beta}). This has important implications:
    \begin{enumerate}
        \item If $b$ is taken as a fixed datum, $L^{\rm max}$ becomes an invariant of the conformal class of $a$; in indefinite signature, this means that it only depends on the bundle of null cones associated with $a$. In the Lorentzian case, this bundle is precisely  the  (unoriented) causal structure of $(M,a)$.

        \item What is more, the invariance under $(a,b)\mapsto (-a,b)$ makes the choice of convention for Lorentzian metrics irrelevant. So, one may take the Lorentzian signature of $a$ to be $(- +\ldots +)$ or $(+ - \ldots -)$ indistictly.

        \item Assuming that $(a,b)$ constitutes a Kropina metric, we may set $f = \frac{1}{a(b,b)}\in C^\infty(M)$ above (Corollary \ref{cor:det-krop}) to obtain a new pseudo-Riemannian metric $\widetilde a = fa$ satisfying
        \begin{align}
            L^{\rm max} =\frac{\widetilde A^2}{\widetilde\beta^2},\qquad \widetilde{a}(b,b)=1.
        \end{align}
        This implies that any Kropina metric can be written in the form $L = A^2/\beta^2$ with $\langle b,b\rangle= 1$.
    \end{enumerate}
    We will repeatedly use these facts, particularly the last one, throughout the remainder of the paper.
    
%     Indeed, recalling \eqref{A and beta}:
% \begin{enumerate}
%     \item $L^{\rm max}$ is an invariant of the conformal class of $a$, in the sense that it stays the same under changes $(a,b)\mapsto(e^{2u}a,b)$ for arbitrary $u\in C^\infty(M)$. For $a$ of indefinite signature, this means that, once $b$ is fixed, $L^{\rm max}$  depends only on the bundle of null cones associated with $a$.  In the Lorentzian case, this is the (unoriented) causal structure of $a$.

%     \item $L^{\rm max}$ does not change under $(a,b)\mapsto(-a,b)$ either. For the Lorentzian case, this means that it is irrelevant whether one takes $a$ to be of signature $(-,+,\ldots,+)$ or $(+,-,\ldots,-)$.
% \end{enumerate}
% All in all, $L^{\rm max}$ remains the same under $(a,b)\mapsto(fa,b)$ provided that the function $f=f(x)$ is nowhere vanishing. If, moreover, $\langle b,b\rangle=a(b,b)$ is nowhere vanishing (i.e., the vector field $b$ is nowhere null), we may set $f = 1/a(b,b)$ to obtain a new pseudo-Riemannian metric $\widetilde a = fa = \frac{a}{a(b,b)}$ satisfying
% \begin{align}
%     L_\text{max} = \widetilde A^2/ \widetilde\beta^2, \qquad \widetilde{a}(b,b)=1.
% \end{align}
% In particular, it follows from Cor. \ref{cor:det-krop} that if the pair $(a,b)$ constitutes a Kropina metric, then $b$ is necessarily nowhere null. This implies that any Kropina metric can be written in the form $L = A^2/\beta^2$ with $\langle b,b\rangle= 1$. Throughout the remainder of the paper, we will use this fact repeatedly.
\end{rem}

The following proposition relates the signature of $L$ to that of $a$. Here we will assume that $\langle b,b\rangle=a(b,b)>0$. If, on the other hand, $\langle b,b\rangle<0$, an analogous result can be easily obtained by performing a change $a\mapsto-a$, which leaves $L$ unchanged but changes the sign of $\langle b,b\rangle$.

\begin{prop}\label{prop:signature_Kropina_metric} 
Let $L=A^2/\beta^2$ be a Kropina pre-metric with fundamental tensor $g_{ij}$ and assume that $\langle b,b\rangle>0$ everywhere. Then for all $y\in T_xM$ with $a_x(y,y) = A(x,y) > 0$, the signature of $g_{ij}(x,y)$ is the same as the signature of $a_{ij}$. In particular:
\begin{itemize}
    \item if  $a$ is positive definite, then $L$ is positive definite on all of $\mathcal A^\text{max}$;
    \item  if $a$ is Lorentzian and $b$ is timelike (spacelike), then $L$ has Lorentzian signature within the timelike cones (resp., spacelike sector) of $a$.
\end{itemize}
\end{prop}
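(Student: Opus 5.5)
The plan is a continuity (interpolation) argument with the family $L^t$, $t\in[0,1]$, combined with the determinant formula of the preceding Lemma. Fix $x\in M$ and a vector $y\in T_xM$ with $A(x,y)>0$ and $\beta(x,y)\neq 0$, so that $(x,y)\in\mathcal A^{\rm max}$. Then $s=\beta^2/A>0$. At this fixed point, $t\mapsto g^t_{ij}(x,y)$ is a continuous path of symmetric matrices. At $t=0$ it equals $a_{ij}(x)$ and at $t=1$ it equals $g_{ij}(x,y)$. Since eigenvalues depend continuously on the matrix, the signature is constant along the path provided $\det g^t_{ij}(x,y)\neq0$ for every $t\in[0,1]$. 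So the whole proof reduces to showing that this determinant never vanishes.

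\textbf{Nonvanishing of the determinant.} I will use the Lemma. Since $s>0$ and $t\in[0,1]$, the factor $1-t+2t/s$ is a sum of nonnegative terms and is not zero: at $t=0$ it equals $1$, and for $t>0$ it is at least $2t/s>0$. This also guarantees that the hypothesis of the Lemma holds, so the formula applies. The bracket
\[
(1-t)^2+\frac{3\langle b,b\rangle(1-t)t}{s^2}+\frac{2\langle b,b\rangle t^2}{s^3}
\]
consists of nonnegative terms, because $\langle b,b\rangle>0$ and $s>0$. The first term is positive when $t<1$, and the last term is positive when $t=1$. Hence the bracket is strictly positive, and $\det g^t_{ij}$ is a nonzero multiple of $\det a_{ij}$ for every $t$. (At $t=1$ this is consistent with Corollary \ref{cor:det-krop}.) Together with continuity of eigenvalues, this gives that $g_{ij}(x,y)$ has the same signature as $a_{ij}(x)$ whenever $A(x,y)>0$.

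\textbf{The two special cases.} If $a$ is positive definite, then $A>0$ for every $y\neq0$. Every point of $\mathcal A^{\rm max}$ has $\beta\neq0$, hence $y\neq0$, so $L$ is positive definite on all of $\mathcal A^{\rm max}$. If $a$ is Lorentzian, I choose the sign convention of $a$ so that $\langle b,b\rangle>0$. This is allowed by the invariance under $a\mapsto -a$ noted in the Remark. Suppose $b$ is timelike and we use mostly-minus signature $(+-\cdots-)$; then $\langle b,b\rangle>0$, and $A>0$ holds exactly on the timelike cones. Suppose instead $b$ is spacelike and we use mostly-plus signature; then $A>0$ holds exactly on the spacelike sector. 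In either case, $g$ has Lorentzian signature there.

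\textbf{Main obstacle.} The only delicate point is justifying that the signature is preserved along the path. This needs nondegeneracy at every $t$, not just at the endpoints, which is exactly why the sign analysis of the bracket is needed. I would also check that each $g^t$ is defined on $\mathcal A^{\rm max}$, which holds since $\beta\neq0$ there.
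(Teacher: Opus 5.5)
Your proof is correct and is essentially the paper's argument: the paper also fixes $y$ with $A>0$ (so $s>0$), uses the interpolation $L^t$ together with the Lemma's determinant formula, and observes that both factors are positive for every $t\in[0,1]$, so the signature at $t=1$ equals that at $t=0$. The differences are only cosmetic. The paper first normalizes $\langle b,b\rangle=1$, whereas you keep a general $\langle b,b\rangle>0$. You also spell out the continuity-of-eigenvalues step and the sign conventions behind the Lorentzian cases, which the paper leaves implicit.
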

\begin{proof}
If we assume w.l.o.g. that $\langle b,b\rangle=1$, then the determinant of the interpolating metric simplifies to
\begin{align}
    \det g^t_{ij}= (1-t+2ts^{-1})^{n-2}\left[ (1-t)^2 + \frac{3(1-t)t}{s^2} + \frac{2t^2}{s^3} \right]\det a_{ij}.
\end{align}
Now fix some $y\in T_xM$ with $a_x(y,y) = A(x,y) > 0$, which is equivalent to $s>0$. Evaluating the determinant at this vector $y$, the term in round brackets is a convex combination of two positive terms, hence positive; and the term in square brackets is clearly also positive. Therefore the whole prefactor multiplying $\det a_{ij}$ is positive for all $t\in [0,1]$, which implies that the signature at $t=0$ must be the same as the signature at $t=1$. In other words, the signature of $g_{ij}(x,y)$ is the same as that of $a_{ij}$.
\end{proof}
\color{black}

\noindent We end this section with some useful formulas for Kropina metrics. It is straightforward to derive these using computer algebra software such as the xAct package for Wolfram Mathematica \cite{MartinGarcia:xAct}. The fundamental tensor of a Kropina metric $L = A^2/\beta^2$ is given by
\begin{align}\label{eq:fund_tensor_Krop}
    g_{ij} = \frac{2A}{\beta^2}a_{ij} + \frac{3A^2}{\beta^4}b_ib_j - \frac{4A}{\beta^3}(b_iy_j+ y_ib_j) +\frac{4}{\beta^2}y_iy_j,
\end{align}
with inverse
\begin{align}
    g^{ij} = \frac{\beta^2}{2A}a^{ij} - \frac{\beta^2}{2A\,\langle b,b\rangle}b^ib^j + \frac{\beta^3}{A^2\,\langle b,b\rangle}(b^iy^j+ y^ib^j) +\frac{\beta^2}{A^3}\left(A-\frac{2\beta^2}{\langle b,b\rangle}\right)y^iy^j.
\end{align}
The spray of a Kropina metric $L = A^2/\beta^2$ is given by
\begin{align}\label{eq:spray_Krop}
   G^i = \mathring{\Gamma}^i_{jk}y^jy^k - \frac{A}{\beta}s^i{}_jy^j + \frac{1}{\langle b,b\rangle}\left(\frac{A}{\beta}s_{jk}b^jy^k + r_{jk}y^j y^k\right)b^i - \frac{2}{\langle b,b\rangle}\left(s_{jk}b^jy^k + \frac{\beta}{A}r_{jk}y^j y^k\right)y^i,
\end{align}
where
\begin{align}
    r_{ij} = \tfrac{1}{2}(\mathring\nabla_jb_i + \mathring\nabla_ib_j), \qquad s_{ij} = \tfrac{1}{2}(\mathring\nabla_jb_i - \mathring\nabla_ib_j).
\end{align}
\begin{rem} \label{rem:raising_and_lowering}
    Starting in \eqref{eq:spray_Krop} and for the rest of the paper, indices are raised with $\left(a^{ij}\right)$ and lowered with $\left(a_{ij}\right)$ (to be distinguished from  operations performed with $g$ and its inverse). Also, quantities with a circle above them relate to $a$, e.g. $\mathring\nabla$ is the Levi-Civita connection of $a$. % and $\mathring{\Gamma}^i_{jk}$ are the corresponding Christoffel symbols.
\end{rem}

\section{Einstein-Kropina metrics}

\subsection{The Einstein condition for Kropina metrics}\label{sec:Einstein condition}

%A pseudo-Riemannian metric $a$ is said to be Einstein with Einstein coefficient $\kappa$ if $\mathring R_{ij} = \kappa a_{ij}$ for some $\kappa:M\to\R$. From \eqref{eq:relation_curv_finsler_riemm} it is clear that this is equivalent to $\mathring{\text{Ric}} = \kappa \mathring L$ if we identify $a$ with the pseudo-Finsler metric $\mathring L = A$. More generally, a pseudo-Finsler metric $L$ is said to be of Einstein type with Einstein coefficient $\lambda$ if $\text{Ric} = \lambda L$ for some $\lambda:M \to \R$. In what follows, we will consistently denote the Einstein coefficient of a pseudo-Riemannian metric by $\kappa$ while reserving the symbol $\lambda$ for the Einstein coefficient of the corresponding Kropina metric.

% In the pseudo-Riemannian case, the Schur theorem guarantees that the Einstein coefficient must be constant (except in dimension 2), it is as of yet still an open question whether the same is true in the Finslerian setting; only partial results have been established, see e.g. \cite{villasenor:2025}. 

In the positive definite setting, Zhang and Shen \cite{ZHANG201380} have shown that a Kropina metric of dimension $\geq 2$ is of Einstein type if and only if $a$ and $b$ can be chosen such that $a$ is Einstein and $b$ is a Killing vector of $a$ of unit norm. Here we report that their proof generalizes essentially verbatim to arbitrary signatures in the case $n\geq 3$, and with only a slight modification in dimension $n=2$. Since the proof is so similar, we omit it here and include only a sketch of the argument in Appendix \ref{app:proof_Einstein_cond}.

\begin{theor}[Einstein condition for Kropina metrics]\label{theor:Krop_Einstein_cond}
    A Kropina metric $L=\left(\frac{A}{\beta}\right)^2$ (of arbitrary signature), with $n\geq 2$ and written such that $\langle b,b\rangle=1$, is Einstein if and only if $a$ is Einstein and $b$ is Killing w.r.t. $a$. In this case, if $\mathring R_{ij} = \kappa \,a_{ij}$ then $\mathrm{Ric} = \tfrac{\kappa}{4}L$.
\end{theor}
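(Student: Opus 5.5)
The plan follows Zhang and Shen: compute $\mathrm{Ric}$ of $L=A^2/\beta^2$ from the spray \eqref{eq:spray_Krop}, using only the algebraic identities among $A$, $\beta$, $y_i$, $b_i$, $r_{ij}$, $s_{ij}$, and never positivity. With $\langle b,b\rangle=1$ we have $b^i\mathring\nabla_j b_i=0$. This gives $r_{ij}b^j=-s_{ij}b^j$, i.e. $r_j=-s_j$ in the usual notation $r_j:=b^ir_{ij}$, $s_j:=b^is_{ij}$. Write $G^i=\mathring\Gamma^i_{jk}y^jy^k+H^i$ with $H^i$ the correction terms. The standard formula $\mathrm{Ric}=\mathrm{Ric}_a+2\mathring\nabla_i H^i-y^j\bar\partial_j\mathring\nabla_i H^i\cdot\tfrac12\ldots$ (more precisely, the usual expression of $\mathrm{Ric}$ for a spray $G=\mathring G+H$ in terms of $\mathring\nabla$-covariant derivatives of $H$ and the quadratic terms $\bar\partial_jH^i\bar\partial_iH^j$ and $H^j\bar\partial_i\bar\partial_jH^i$, adapted to the normalisation $N^j_i=\tfrac12\bar\partial_iG^j$) then produces
\[
\mathrm{Ric}=\mathring R_{ij}y^iy^j+\frac{\Xi}{\beta^m A^\ell}
\]
with $\Xi$ a polynomial in $y$. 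I will take this computation from the positive definite case (checkable with xAct), since it is purely formal.

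\textbf{Sufficiency.} If $b$ is Killing then $r_{ij}=0$, and the unit condition gives $s_j=0$. The spray reduces to $G^i=\mathring\Gamma^i_{jk}y^jy^k-\frac{A}{\beta}s^i{}_jy^j$. Since $b$ is a unit Killing field, we have $\mathring\nabla_k s_{ij}=\mathring R_{\ell kij}b^\ell$ (up to sign conventions), together with $s_{ik}s^k{}_j$-type contractions. Substituting these, all terms collapse, and with $\mathring R_{ij}=\kappa a_{ij}$ one should get $\mathrm{Ric}=\kappa A-\kappa\beta^2+\ldots$. I will arrange the remaining terms so that they combine into $\tfrac{\kappa}{4}A^2/\beta^2$. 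The key input is that $\mathring R_{ij}b^ib^j=\kappa$ equals $s_{ik}s^{ki}$ up to a constant (a Bochner-type identity for unit Killing fields), which converts the $s$-squared terms into curvature terms. The factor $\tfrac14$ should emerge from the normalisation of $G^i$ being twice the usual one.

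\textbf{Necessity.} Assume $\mathrm{Ric}=\lambda(x)L$. Multiply by a suitable power of $\beta$ and $A$ to obtain a polynomial identity in $y$ on the open set $\mathcal A$, hence on all of $T_xM$. The only non-Riemannian ingredient is irreducibility and divisibility: $\beta$ is linear and $A$ is a nondegenerate quadratic form, so for $n\ge3$ $A$ is irreducible over $\R$ in any signature, and for $\beta\neq0$, $A$ is not divisible by $\beta$ because $\langle b,b\rangle\neq0$ by Corollary \ref{cor:det-krop}. This is exactly what Zhang–Shen use, so their comparison of terms divisible or not by $\beta$ and $A$ goes through. First I extract that $r_{00}$ must be proportional to $A$, so $b$ is conformal; then the unit-length condition forces the conformal factor to vanish, so $b$ is Killing. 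Once $b$ is Killing, the identity above reduces to $\mathring R_{ij}y^iy^j-(\text{terms})=\lambda A^2/\beta^2$, and comparing coefficients gives $\mathring R_{ij}=4\lambda a_{ij}$.

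\textbf{Main obstacle.} The main difficulty is the case $n=2$, where $A$ may be reducible (indefinite signature: $A$ is a product of two real null linear forms). There the divisibility argument must be redone by factoring $A=\ell_1\ell_2$ and checking that $\beta$ is proportional to neither $\ell_i$, which holds since $b$ is non-null. Every Riemannian $2$-metric is already Einstein, so only the Killing part needs this modified argument.
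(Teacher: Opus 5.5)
Your plan is the paper's proof. The ingredients all match: a formal Ricci computation, valid in every signature because $L$ is rational in $y$ and no positivity is used; unique factorization, with $A$ irreducible for $n\ge 3$ by the rank argument and $\beta\nmid A$ because $b$ is non-null; $r_{ij}=\rho\,a_{ij}$ with $\rho=0$ forced by $\langle b,b\rangle=1$; then $A\mid \mathring R_{ij}y^iy^j$; and, for indefinite $n=2$, the same divisibility argument with $A=q\,l$ split into null linear factors, neither proportional to $\beta$.

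What you still need to supply are the two computed formulas everything rests on. The first is $A^2\beta^2\,\mathrm{Ric}=3(n-1)\beta^4(r_{ij}y^iy^j)^2+A\,p$ with $p$ polynomial in $y$; this specific $A$-free term, not an unspecified $\Xi$, is what lets you extract $A\mid r_{ij}y^iy^j$. The second is the unit-Killing formula $\mathrm{Ric}=\mathring R_{ij}y^iy^j+\frac{A}{\beta}\,y^i\mathring\nabla^k s_{ik}+\frac{A^2}{4\beta^2}s_{ij}s^{ij}$. In it, $\mathring\nabla^k s_{ik}=-\mathring R_{ki}b^k$ makes the first two terms cancel as $\kappa A-\kappa A$ (there is no $-\kappa\beta^2$ term, contrary to your guess), and $s_{ij}s^{ij}=\mathring R_{ij}b^ib^j=\kappa$ leaves exactly $\frac{\kappa}{4}L$.
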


\noindent One of the main purposes of this paper is to give new examples of non-Ricci-flat Einstein-Kropina metrics. We can, however, immediately rule out their existence in 2D: the following result shows that, apart from global topology, there are essentially only two 2D Einstein-Kropina metrics, and they are Ricci-flat.

\begin{cor}\label{cor:unitKV_2D}
    Let $L = A^2/\beta^2$ be an Einstein-Kropina metric written such that $\langle b,b\rangle=1$ and assume that $\dim M=2$. Then $a$ is flat, $b$ is parallel and w.r.t. $a$, and $L$ can be written in local coordinates as
    \begin{align}
        L(x^1,x^2,y^1,y^2) = \frac{\left[(y^1)^2\pm (y^2)^2\right]^2}{(y^1)^2}.
        \label{eq:2D_Einstein_Kropina}
    \end{align}
\end{cor}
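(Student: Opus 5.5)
By Theorem \ref{theor:Krop_Einstein_cond}, $a$ is Einstein and $b$ is a Killing field of $a$ with $\langle b,b\rangle=1$. The plan is to use these two facts in dimension $2$ to show that $b$ is parallel, hence $a$ is flat, and then to choose adapted coordinates.

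\textbf{Step 1: $b$ is parallel.} Since $b$ is Killing, $\nab_i b_j$ is antisymmetric. Differentiating $\langle b,b\rangle=1$ gives
\[
0=\tfrac12\partial_i\langle b,b\rangle=b^j\nab_i b_j=-b^j\nab_j b_i ,
\]
so $\nab_b b=0$. In dimension $2$ an antisymmetric tensor has the form $\nab_i b_j=\phi\,\varepsilon_{ij}$, where $\varepsilon$ is the $a$-volume form (locally) and $\phi$ a function. The contraction $b^j\varepsilon_{ji}$ is the $1$-form of a nonzero vector orthogonal to $b$, because $\varepsilon$ is nondegenerate and $b\neq0$. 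Hence $\phi\,b^j\varepsilon_{ji}=0$ forces $\phi=0$, and so $\nab b=0$. This argument works in both signatures, because only nondegeneracy of $\varepsilon$ is used.

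\textbf{Step 2: $a$ is flat.} A parallel vector field satisfies the Ricci identity $\mathring R^k{}_{\ell ij}b^\ell=0$. In 2D the curvature is $\mathring R_{k\ell ij}=K(a_{ki}a_{\ell j}-a_{kj}a_{\ell i})$, where $K=\kappa$ is the Einstein coefficient (Gauss curvature). Contracting with $b^\ell$ and a vector $v$ with $\langle b,v\rangle=0$ and $\langle v,v\rangle\neq0$ (available since $b$ is non-null) yields $K\langle v,v\rangle=0$. Hence $K=0$ and $a$ is flat.

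\textbf{Step 3: normal form.} A flat $a$ with a parallel unit field $b$ admits local affine coordinates with constant components. Choose a parallel orthonormal frame $\{b,e\}$ with $\langle e,e\rangle=\pm1$; it exists since $b^\perp$ is a non-null line field. Take coordinates $x^1,x^2$ with $\partial_{x^1}=b$ and $\partial_{x^2}=e$; these are commuting parallel fields on a flat torsion-free manifold. Then
\[
A=(y^1)^2\pm(y^2)^2,\qquad \beta=a(b,y)=y^1 ,
\]
which gives \eqref{eq:2D_Einstein_Kropina}.

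The main step is Step 1, where the two-dimensionality is genuinely used; the remaining steps are routine. An alternative route to Step 1 is to note that $b^\perp$ is spanned by one vector $e$: then $\nab_e b\perp b$, and antisymmetry gives $\langle\nab_e b,e\rangle=0$, so $\nab_e b$ is orthogonal to both $b$ and $e$ and must vanish.
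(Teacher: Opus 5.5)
Your proof is correct, but it takes a different route from the paper's. The paper argues directly in flow-box coordinates adapted to the Killing field, $b=\partial_u$. There the Killing equation and $\langle b,b\rangle=1$ give $a=\D u^2+2f(v)\,\D u\,\D v+g(v)\,\D v^2=(\D u+f(v)\,\D v)^2+h(v)\,\D v^2$. Since $\D u+f(v)\,\D v$ and $\sqrt{|h(v)|}\,\D v$ are closed, they are differentials of coordinates $x^1,x^2$ in which $a=(\D x^1)^2\pm(\D x^2)^2$ and $\beta=\D x^1$. Flatness of $a$ and parallelism of $b=\partial_{x^1}$ are then read off from this constant-coefficient normal form rather than proved beforehand.

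You argue covariantly instead. You first get $\mathring\nabla b=0$ from the fact that, in dimension $2$, a skew-adjoint endomorphism annihilating a non-null vector vanishes. You then deduce flatness, and finally build the coordinates from a commuting parallel orthonormal frame.

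The paper's version is shorter, needs no curvature identities, and produces the coordinates in one stroke. Yours isolates exactly where two-dimensionality enters, through the same skew-adjointness mechanism the paper uses later to show that $J^2=0$ forces $J=0$. Neither argument uses the Einstein condition on $a$ beyond extracting, via Theorem \ref{theor:Krop_Einstein_cond}, that $b$ is a unit Killing field.

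Your Step 2 is correct but redundant. Once $b$ is parallel, the unit field $e$ spanning $b^\perp$ is automatically parallel, because $\mathring\nabla_X e$ is orthogonal to both $e$ and $b$. A parallel frame already implies flatness, so you need neither flatness nor the curvature formula to produce $e$ in Step 3.
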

\begin{proof}
    %It suffices to show that any unit Killing vector field in 2D is parallel, as it is well known that any $(\alpha,\beta)$-metric with parallel $\beta$ is Berwald, and a Berwald space that is also Einstein is either Ricci-flat or pseudo-Riemannian. So %
    Suppose $a$ is a 2D pseudo-Riemannian metric with a unit Killing vector field, $\langle b,b\rangle=1$. Then, locally, $a$ and $b$ can be written in adapted coordinates $(u,v)$ as
    \begin{align}\label{eq:adapted_2D_coords}
        a = \D u^2 + 2f(v)\,\D u\,\D v + g(v)\,\D v^2 = \left(\D u + f(v)\,\D v\right)^2 + h(v)\,\D v^2, %= \beta^2 + h(v)\,\D v^2, 
        \qquad b = \partial_u
    \end{align}
    for suitable functions $f,g,h$. Introducing new coordinates $x^1,x^2$ defined by $\D x^1 = \D u + f(v)\,\D v$, $\D x^2 = \sqrt{|h(v)|}\,\D v$, the metric locally attains the form
    \begin{align}
        a = \left(\D x^1\right)^2 \pm \left(\D x^2\right)^2.
    \end{align}
    Furthermore, recalling that $\beta$ is the dual $1$-form of $b$, one sees in \eqref{eq:adapted_2D_coords} that 
    \begin{align}
        \beta=\D u + f(v)\,\D v=\D x^1, 
    \end{align}
     leading to the desired expression for $L$.
\end{proof}

\begin{rem}
    The (local) % situation
    landscape of Einstein-Kropina metrics in $n=2$ is completely described by Corollary  \ref{cor:unitKV_2D}; since \eqref{eq:2D_Einstein_Kropina} is flat and Berwald, we can already determine when they solve the $\Lambda$-vacuum equation \eqref{lambda vacuum eq}.

    Indeed, a 2D Einstein-Kropina metric (necessarily \eqref{eq:2D_Einstein_Kropina}) always solves \eqref{lambda vacuum eq} if $\Lambda=0$, and never if $\Lambda\neq 0$. Notice that this  parallels the general relativity situation: a 2D pseudo-Riemannian metric (necessarily Einstein) always solves the classical Einstein equation if $\Lambda=0$, and never if $\Lambda\neq 0$. 
    
    Since we have settled the study of Einstein-Kropina metrics in dimension $2$, we will assume that $n\geq 3$ in the remainder of the article whenever convenient.

    % \bb This is also the reason for having introduced a factor of $n-2$ in the $\Lambda$-vacuum equation \eqref{lambda vacuum eq}. In this way, any 2D Einstein-Kropina metric solves \eqref{lambda vacuum eq}, much like the way any 2D pseudo-Riemannian metric automatically solves the classical Einstein equation. \eb \SH{Hmm I think I see what you mean. But is this really so similar as this sentence suggests? In our case any 2D L will be a solution for any cosmological constant due to these conventions, no? Whereas in the GR case, any metric will be a solution but only for $\Lambda=0$. Maybe its a bit misleading to state it like this?}
\end{rem}

\noindent In dimension $3$, the situation is also very constrained, since any 3D Einstein pseudo-Riemannian metric is a constant curvature one, and only a few of these admit unit Killing vector fields:

\begin{lem}\label{lem:unitKV_const_curv}
    Let a Riemannian or Lorentzian constant curvature space admit a unit Killing vector (one with $\langle b,b\rangle=1$). Then, it is either flat or locally isometric to:
\begin{itemize}
    \item an odd-dimensional round sphere $S^n$ with $n\geq 3$;
    \item an odd-dimensional anti-de Sitter spacetime $AdS_n$ in mostly minus convention, with $n\geq 3$.
\end{itemize}

\end{lem}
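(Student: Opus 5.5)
Let $b$ be a Killing field with $\langle b,b\rangle=1$ on a space of constant sectional curvature $K$, so $\mathring R_{ijk\ell}=K(a_{ik}a_{j\ell}-a_{i\ell}a_{jk})$ in the sign conventions of \eqref{eq:affine_curvatures}. Since $b$ is Killing, $\nab_i b_j=s_{ji}$ is skew and $r_{ij}=0$. The plan is to derive a pointwise algebraic identity relating $K$ to the endomorphism $S^i{}_j:=\nab_j b^i$, and then to read off the possibilities from the signature.

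The first step uses the unit-length condition. Differentiating $b^jb_j=1$ gives $b^j\nab_i b_j=0$, so $S$ annihilates $b$ from both sides: $S^i{}_j b^j=0$ and $b_iS^i{}_j=0$. Next, for a Killing field we have the standard identity $\nab_i\nab_j b_k=\mathring R_{\ell ijk}b^\ell$, up to the sign conventions above. I will differentiate $b^k\nab_j b_k=0$ once more to obtain
\[
\nab_i b^k\,\nab_j b_k+b^k\nab_i\nab_j b_k=0 .
\]
Substituting the constant-curvature form of the curvature gives, up to a sign I will fix carefully,
\[
(S^{\top}S)_{ij}=-S_{ki}S^k{}_j=\pm K\,(a_{ij}-b_ib_j).
\]
In matrix form this reads $S^2=-K\,P$, where $P=\mathrm{id}-b\otimes\beta$ is the $a$-orthogonal projector onto $b^\perp$; I expect the sign to be $S^2=-K P$ in the mostly-plus convention.

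The second step analyses this identity. Restricted to $b^\perp$, the map $S$ is $a$-skew and satisfies $S^2=-K\,\mathrm{id}$. If $K=0$, the space is flat and there is nothing more to prove. If $K\neq0$, then $S$ is invertible on $b^\perp$, and being skew it has even rank, so $n-1$ is even and $n$ is odd. Moreover $n\geq3$, because for $n=2$ the space $b^\perp$ is one-dimensional, forcing $S=0$ and hence $K=0$; this is consistent with Corollary \ref{cor:unitKV_2D}.

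The third step, which I expect to be the main obstacle, is controlling the sign of $K$ in each signature, since this is where the conventions bite. In the Riemannian case, $b^\perp$ is positive definite and $-\mathrm{tr}(S^2)=S_{ij}S^{ij}\geq0$. Since $S^{ij}S_{ij}=(n-1)K$, we get $K>0$, and hence a round sphere. In the Lorentzian case, rescale by $a\mapsto -a$ if necessary so that the unit condition $\langle b,b\rangle=1$ makes $b$ timelike in the mostly-minus convention. Then $b^\perp$ is negative definite, and with positive definite $h=-a|_{b^\perp}$, $S$ is still $h$-skew. The identity $S^2=-K\,\mathrm{id}$ then forces the eigenvalues of $S$ to be imaginary, namely $\pm i\sqrt{K'}$, where $K'$ is the curvature sign seen through $h$. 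Translating this back through the sign flip of the curvature under $a\mapsto -a$ (noted in the remark on the two Lorentzian conventions) shows the space is $AdS_n$ in the mostly-minus convention. I must track this flip carefully.

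The alternative in the Lorentzian case, $b$ spacelike in the mostly-plus convention, has $b^\perp$ Lorentzian and needs a separate check. The identity $S^2=-K\,\mathrm{id}$ with $S$ skew relative to a Lorentzian form is possible for either sign of $K$ a priori, so I would argue via $S_{ij}S^{ij}=(n-1)K$ combined with the known models to identify it again with $AdS$ after sign normalisation, or with $dS$ presented in mostly-minus. If this case does not reduce cleanly I will instead appeal to the classification of Killing fields on the model spaces. Finally, local isometry to the model spaces follows from the constant-curvature classification.
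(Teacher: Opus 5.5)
Your Steps 1 and 2 are correct, and your parity argument is actually stronger than the paper's. Since $S|_{b^\perp}$ is skew-adjoint for the nondegenerate form $a|_{b^\perp}$ and is invertible when $K\neq 0$, the form $a(S\cdot,\cdot)$ is nondegenerate and antisymmetric on $b^\perp$. So $n$ is odd in every signature and for either causal character of $b$. This replaces in one stroke the paper's sign-dependent determinant arguments (its Claims 3 and 4) and its much longer $\mathfrak{so}(n-1,2)$ embedding argument for even-dimensional $AdS_n$ (Claim 6). Your Riemannian case and your Lorentzian case with $b$ timelike ($b^\perp$ definite) are also fine; they correspond to the paper's Claims 1 and 2. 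Note, however, that $a\mapsto -a$ cannot move you from one Lorentzian case to the other while preserving $\langle b,b\rangle=1$. These are genuinely separate cases.

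The gap is the remaining Lorentzian case: $a$ mostly plus and $b$ spacelike with $\langle b,b\rangle=1$, so $b^\perp$ has signature $(n-2,1)$. Here the lemma requires you to \emph{exclude} $K>0$, i.e.\ de Sitter space with a spacelike unit Killing field, and your proposal does not do this. The identity $S_{ij}S^{ij}=(n-1)K$ gives no sign information when $b^\perp$ is indefinite. Ending with ``$dS$ presented in mostly-minus'' would not prove the lemma, because $dS_n$ is not among the allowed models, and flipping $a\mapsto -a$ would destroy $\langle b,b\rangle=1$ anyway. Your fallback of classifying Killing fields on the model space is essentially the paper's route (its Claim 5, done in embedding coordinates), but you give no argument. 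Within your own framework the case closes in two lines. For $v\in b^\perp$, skewness gives $a(v,Sv)=0$, and $S^2=-K\,\mathrm{id}$ on $b^\perp$ gives $a(Sv,Sv)=-a(v,S^2v)=K\,a(v,v)$. If $K>0$ and $v\in b^\perp$ is timelike, then $Sv$ is a timelike vector orthogonal to $v$, which is impossible in the Lorentzian space $b^\perp$. Equivalently, $S/\sqrt{K}$ would be an $a$-orthogonal complex structure on a space of odd index. Hence $K<0$ in this case, so the space is locally $AdS_n$, with $n$ odd by your parity argument, as required.
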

\noindent This fact can be put together from several different results that are well known in the literature; however, we have included a sketch of a self-contained proof in Appendix \ref{app:unitKV_const_curv}. The following result is then immediate from Theorem \ref{theor:Krop_Einstein_cond}.

\begin{cor}
    Let $L = A^2/\beta^2$ be an Einstein-Kropina metric written such that $\langle b,b\rangle=1$ and assume that $\dim M=3$. Then $a$ is locally isometric to $S^3$, $AdS_3$ or a flat $\R^3$.
\end{cor}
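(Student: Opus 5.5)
The plan is to chain Theorem \ref{theor:Krop_Einstein_cond} with the classification of three-dimensional Einstein metrics and then apply Lemma \ref{lem:unitKV_const_curv}. First, since $L$ is an Einstein-Kropina metric with $\langle b,b\rangle=1$ and $n=3\geq 2$, Theorem \ref{theor:Krop_Einstein_cond} tells us that $a$ is Einstein, $\mathring R_{ij}=\kappa\,a_{ij}$, and that $b$ is a unit Killing field of $a$.

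Second, we use that in dimension $3$ the Weyl tensor vanishes identically. Hence the Riemann tensor is determined algebraically by the Ricci tensor and the scalar curvature. Substituting $\mathring R_{ij}=\kappa\,a_{ij}$ gives
\begin{align}
\mathring R_{k\ell ij}=\tfrac{\kappa}{2}\left(a_{ki}a_{\ell j}-a_{kj}a_{\ell i}\right),
\end{align}
up to the index and sign conventions of \eqref{eq:affine_curvatures}. The contracted Bianchi identity (Schur's theorem, valid for $n\geq 3$) makes $\kappa$ constant, so $a$ has constant sectional curvature $\kappa/2$. This argument is signature-independent.

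Third, we apply Lemma \ref{lem:unitKV_const_curv}, which covers Riemannian and Lorentzian $a$. A constant curvature space admitting a unit Killing vector is either flat or locally isometric to an odd-dimensional $S^n$ or $AdS_n$ with $n\geq 3$; for $n=3$ this gives $S^3$, $AdS_3$ or flat $\R^3$. The signature bookkeeping needs a little care. The normalization $\langle b,b\rangle=1$ fixes the sign of $a$ relative to $b$. If $a$ is negative definite, or Lorentzian with the "wrong" convention, we replace $a\mapsto -a$. This leaves $L$ invariant, changes the sign of $\langle b,b\rangle$, and turns sphere-like curvature into hyperbolic curvature, as explained in the Remark on conventions. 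We must then check that we land in one of the cases covered by the Lemma, for instance $AdS_3$ in mostly-minus convention. I read "$\R^3$" as flat space of the appropriate signature (Euclidean or Minkowski).

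The main obstacle is this last step: the reduction to Lemma \ref{lem:unitKV_const_curv} only covers the definite and Lorentzian signatures and depends on the sign conventions. All three-dimensional signatures are definite or Lorentzian up to an overall sign, so the conventions Remark should handle the matching. I would verify case by case (positive or negative definite, each Lorentzian convention, $b$ timelike or spacelike) that after an allowed flip $a\mapsto -a$ the hypotheses of the Lemma hold, and that any case the Lemma excludes, such as a hyperbolic space with a unit Killing field, is correctly ruled out.
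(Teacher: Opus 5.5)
Your proposal is correct and follows the paper's own one-line argument. Theorem \ref{theor:Krop_Einstein_cond} makes $a$ Einstein with $b$ a unit Killing field, a three-dimensional Einstein metric has constant curvature, and Lemma \ref{lem:unitKV_const_curv} then leaves only $S^3$, $AdS_3$ or flat space. The signature bookkeeping you flag is lighter than you expect: $\langle b,b\rangle=1$ already rules out a negative definite $a$, so $a$ is Riemannian or Lorentzian in one of the two conventions, the Lemma explicitly allows mostly-minus $AdS_n$, and no flip $a\mapsto -a$ is needed.
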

\noindent In dimension $4$, the situation turns out to be particularly nontrivial: we have not been able to construct 4D non-Ricci flat Einstein-Kropina metrics. However, understanding which 4D Einstein metrics admit unit Killing fields is work in progress, in sight of the  open question whether nontrivial examples exist. %and in fact, we have obtained some partial results suggesting that such examples might not exist at all. 

Finally, the examples below show that in any dimension $5$ or greater, Einstein-Kropina metrics do exist, in Riemannian as well as Lorentzian signature.

\subsection{Examples of Einstein-Kropina metrics} \label{sebsec:examples}

% \SH{Personally, I would go in the following order: 
% \begin{enumerate}
%     \item  constant curvature spaces in all (possible) dimensions (+ noting that in $n=3$ these exhaust the list of possibilities)
%     \item products of constant curvature spaces: general statement that we can make examples in any dimension $n>3$ in both Riemannian/Lorentzian signature (for $a$ at least\dotes) + maybe give a list or nice table of all possibilities in $n=5$ (I would keep this relatively short)
%     \item Einstein-Finsler-Sasaki metrics
% \end{enumerate}
% }

\subsubsection{Constant curvature spaces}

Einstein-Kropina metrics with $a$ of constant curvature are the simplest possible ones, and Lemma \ref{lem:unitKV_const_curv} gives limited options for them. For proper examples, the metric $a$ must be that of an odd-dimensional $S^n$ or an odd-dimensional $AdS_n$. In both of these cases, it is easy to find unit Killing vectors by viewing them as embedded in a flat space of dimension $n+1$. \\

\begin{ex}[The odd-dimensional round sphere $S^n$]
    Let $n=2k+1$ and consider the embedded
    \begin{align}
        S^n = \big\{ x\in\R^{n+1}\,:\, \sum_{i=1}^{n+1} (x^i)^2=1\big\} \label{n-sphere}
    \end{align}
    with its induced Riemannian metric $a_{S^n}$. This metric is Einstein with Einstein constant $\kappa = n-1$. The following is a unit Killing vector field,
    \begin{align}
        b_{S^n} := \sum_{j=1}^{(n+1)/2}\left(x^{2j-1}\partial_{x^{2j}}-x^{2j}\partial_{x^{2j-1}} \right), \label{KV n-sphere}
    \end{align}
    generating the standard isometric $U(1)$-action on $S^n$ given by $k+1$ %pairs of 
    identical rotations in $k+1$ orthogonal $2$-planes. Then the positive definite Kropina metric $L_{S^n} := A_{S^n}^2/\beta_{S^n}^2$ is Einstein with Einstein constant $\lambda = (n-1)/4$ (as per Theorem \ref{theor:Krop_Einstein_cond}). A similar example was given in \cite{ZHANG201380} in the $n=3$ case.  
\end{ex} 

\begin{ex}[Odd-dimensional $AdS_n$] 
\label{ex:ads} 
    Let $n = 2k+1$ and consider
    \begin{align}
        \text{AdS}_n = \big\{ (x^1,\dots,x^{n-1},u,v) \in\R^{n-1,2}\,:\,\sum_{i=1}^{n-1} (x^i)^2 - u^2 - v^2=-1\big\}, \label{AdS_n}
    \end{align}
    with metric $a_{_{\text{AdS}_n}}$ induced by the flat one $d s^2 = \sum_{i=1}^{n-1} \left(\D x^i\right)^2 - \D u^2 - \D v^2$ on $\R^{n-1,2}$. It has Einstein constant $\kappa = -(n-1)$. The following is a Killing vector field,
\begin{align}
    b_{_{\text{AdS}_n}} := \sum_{j=1}^{(n-1)/2}\left(x^{2j-1}\partial_{x^{2j}}-x^{2j}\partial_{x^{2j-1}} \right) + u\,\partial_{v}-v\,\partial_{u}, \label{KV AdS_n} %\\
    %
    %= (x^2\partial_1 - x^1\partial_2) + (x^4\partial_3 - x^3\partial_4) +\dots + (x^{n-1}\partial_{n-3} - x^{n-3}\partial_{n-1}) + (s\partial_t - t\partial_s)
\end{align}
again generating an isometric $U(1)$-action given by individually rotating $k+1$ pairs of coordinates in $k+1$ orthogonal 2-planes.
With these conventions, the signature of $\text{AdS}_n$ is $(+ \ldots + -)$ and $a_{_{\text{AdS}_n}}(b_{_{\text{AdS}_n}},b_{_{\text{AdS}_n}})=-1$, so $b_{_{\text{AdS}_n}}$ is timelike. This means that the Kropina metric $L_{_{\text{AdS}_n}} := A_{_{\text{AdS}_n}}^2/\beta_{_{\text{AdS}_n}}^2$ is an Einstein-Finsler metric, with Einstein constant\footnote{\label{footnote:sign_flip}Note that the chosen signature implies $\langle b,b\rangle=-1$, rather than $\langle b,b\rangle=1$ as we have assumed throughout the above. To apply %results like Theorem \ref{theor:Krop_Einstein_cond},
our previous results, one technically first needs to change $a\mapsto -a$ (which leaves the resulting Kropina metric unchanged), leading to $\langle b,b\rangle=1$. However, this leads to $(-\ldots - +)$ signature and, together with our other conventions, a positive curvature for $AdS_n$. This may be confusing, as $AdS_n$ is usually understood to have a negative curvature; we therefore stick to $(+\ldots+-)$ signature and $\langle b,b\rangle=-1$ in the first part of this example. At any rate, this explains the overall sign difference between the Einstein constants $\kappa$ and $\lambda$. %The overall sign difference with respect to the Einstein constant of a is due to the signature flip $a\mapsto -a$, needed to apply our results %(see footnote \ref{footnote:signature_flip}).
} $\lambda=(n-1)/4$, that has signature $(-\dots -+)$ in the timelike cones $a_{_{\text{AdS}_n}}(y,y)<0$ (in light of Proposition \ref{prop:signature_Kropina_metric}). To the best of our knowledge, this is the first example of an Einstein-Kropina metric with indefinite, Lorentzian signature. 
\end{ex}

\subsubsection{Product spaces}
Further novel explicit examples can be obtained by considering direct products of pseudo-Riemannian metrics; notably, these exist in all dimensions $n\geq 5$. If both factors are Einstein with the same non-zero Einstein constant, then the product will also be Einstein \cite[Proposition 1.99]{besse1987einstein}. Similarly, if one factor has a unit Killing field, then (the natural lift of) this vector field will still be unit and Killing for the product metric. Note that in $n=5$, both factors \emph{must} be of constant curvature in order to be non-Ricci-flat Einstein. (This restriction does not hold anymore in $n\geq 6$, as Einstein does not imply constant curvature for factors of dim. greater than $3$, see e.g. \ref{Einstein-Sasaki}.) %Since the Einstein \bb constants \eb of the two factors must match, 
For simplicity, we focus here on the case where both factors are of constant curvature. Hence, we consider one of the factors to be an odd-dimensional sphere or anti-de Sitter spacetime. The possible combinations are listed schematically in table \ref{tbl:product_metrics}, where $k\geq 1$ and $m\geq 2$, and where a minus sign before one of the factors indicates the signature is taken to be mostly minus rather than mostly plus (e.g. $S^m$ will be positive definite, $-S^m$ is negative definite.). We restrict ourselves to Riemannian and Lorentzian factors in the table and we stress that the notation is only schematic since, unless $2k+1= m$, one of the factors needs to be rescaled to ensure the same Einstein constant for both factors. The standard $m$-dimensional hyperbolic space is denoted by $H^m$. \\

\begin{table}[h]
\centering
\begin{tabular}{|c|c|c|c|c|}
\hline
\textbf{Product metric $a$} & \textbf{Signature of $a$ $(+,-)$} & \textbf{Unit Killing vector $b$ % field
} & $\langle b,b\rangle$ & \textbf{Causal character} \\  \hline
$S^{2k+1} \times S^m$ & $(2k+1+m,0)$ & $b_{S^{2k+1}}$ & +1 & N/A \\ \hline
$S^{2k+1} \times -H^m$ & $(2k+1,m)$ &  $b_{S^{2k+1}}$ & +1 & N/A \\ \hline
$S^{2k+1} \times dS_m$  & $(2k+m,1)$ & $b_{S^{2k+1}}$ & +1 &spacelike  \\ \hline
$S^{2k+1} \times -AdS_m$ & $(2k+2,m-1)$ &  $b_{S^{2k+1}}$ & +1 & N/A\\ \hline
$AdS_{2k+1}\times -S^m$ & $(2k,m+1)$ & $b_{_{AdS_{2k+1}}}$ & -1 & N/A\\ \hline
$AdS_{2k+1}\times H^m$  & $(2k+m,1)$ & $b_{_{AdS_{2k+1}}}$ & -1 &timelike   \\ \hline
$AdS_{2k+1}\times -dS_m$ & $(2k+1,m)$ & $b_{_{AdS_{2k+1}}}$ & -1 &  N/A\\ \hline
$AdS_{2k+1}\times AdS_m$ &  $(2k+m-1,2)$ & $b_{_{AdS_{2k+1}}}$ &  -1 & N/A \\ \hline
\end{tabular}
\caption{Schematic overview of the possible product metrics and their unit Killing vector fields leading to Einstein-Kropina metrics, that can be formed from Riemannian and Lorentzian constant curvature spaces. A minus sign before one of the factors indicates the signature is taken to be mostly minus rather than mostly plus (e.g. $S^m$ will be positive definite, $-S^m$ is negative definite.) We stress that the overview is merely schematic; the second factor may need to be rescaled to ensure both factors have the same Einstein constant. }
\label{tbl:product_metrics}
\end{table}

\begin{rem}
    A straightforward combinatorial argument reveals that it is possible to construct $\kappa$-Einstein metrics with $\kappa\neq 0$, and a Killing field with $\langle b,b\rangle=1$, of every\footnote{ Except, of course, the case of a negative definite product metric ($p=0$). Applying a global $-$ sign produces the desired metrics with $\langle b,b\rangle=-1$ in all signatures except the positive definite one. One breaks this symmetry upon using Proposition \ref{prop:signature_Kropina_metric} to determine where the Einstein-Kropina metric has the relevant signature.
    
    % Here signature refers to the pseudo-Riemannian metric. As discussed above, if this is positive definite, then the resulting Kropina metric will also be positive definite; if this is Lorentzian and the vector field is timelike, then the Kropina metric will have Lorentzian signature within the timelike cone. More generally, the signature of the Kropina metric will be the same as that of the pseudo-Riemannian metric at all vectors $y$ that have the same causal character as $b$ \FV{With $\langle b,b\rangle > 0$} (see Prop. \ref{prop:signature_Kropina_metric}).
    
    } signature $(p,q)$ for which $p+q=n\geq 5$. This is doable via products of just Riemannian, negative definite and Lorentzian spaces; in fact, all the required combinations already appear in Table \ref{tbl:product_metrics} (perhaps up to global $-$ signs). Below, we explicitly detail the two most relevant ones in $n=5$. 
    
    Despite the flexibility to produce Einstein-Kropina metrics $L$, a notable fact is that the only one for which we can prove that $L$ is Lorentzian within the timelike cones, is that of  Example \ref{ex:ads_times_h}, corresponding to $AdS_{2k+1}\times H^m$ (and Example \ref{ex:ads} in the case of constant curvature spaces). In examples such as $S^{2k+1} \times dS_m$ (as per Proposition \ref{prop:signature_Kropina_metric}), the resulting  $L$ is only known to be Lorentzian within the set of \emph{spacelike} vectors of the product metric.

    This observation leaves open, e.g., whether there is an Einstein-Kropina metric that has Lorentzian signature within timelike cones and a \emph{negative} coefficient $\lambda=\frac{\kappa}{4}$.
\end{rem}

\begin{ex}[$S^3\times \tilde{S}^2$] 
Let $S^3$ be as in (\ref{n-sphere}), and $\tilde{S}^2= \big\{ x\in\R^{n+1}\,:\, \sum_{i=1}^{n+1} (x^i)^2=1/2\big\}$. The product metric $a_{S^3\times \tilde{S}^2}$ is Einstein with Einstein constant\footnote{We rescale the second factor to ensure equal Einstein constants for both factors; recall that the Einstein constant of a $n$-sphere of radius $R$ is $(n-1)/R^2$.} $\kappa = 2$. Consider also $b_{S^3}$ as in (\ref{KV n-sphere}). %, with $n=3$. 
  Then the positive definite Kropina metric $L_{S^3\times \tilde{S}^2} := A_{S^3\times \tilde{S}^2}^2/\beta_{S^3}^2$ is Einstein with Einstein coefficient $\lambda = 1/2$. 
\end{ex} 
\begin{ex}[$AdS_3\times \tilde{H}^2$] \label{ex:ads_times_h}
Let $\mbox{AdS}_3$ be as in (\ref{AdS_n}), with $n=3$, and $\tilde{H}^2$ hyperbolic space of radius $1/\sqrt{2}$ in $n=2$. The induced metric $a_{AdS_3\times \tilde{H}^2}$ is Einstein with Einstein constant $\kappa = -2$. Consider also $b_{AdS_3}$ as in (\ref{KV AdS_n}), with $n=3$. Then the Kropina metric $L_{AdS_3 \times \tilde{H}^2} = A_{AdS_3 \times \tilde{H}^2}^2/\beta_{AdS_3}^2$ is Einstein with Einstein constant\footnote{The overall sign difference with respect to the Einstein constant of $a$ is due to the signature flip $a\mapsto -a$, needed to apply our results (see footnote %\ref{footnote:signature_flip}).} 
\ref{footnote:sign_flip}).} $\lambda = 1/2$. This is a second example of Einstein-Kropina metric with indefinite, Lorentzian signature. 
\end{ex}

\subsubsection{Einstein-Sasaki-Kropina metrics} \label{Einstein-Sasaki}
 In Riemannian geometry, Einstein-Sasaki manifolds provide an important class of Einstein metrics, appearing naturally in e.g. supergravity and the AdS/CFT context \cite{Sasaki-Einstein-AdS/CFT}. It turns out that such metrics can be used to construct examples of Einstein-Kropina metrics in a very straightforward way in any odd dimension $\geq 5$.
 
 A Riemannian manifold $(M,a)$ is called \emph{Sasakian} if its metric cone $(C(M) := \R_{>0}\times M,\;\widehat a := \D r^2 + r^2a)$ admits an almost complex structure $J$ such that $(C(M), \widehat a, J)$ is K\"ahler (i.e., $\widehat a(J\cdot,J\cdot) = \widehat a(\cdot,\cdot)$ and $\nabla^{\widehat a} J = 0$). An \emph{Einstein-Sasaki} manifold is a manifold that is both Sasakian and Einstein. For a review, see \cite{Sasaki-Einstein}. Any Sasakian manifold admits a so-called \emph{Reeb vector field} $b := \left.J(r\partial_r)\right|_M$, which is Killing and satisfies $\langle b,b\rangle = 1$. Theorem \ref{theor:Krop_Einstein_cond} then has the following immediate implication.

 \begin{cor}\label{cor:Einstein-Sasaki}
     Suppose that $(M,a)$ is an Einstein-Sasaki manifold and $b$ is its Reeb field. Then the Kropina metric $L = A^2/\beta^2$ is Einstein.
 \end{cor}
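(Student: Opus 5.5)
The plan is to apply Theorem \ref{theor:Krop_Einstein_cond} directly. That theorem says a Kropina metric written with $\langle b,b\rangle=1$ is Einstein exactly when $a$ is Einstein and $b$ is Killing for $a$. So three things must be checked for an Einstein-Sasaki manifold $(M,a)$ with Reeb field $b=\left.J(r\partial_r)\right|_M$: the pair $(a,b)$ defines a Kropina metric at all, $\langle b,b\rangle=1$, and $b$ is Killing. The Einstein property of $a$ is part of the hypothesis.

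First I would verify $\langle b,b\rangle=1$ from the cone structure. On $C(M)$ we have $\widehat a(r\partial_r,r\partial_r)=r^2$, and since $J$ is $\widehat a$-orthogonal, $\widehat a(J r\partial_r,Jr\partial_r)=r^2$. Moreover $\widehat a(r\partial_r,Jr\partial_r)=0$ by antisymmetry of the Kähler form, so $J(r\partial_r)$ is tangent to the slices $\{r\}\times M$. At $r=1$ it is a vector field on $M$ with $a(b,b)=1$. In particular $b$ has no zeros, so $\beta\neq0$ on a nonempty open conic subset of each fibre. Because $a$ is Riemannian, Corollary \ref{cor:det-krop} gives nondegeneracy of $g$ wherever $\beta\neq 0$ (there $A>0$), and Proposition \ref{prop:signature_Kropina_metric} shows $L$ is positive definite on $\mathcal A^{\rm max}$. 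Hence $L=A^2/\beta^2$ is a genuine Kropina metric already in the normalized form assumed by Theorem \ref{theor:Krop_Einstein_cond}.

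Next, the Killing property, which I expect to be the only step needing a real argument, though it is standard. The plan is to show that $\xi:=J(r\partial_r)$ is Killing on the cone and is tangent to, and $r$-independent along, the slices. The Euler field satisfies $\nabla^{\widehat a}(r\partial_r)=\mathrm{Id}$, since the cone metric is homothetic under $r\mapsto cr$. Together with $\nabla^{\widehat a}J=0$ this gives $\nabla^{\widehat a}\xi=J$, which is $\widehat a$-skew because $J$ is orthogonal with $J^2=-1$. Hence $\xi$ is Killing for $\widehat a$.

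Since $\xi$ preserves $r$ ($\xi(r)=0$ by the orthogonality shown above) and commutes with $r\partial_r$, its flow preserves each slice $\{1\}\times M$ together with the induced metric $a$. Its restriction $b$ is therefore Killing for $a$. Invoking Theorem \ref{theor:Krop_Einstein_cond} then yields that $L$ is Einstein, with $\mathrm{Ric}=\tfrac{\kappa}{4}L$ where $\kappa$ is the Einstein constant of $a$. For a Sasaki-Einstein metric $\kappa=2(n-1)$, so $\mathrm{Ric}=\tfrac{n-1}{2}L$.
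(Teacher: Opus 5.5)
Your proposal is correct and follows the paper's route. The paper simply quotes that the Reeb field of a Sasakian manifold is Killing with $\langle b,b\rangle=1$ and applies Theorem~\ref{theor:Krop_Einstein_cond}. You additionally derive these two facts from the K\"ahler cone via $\nabla^{\widehat a}\xi=J$, and you check that $(a,b)$ gives a genuine Kropina metric. (Homothety of the cone only gives the symmetric part of $\nabla^{\widehat a}(r\partial_r)=\mathrm{Id}$; the skew part vanishes because $r\,\D r=\D(r^2/2)$ is closed.)

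There is one slip in your closing aside, though that aside is not needed for the statement. With $n=\dim M$ as in this paper, a Sasaki--Einstein metric has $\kappa=n-1$; your $2(n-1)$ is the formula for $\dim M=2n-1$. So $\mathrm{Ric}=\tfrac{n-1}{4}L$, which is consistent with $\kappa=4$ and $\lambda=1$ on $S^2\times S^3$ in Theorem~\ref{theor:Einstein-sasaki}.
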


\noindent Several classes of examples of Einstein-Sasaki manifolds are known \cite{Sasaki-Einstein-AdS/CFT,Sasaki-Einstein}, all of which lead to examples of Einstein-Kropina metrics, according to the corollary. Below, we explicitly describe one such class, originally introduced in \cite{Sasaki-Einstein_on_S3xS2} (see also \cite{Sasaki-Einstein}). In particular, \cite[Theorem 4.1]{Sasaki-Einstein} in combination with Corollary \ref{cor:Einstein-Sasaki} immediately yields the following novel class of Einstein-Kropina metrics. %\SH{compare to page 471 Besse}

\begin{theor}\label{theor:Einstein-sasaki}
There exist countably infinitely many Einstein-Kropina metrics $L_{p,q}$ with Einstein constant $
\lambda=1$ on $M = S^2\times S^3$, labelled by two positive integers $ q<p \in \mathbb Z_{>0}$ with $\text{gcd}(p, q) = 1$. Explicitly, in local coordinates, $L_{p,q}= A_{p,q}^2/\beta_{p,q}^2$ is given by
\begin{align}
   a_{p,q} &= \frac{1-y}{6}(\D\theta^2 + \sin^2\theta\,\D\phi^2) + \frac{1}{w(y)q(y)}\D y^2 + \frac{q(y)}{9}\left(\D\psi - \cos\theta\,\D\phi\right)^2 + w(y)
    \left[\D\alpha + f(y)\left(\D\psi - \cos\theta\,\D\phi\right)\right]^2, \label{eq:a_Einstein-Sasaki}\\
   b_{p,q} &= 3\partial_\psi - \tfrac{1}{2}\partial_\alpha= b, \label{eq:b_Einstein-Sasaki}
\end{align}
where\footnote{Note that the function $q(y)$ is not to be confused with the number $q$, nor our metric $a_{p,q}$ with the constant in \cite[Theorem 4.1]{Sasaki-Einstein}. We retain the notation of the original papers except where it conflicts with our conventions. } %We retain this notation to remain consistent with the original papers.}
\begin{align}
    w(y) = \frac{2(c-y^2)}{1-y},\qquad q(y) = \frac{c-3y^2+2y^3}{c-y^2}, \qquad f(y) = \frac{c-2y+y^2}{6(c-y^2)}
\end{align}
and \begin{align}
    c = \frac{1}{2} - \frac{p^2-3q^2}{4p^3}\sqrt{4p^2-3q^2}.
\end{align}
The Kropina metrics $L_{p,q}$ %all have Einstein constant 1 and 
are all positive definite on their maximal domains $\mathcal{A}_{p,q}^{\rm max}=T(S^2\times S^3)\setminus\{\beta_{p,q}=0\}$. %Countably many of them (if not all) are nonhomothetic. 
\end{theor}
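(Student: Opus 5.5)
The plan is to read the theorem off from Corollary \ref{cor:Einstein-Sasaki} and Theorem \ref{theor:Krop_Einstein_cond}, using the Gauntlett--Martelli--Sparks--Waldram family $Y^{p,q}$ as the input. By \cite[Theorem 4.1]{Sasaki-Einstein}, for coprime integers $0<q<p$ the local metric \eqref{eq:a_Einstein-Sasaki}, with the stated $w$, $q(y)$, $f$ and $c$, extends to a smooth Sasaki-Einstein metric on a compact manifold diffeomorphic to $S^2\times S^3$. It satisfies $\mathring R_{ij}=4\,a_{ij}$, the standard normalisation for a $5$-dimensional Sasaki-Einstein metric, where $\kappa=2(n-1)$. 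Given that, Theorem \ref{theor:Krop_Einstein_cond} yields $\mathrm{Ric}=\tfrac{\kappa}{4}L=L$, so $\lambda=1$, provided the Reeb field $b$ is Killing with $\langle b,b\rangle=1$. This is exactly the content of Corollary \ref{cor:Einstein-Sasaki}.

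The first step is to identify the Reeb field in these coordinates as $b=3\partial_\psi-\tfrac12\partial_\alpha$. It is Killing because the metric coefficients depend only on $(\theta,y)$, so both $\partial_\psi$ and $\partial_\alpha$ are Killing, and hence so is any constant combination of them. For the unit norm, note that the only components of $a$ along $\D\psi$ and $\D\alpha$ come from the last two squares, with $\D\phi$ set to zero. This gives
\begin{align}
\langle b,b\rangle=\frac{q(y)}{9}\cdot 9 + w(y)\left(-\tfrac12+3f(y)\right)^2 .
\end{align}
Now $3f-\tfrac12=\frac{c-2y+y^2-c+y^2}{2(c-y^2)}=\frac{y(y-1)}{c-y^2}$. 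Hence
\begin{align}
w\left(3f-\tfrac12\right)^2=\frac{2y^2(1-y)}{c-y^2},
\end{align}
and adding $q(y)=\frac{c-3y^2+2y^3}{c-y^2}$ gives
\begin{align}
\frac{c-3y^2+2y^3+2y^2-2y^3}{c-y^2}=\frac{c-y^2}{c-y^2}=1 .
\end{align}
So $b$ is a unit Killing field, consistent with it being the Reeb field. Alternatively, one can simply cite that the Reeb field of $Y^{p,q}$ takes this form.

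The second step is positivity. The metric $a$ is Riemannian and $\langle b,b\rangle=1>0$, so Proposition \ref{prop:signature_Kropina_metric} shows that $L_{p,q}$ is positive definite on $\mathcal A^{\rm max}_{p,q}$. Since $b$ is nowhere zero, $\pi(\mathcal A^{\rm max})=M$, and $L_{p,q}$ is a Kropina metric on all of $S^2\times S^3$.

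The third step is the count. Coprime pairs $q<p$ are countably infinite. The main point needing care, which I would also take from \cite{Sasaki-Einstein}, is that the metrics for distinct pairs are genuinely distinct: for instance, their volumes, an explicit function of $(p,q)$, differ. I would also rely on that reference for the global smoothness and the identification of the manifold with $S^2\times S^3$, and cite it for the normalisation $\kappa=4$ rather than recomputing the Ricci tensor. Everything else in the argument is the short algebraic check above.
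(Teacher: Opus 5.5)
Your proposal is correct and follows the paper's route. You cite \cite[Theorem 4.1]{Sasaki-Einstein} for the Einstein constant $4$ and the Reeb field, then apply Corollary~\ref{cor:Einstein-Sasaki} and Theorem~\ref{theor:Krop_Einstein_cond} to get $\lambda=\kappa/4=1$. Your explicit unit-norm and Killing check, and the appeal to Proposition~\ref{prop:signature_Kropina_metric} for positive definiteness, fill in details the paper leaves implicit.

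One small slip: with the paper's $n=\dim M=5$, the Sasaki--Einstein normalization is $\kappa=n-1=4$. Your formula $\kappa=2(n-1)$ is Sparks' convention for $\dim M=2n-1$ and would give $8$ here, although your value $4$ is correct.
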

\begin{proof}
   The result %\cite[Th. 4.1]{Sasaki-Einstein} 
   in \cite{Sasaki-Einstein} states that there exist countably infinitely many Einstein-Sasaki manifolds with Einstein constant $4$, labelled by $p,q$ as stated, where the Riemannian metric is locally given by \eqref{eq:a_Einstein-Sasaki}. In all cases, the Reeb vector field is given by \eqref{eq:b_Einstein-Sasaki}. Hence, their corresponding $L_{p,q}$ are Einstein by Corollary \ref{cor:Einstein-Sasaki} and must have Einstein constant $1$ by Theorem \ref{theor:Krop_Einstein_cond}.
\end{proof}

\noindent In the above theorem, $\theta\in[0,\pi],\phi\in[0,2\pi]$ are the standard coordinates on the first factor $S^2$.  The two coordinates $y\in[y_1,y_2]$ (where $y_1,y_2$ are the two smallest roots of the cubic $c-3y^2+2y^3$) and $\psi\in[0,2\pi]$ parameterize another $S^2$ (topologically). Then, $S^2\times S^3$ may be regarded as an $S^1$-fibration over the resulting $S^2\times S^2$, with $\alpha\in[0,2\pi l]$ the $S^1$-coordinate, where 
\begin{align}
    l = \frac{q}{3q^2 - 2p^2+p\sqrt{4p^2-3q^2}}.
\end{align}
The construction of Einstein-Sasaki metrics \cite{Sasaki-Einstein_on_S3xS2} used in the theorem can be generalized to higher dimensions \cite{Gauntlett:2004hh}, leading to the existence of infinitely many Einstein-Sasaki-Kropina metrics on higher analogs of $S^2\times S^3$. 

In addition, and in analogy to our earlier discussion, one can construct higher-dimensional Einstein-Kropina manifolds by taking the direct product of % the $a$ in 
\eqref{eq:a_Einstein-Sasaki} with other pseudo-Riemannian metrics with the same Einstein constant, i.e. $4$,so that the product is again Einstein. (The unit Killing vector being always the natural lift of \eqref{eq:b_Einstein-Sasaki}.) %Since the $b$ in \eqref{eq:b_Einstein-Sasaki} will still be unit Killing with respect to this product metric, the resulting Kropina metric will again be Einstein. 
For example, taking the product of %$a$
\eqref{eq:a_Einstein-Sasaki} with some de Sitter spacetime $dS_2$, one obtains  Einstein Lorentzian metrics $\widetilde a_{p,q}$ on $\widetilde M = S^2\times S^3\times dS_2\cong S^2\times S^3\times S^1\times\mathbb{R}$. Setting $\widetilde b$ just as in \eqref{eq:b_Einstein-Sasaki} yields infinitely many Einstein-Kropina metrics on the smooth manifold $S^2\times S^3\times S^1\times\mathbb{R}$ that are not %(everywhere) 
positive definite, according to Proposition \ref{prop:signature_Kropina_metric}. 

Finally, there is also the less %prominent 
explored notion of \emph{Lorentzian Einstein-Sasaki manifolds} (see e.g. \cite{axioms8010006}). For such manifolds, the Reeb vector field is also Killing and with $\langle b,b\rangle=\pm 1$, hence also leading to Einstein-Kropina metrics. However, no explicit (in the sense of Theorem \ref{theor:Einstein-sasaki}) examples of nontrivial Lorentzian Einstein-Sasaki manifolds seem to exist in the literature.

\section{$\Lambda$-vacuum solutions with arbitrary signature} \label{section: results vacuum eq}

For the rest of the article, we fix a $\Lambda\in\mathbb{R}$, the ``cosmological constant'', which is to be distinguished from the Einstein constant $\lambda$ of an Einstein-Kropina metric. We remark that all the discussion and results hold just the same in the case of a function $\Lambda(x)$.

Our next objective is to understand under what conditions an Einstein-Kropina metric solves the Finslerian field equation with cosmological constant in vacuum, \eqref{lambda vacuum eq}. To this end, we start with some useful expressions for the various geometric tensors of a general Kropina metric of Einstein type, as some of them simplify enormously. It is straightforward to obtain the simplified expressions using computer algebra software such as the Wolfram Mathematica package xAct \cite{MartinGarcia:xAct}, employing the identities in Appendix \ref{app:identities}. Since the computations and intermediate expressions are often extremely tedious, however, we omit them here. The following formulas hold whenever $L$ is an Einstein-Kropina metric written such that $\langle b,b\rangle=1$ (raising and lowering indices by $a$, as in Remark \ref{rem:raising_and_lowering}).

The spray coefficients \eqref{eq:spray_Krop} simplify to
\begin{align}
    G^i = \mathring\Gamma^i_{jk}y^jy^k - \frac{A}{\beta}s^i{}_jy^j.
\end{align}
The mean Cartan tensor attains the simple form
\begin{align}
    C_i = \frac{(n+1)\left(\beta y_i - A b_i\right)}{A\beta}.
\end{align}
%where we stress again that indices have been raised and lowered by $a$.
The canonical nonlinear connection is given by
\begin{align}
    N^k{}_i = \mathring\Gamma^k_{ij}y^j + \frac{A }{2\beta}s_i{}^k + \frac{A }{2\beta^2}b_is^k{}_{j}y^j - \frac{1}{\beta}y_is^k{}_jy^j.
\end{align}
The mean Landsberg tensor reduces to
\begin{align}\label{eq:mean_landsb_einst}
    \qquad P_i = - \frac{(n+1) s_{ij}y^j}{2\beta}.
\end{align}
Solely by virtue of the Einstein condition (namely $\text{Ric}(x,y)= \tfrac{\kappa}{4}L(x,y)$ as per Theorem \ref{theor:Krop_Einstein_cond}), the Ricci terms in the $\Lambda$-vacuum equation \eqref{lambda vacuum eq} can be written simply as
\begin{align}\label{eq:Ricci-part-EinsteinKrop}
    (n+2)\text{Ric} - g^{ij}\bar\partial_j\bar\partial_i\mathrm{Ric}\,L = (2-n)\frac{\kappa }{4}L.
\end{align}
(Recall that $\kappa$ is always the Einstein coefficient of $a$, i.e. $\mathring R_{ij} = \kappa\,a_{ij}$.) Finally, the Landsberg scalar \eqref{eq:Landsb_scal_0hom} of an Einstein-Kropina metric is given by
\begin{align}\label{eq:Landsb_scalar_Einstein_unit_KV}
    \mathcal P = -\frac{1+n}{4A}\left(2\kappa\beta^2 - \kappa A + 2n  s_{i}{}^{\ell}s_{j\ell}y^iy^j\right).
\end{align}

%\FV{As a (perhaps irrelevant) observation, for an Einstein-Kropina metric, solving the ($\Lambda$-)vacuum equation is equivalent to having isotropic Landsberg scalar (which then is $0$).} \\

Next we prove the following fundamental lemma.

\begin{lem}\label{lemma:main}
    Let $L=\left(\frac{A}{\beta}\right)^2$ be an Einstein-Kropina metric, expressed in such a way that $\langle b,b\rangle=1$. Suppose that the Landsberg scalar is isotropic, i.e. $\mathcal P(x,\y) = \mathcal P(x)$.
    Then:
    \begin{enumerate} [label=(\roman*)]
        \item $L$ is Ricci-flat (in other words, $a$ is Ricci-flat).
        \item $L$ is weakly weakly Landsberg, i.e., $\mathcal P =0$.
        \item $\nab_j b^k\nab_k b^i=0$.
    \end{enumerate}
\end{lem}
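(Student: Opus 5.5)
We start from the explicit Landsberg scalar \eqref{eq:Landsb_scalar_Einstein_unit_KV}, valid for any Einstein-Kropina metric with $\langle b,b\rangle=1$. Isotropy means $\mathcal P(x,y)=\mathcal P(x)$ on $\mathcal A$. Multiplying by $-\tfrac{4A}{n+1}$ gives, at each fixed $x$, the polynomial identity
\begin{align}
    2\kappa\,\beta^2-\kappa A+2n\,s_i{}^\ell s_{j\ell}\,y^iy^j = -\frac{4\mathcal P(x)}{n+1}\,A .
\end{align}
It holds on the open cone $\mathcal A_x$. Since both sides are quadratic forms in $y$, it extends to an identity of symmetric tensors:
\begin{align}
    2\kappa\,b_ib_j + 2n\,s_i{}^\ell s_{j\ell} = \mu\, a_{ij},\qquad \mu:=\kappa-\frac{4\mathcal P}{n+1}.
\end{align}

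Next we extract information by contracting this tensor identity with $b$ and with $a^{ij}$. Since $b$ is a unit Killing field, $\nab_i b_j=s_{ji}$. From $\langle b,b\rangle=1$ we get $b^j\nab_i b_j=0$, i.e. $s_{ji}b^j=0$, so $s_{i\ell}b^\ell=0$ and $s_i{}^\ell b^i=0$ (these are among the identities of Appendix \ref{app:identities}). Contracting with $b^ib^j$ gives $2\kappa=\mu$. Then the identity becomes
\begin{align}
    2n\,s_i{}^\ell s_{j\ell}=\kappa\,(2a_{ij}-... )
\end{align}
More precisely, it reads $2n\, s_i{}^\ell s_{j\ell} = 2\kappa\,(a_{ij}-b_ib_j)$. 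For the trace, we use the standard Killing identity $\nab^k\nab_k b_i=-\mathring R_{ij}b^j=-\kappa b_i$. Contracting with $b^i$ and using $\nab_k(b^i\nab^k b_i)=0$ (again from unit length) yields $\nab_k b_i\nab^k b^i=\kappa$, i.e. $s_{i\ell}s^{i\ell}=\kappa$. Tracing the displayed identity with $a^{ij}$ gives $2n\kappa=2\kappa(n-1)$, so $\kappa=0$. This proves (i), since by Theorem \ref{theor:Krop_Einstein_cond} we have $\mathrm{Ric}=\tfrac{\kappa}{4}L=0$.

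With $\kappa=0$ we get $\mu=0$, hence $\mathcal P=\tfrac{n+1}{4}(\kappa-\mu)=0$, which is (ii). The remaining identity is $s_i{}^\ell s_{j\ell}=0$. Since $\nab_j b^k=s^k{}_j$ is antisymmetric after lowering, we have $\nab_j b^k\nab_k b^i=s^k{}_j s^i{}_k=-s_{j}{}^{k}s^{i}{}_{k}$ up to index placement. This vanishes, which is (iii).

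The main obstacle is keeping the signs and index conventions of $s_{ij}$ consistent, both in the Bochner-type identity $\nab^k\nab_k b_i=-\kappa b_i$ under the paper's curvature convention \eqref{eq:affine_curvatures} and in the trace $s_{i\ell}s^{i\ell}=\kappa$. These are exactly what make the trace comparison force $\kappa=0$. In indefinite signature the extension from the open cone to a tensor identity is also needed, but it is harmless by polynomial identity.
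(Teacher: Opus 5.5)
Your proposal is correct and follows essentially the same route as the paper. Isotropy of \eqref{eq:Landsb_scalar_Einstein_unit_KV} gives the tensor identity $2\kappa\, b_ib_j+2n\,s_i{}^\ell s_{j\ell}=\mu\,a_{ij}$; contracting with $b^ib^j$ yields $\mu=2\kappa$, and tracing with $a^{ij}$ together with $s_{ij}s^{ij}=\kappa$ forces $\kappa=0$, after which $\mathcal P=0$ and $s_i{}^\ell s_{j\ell}=0$ follow. The only cosmetic difference is that you derive $s_{ij}s^{ij}=\kappa$ yourself from the contracted Killing identity $\nab^k\nab_k b_i=-\kappa\, b_i$ instead of quoting \eqref{eq:identity_unit_KV_4}, which is fine; the stray incomplete display before ``More precisely'' should simply be deleted.
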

\begin{proof}
     The Landsberg scalar is given by \eqref{eq:Landsb_scalar_Einstein_unit_KV}, 
%i.e. \begin{align}\label{eq:Landsb_scalar_Einstein_unit_KV2}
%     \mathcal P = -\frac{1+n}{4A}\left(2\kappa\beta^2 - \kappa A + 2n  s_{i}{}^{\ell}s_{j\ell}y^iy^j\right).
% \end{align}
    and by assumption, we have $\mathcal P = \rho$ for some function $\rho:M\to \R$. This immediately implies that there is another $\tilde\rho:M\to \R$ such that
\begin{align}
    2\kappa\beta^2 + 2ns_{i}{}^{\ell}s_{j\ell}y^iy^j = \tilde\rho A.
\end{align}
Differentiating with respect to $y^i$ and $y^j$ yields
\begin{align}\label{eq:const_lands_scalar_cond2}
    2\kappa b_i b_j + 2n  s_{i}{}^{\ell}s_{j\ell} = \tilde\rho\,a_{ij}.
\end{align}
Now we may contract \eqref{eq:const_lands_scalar_cond2} with $b^ib^j$ to obtain
\begin{align}
    2\kappa\,\langle b,b\rangle^2 + 2n  s_{i}{}^{\ell}s_{j\ell}b^ib^j = \tilde\rho\,\langle b,b\rangle.
\end{align}
From here on, we use Appendix \ref{app:identities}. Since $b$ is Killing and $\langle b,b\rangle=1$ by assumption, $b^js_{j\ell}=0$ and the term $2n  s_{i}{}^{\ell}s_{j\ell}b^ib^j$ vanishes identically. It follows that 
\begin{equation}
    2\kappa  = \tilde\rho.
    \label{eq:kappa_and_tilde_rho}
\end{equation}
Next, we contract \eqref{eq:const_lands_scalar_cond2} with $a^{ij}$. This yields (again using that $\langle b,b\rangle=1$)
\begin{align}
    2\kappa + 2n  s^{j\ell}s_{j\ell} = n \tilde\rho = 2n\kappa.
\end{align}
Now we use the identity $s^{j\ell}s_{j\ell} = \kappa \langle b,b\rangle = \kappa$ (following from \eqref{eq:identity_unit_KV_4} under the Einstein condition for $a$). This yields
\begin{align}
    2\kappa + 2n  \kappa = n \tilde\rho = 2n\kappa,
\end{align}
or in other words, $\kappa = 0$, which means that both $a$ and $L$ are Ricci-flat (Theorem \ref{theor:Krop_Einstein_cond}).

Additionally, \eqref{eq:kappa_and_tilde_rho} asserts that $\tilde \rho = 0$. Substituting %$\kappa = \rho = 0$ 
$\kappa$ and $\tilde\rho$ into \eqref{eq:const_lands_scalar_cond2}, we obtain $s_{i}{}^{\ell}s_{j\ell} = 0$, or equivalently, $\mathring\nabla_j b^\ell\mathring\nabla_\ell b_i  = 0$. Putting everything together in \eqref{eq:Landsb_scalar_Einstein_unit_KV}, it follows that $\mathcal P = 0$, %i.e., $L$ is weakly weakly Landsberg. That concludes the proof of the lemma.
concluding the proof.
\end{proof}
%
% \FV{For me, $\mathcal{P}$ denotes the \emph{0-homogeneous} mean Landsberg terms in the field equation and $\nab$ denotes the Levi-Civita connection of $a$. \\
% If we don't feel very comfortable thinking about cosmological \emph{functions} and such, we may state this with $\rho$ being a constant. But I think the current formulation might have a couple advantages. First, it makes clear that the only thing required is that $\mathcal{P}(x,\y)=\mathcal{P}(x)$. Second, it might even apply to the $2D$ case, where Einstein ``constant'' $\lambda$ is really a function $\lambda(x)$ (worth checking?).} \\

% \begin{cor}
%     For all Einstein Kropina metrics (of all signatures) that solve Pfeifer and Wohlfarth's vacuum equation, 
%     \[
%     \mathrm{Ric}=0.
%     \]
% \end{cor}
%
\noindent The following theorem characterizes Einstein-Kropina solutions to the $\Lambda$-vacuum equation \eqref{lambda vacuum eq}, in terms of the building blocks $a$ and $b$. Surprisingly, it shows that the cosmological constant must necessarily be $0$. 

\begin{theor}\label{general vacuum solutions}
    Assume that $n\left(=\dim M\right)>2$. A Kropina metric $L=\left(\frac{A}{\beta}\right)^2$ of arbitrary signature with (w.l.o.g.) $\langle b,b\rangle=1$ is a solution to \eqref{lambda vacuum eq} of Einstein type if and only if the following four conditions are satisfied:
    \begin{itemize}
        \item $a$ is Ricci-flat,
        \item $b$ is Killing  with respect to $a$,
        \item $\nab_j b^k\nab_k b^i=0$,
        \item $\Lambda = 0$.
    \end{itemize}
    In that case, $L$ is Ricci-flat and weakly weakly Landsberg. % i.e. $\mathcal P = 0$.
\end{theor}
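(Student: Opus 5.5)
We prove the two directions separately, both resting on Theorem \ref{theor:Krop_Einstein_cond}, the formulas \eqref{eq:Ricci-part-EinsteinKrop} and \eqref{eq:Landsb_scalar_Einstein_unit_KV}, and Lemma \ref{lemma:main}.

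\emph{Necessity.} Let $L$ be an Einstein solution of \eqref{lambda vacuum eq} with $\langle b,b\rangle=1$. Theorem \ref{theor:Krop_Einstein_cond} gives $\mathring R_{ij}=\kappa a_{ij}$ and shows that $b$ is Killing. Substituting \eqref{eq:Ricci-part-EinsteinKrop} into \eqref{lambda vacuum eq} yields
\begin{align}
    \left((2-n)\frac{\kappa}{4}-2\mathcal P+2\Lambda\right)L=0 .
\end{align}
By Corollary \ref{cor:det-krop}, $A\neq 0$ on $\mathcal A$, so $L\neq 0$ there. Dividing by $L$ gives
\begin{align}
    \mathcal P=\Lambda+\frac{(2-n)\kappa}{8},
\end{align}
which depends only on $x$ (this works equally if $\Lambda$ is a function). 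Hence the Landsberg scalar is isotropic. Lemma \ref{lemma:main} then gives $\kappa=0$ (so $a$ is Ricci-flat), $\mathcal P=0$, and $\nab_j b^k\nab_k b^i=0$. Putting these back into the displayed relation forces $\Lambda=0$. Finally, $\mathrm{Ric}=\tfrac{\kappa}{4}L=0$, so $L$ is Ricci-flat and weakly weakly Landsberg.

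\emph{Sufficiency.} Assume the four conditions hold. Since $a$ is Ricci-flat (hence Einstein with $\kappa=0$) and $b$ is a unit Killing field, Theorem \ref{theor:Krop_Einstein_cond} shows that $L$ is Einstein with $\mathrm{Ric}=0$. The Ricci part \eqref{eq:Ricci-part-EinsteinKrop} therefore vanishes. In \eqref{eq:Landsb_scalar_Einstein_unit_KV} the terms involving $\kappa$ vanish, leaving the term $s_i{}^\ell s_{j\ell}y^iy^j$. Since $b$ is Killing, $s_{ij}=\nab_j b_i$. Using the antisymmetry of $s$,
\begin{align}
    s_i{}^\ell s_{j\ell}=-s_i{}^\ell s_{\ell j}=-\nab^\ell b_i\,\nab_j b_\ell .
\end{align}
Lowering and raising indices with $a$ turns this into $\pm\nab_j b^\ell\nab_\ell b_i$, which vanishes by the third condition. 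Thus $\mathcal P=0$, and with $\Lambda=0$ every term of \eqref{lambda vacuum eq} vanishes.

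\emph{Main obstacle.} The substantive work is already done in Lemma \ref{lemma:main} and in the formula \eqref{eq:Landsb_scalar_Einstein_unit_KV}. What remains delicate is the index bookkeeping: we must check that $s_i{}^\ell s_{j\ell}=0$ is exactly equivalent to $\nab_j b^k\nab_k b^i=0$, which relies on the Killing antisymmetry. We also need $L\neq0$ on $\mathcal A$ to justify the division, which holds by nondegeneracy through Corollary \ref{cor:det-krop}. The assumption $n>2$ enters through the validity of the field equation \eqref{lambda vacuum eq}; it is not used explicitly in the cancellation.
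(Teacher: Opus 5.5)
Your proposal is correct and follows the paper's proof essentially step for step. Necessity goes through Theorem \ref{theor:Krop_Einstein_cond}, the Ricci-part identity \eqref{eq:Ricci-part-EinsteinKrop} and Lemma \ref{lemma:main}, with isotropy of $\mathcal P$ read off from the field equation; sufficiency comes from plugging $\kappa=0$ and $s_i{}^\ell s_{j\ell}=0$ into \eqref{eq:Landsb_scalar_Einstein_unit_KV}. Your bookkeeping is in fact slightly cleaner than the paper's: you keep the factor $2$ in the term $-2\mathcal P L$ and justify dividing by $L$ via Corollary \ref{cor:det-krop}.
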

\begin{proof}
    For the first implication, assume that $L$ is a $\Lambda$-vacuum solution of Einstein type. First, since $L$ is Einstein and $\langle b,b\rangle=1$, it follows by Theorem \ref{theor:Krop_Einstein_cond} that $b$ is Killing. Second, as noted in \eqref{eq:Ricci-part-EinsteinKrop}, for any  Einstein-Finsler metric,  the Ricci terms of the field equation add up to a contribution of the form $\rho(x) L(x,y)$ for some $\rho\in C^\infty(M)$. Hence, the full %field equation reads 
    \eqref{lambda vacuum eq} reads
    \begin{align}
        \rho L - L\mathcal P +2\Lambda\,L= 0,
    \end{align}
     showing after division by $L$ %(and continuous extension to the set where $L=0$) 
     that $\mathcal P(x,y)= \rho(x)+2\Lambda$. In this situation, Lemma \ref{lemma:main} applies and we conclude that $a$ and $L$ are Ricci-flat (from where $\rho = 0$), $\nab_j b^k\nab_k b^i=0$, and $0=\mathcal P = 2\Lambda$. \\ %This proves the first implication.\\
    For the second implication, suppose that the four conditions stated hold. First of all, since $b$ is Killing and $a$ is Ricci-flat, it follows %by Theorem \ref{theor:Krop_Einstein_cond} 
    that $L$ is Ricci-flat. Hence, the Ricci part of the field equation vanishes identically and the full vacuum eq. becomes equivalent to $\mathcal P = 0$. However, $\mathcal P$ is given in this situation by \eqref{eq:Landsb_scalar_Einstein_unit_KV} with $\kappa = 0$ and $s_{i}{}^{\ell}s_{j\ell} = 0$ (the latter being simply our condition $\nab_j b^k\nab_k b^i=0$). In this way,
    \begin{align}
        \mathcal P = \mathrm{Ric}=0 
    \end{align} 
    identically and $L$ is a  solution to \eqref{lambda vacuum eq}.
 \end{proof}

 \noindent In what follows, we will see that when $a$ is either Riemannian or Lorentzian, the condition $\nab_j b^k\nab_k b^i=0$ in %Theorem \ref{general vacuum solutions} 
 the above result can be reduced to $\nab_j b^i=0$. The argument we use does not generalize to other signatures, thus allowing in principle for the possibility %that in certain signatures there could exist 
 of the existence of pairs $(a,b)$ (with $a$ Ricci-flat and $b$ unit Killing)  satisfying $\nab_j b^k\nab_k b^i=0$ but not $\nab_i b^j=0$. 
 
 Interestingly, such pairs would produce Einstein-Kropina metrics that are weakly weakly Landsberg (by the above theorem) but not weakly Landsberg (by \eqref{eq:mean_landsb_einst}) and so, in particular, not Berwald. In the spirit of the famous unicorn problem \cite{Bao2007}, one could call such metrics  \textit{weak-weak unicorns}. It is an open question whether weak-weak unicorns exist at all. The results that follow below imply that there exist no weak-weak unicorns within the class of Einstein-Kropina metrics in Riemannian or Lorentzian signature (meaning the signature of $a$, not of $L$, which need not be the same).

\section{Classification of vacuum solutions in Riemannian and Lorentzian signatures} \label{section: classification vacuum solutions}

In Riemannian and Lorentzian signatures, the conditions in Theorem \ref{general vacuum solutions} turn out to be very restrictive. To better understand this, we will regard the $(1,1)$-tensor $\nab_jb^i$ as a field of endomorphisms $J$ of the tangent spaces: for $x\in M$,
\begin{align}
    J\colon T_xM\longrightarrow T_xM,\qquad J(X):= \left.X^j\nab_j b^i(x)\,\partial_i\right|_x.
\end{align}
The Killing condition for $b$ translates into the skew-adjointness of $J$: for $X,Y\in T_xM$,
\begin{align}
    a(J(X),Y)+a(X,J(Y))=0.
\end{align}
%\SH{do we need what follows below until the lemma? otherwise we can remove it.}
%from where
%a(J(X),X)=0,\qquad \mathrm{trace}(J)\left(=\mathring{\mathrm{div}}\,b\right)=0.
%\]
%So, $J$, and then $\nab_j b^i$, is determined by its $a$-antisymmetric part:
%\[
%a(J(X),Y)=\frac{1}{2}\left(a(J(X),Y)-a(X,J(Y)\right)=\frac{1}{2}\left(X^j\nab_j b_i(x)\,Y^i-X^j\nab_i b_j(x)\,Y^i\right)=\frac{1}{2}\left(\D \beta\right)_x(X,Y)
%\]
%(regarding $\beta=b_i\,\D x^i$ as a $1$-form on $M$ and following the conventions of \cite{lee} for the exterior differential).\\
%
Moreover, the third condition in Theorem \ref{general vacuum solutions} can be written as $J^2=0$. 

\begin{lem}\label{lem:J^2=0impliesJ=0}
    If $a$ is Riemannian or Lorentzian, then any Killing vector field $b$ that satisfies $\nab_j b^k\nab_k b^i=0$ is parallel, $\nab_k b^i=0$.
\end{lem}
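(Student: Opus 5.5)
Work pointwise: fix $x\in M$ and consider the endomorphism $J$ of $T_xM$ defined above. $J$ is $a$-skew-adjoint because $b$ is Killing, and $J^2=0$ by hypothesis. The claim is purely linear-algebraic: a skew-adjoint nilpotent endomorphism of order two on a scalar product space of Riemannian or Lorentzian signature must vanish. Applying this at every point gives $\nab_k b^i=0$.

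The key identity comes from combining skew-adjointness with $J^2=0$. For all $X,Y\in T_xM$,
\begin{align}
    a(JX,JY) = -a(X,J^2Y) = 0 .
\end{align}
Hence the image $W:=J(T_xM)$ is a \emph{totally isotropic} subspace of $(T_xM,a_x)$: every vector in it is null and any two of its vectors are orthogonal.

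\textbf{Riemannian case.} $a_x$ is positive definite, so the only isotropic subspace is $\{0\}$. Thus $W=0$, i.e.\ $J=0$. Equivalently, $a(JX,JX)=0$ forces $JX=0$.

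\textbf{Lorentzian case.} A totally isotropic subspace has dimension at most $\min(p,q)=1$. So either $J=0$ or $W=\mathrm{span}\{\ell\}$ for a single null vector $\ell\neq0$. In the second case $J$ has rank one, so $JX=\omega(X)\,\ell$ for some nonzero covector $\omega$. Skew-adjointness then gives
\begin{align}
    \omega(X)\,a(\ell,Y)+\omega(Y)\,a(\ell,X)=0 \qquad \text{for all } X,Y.
\end{align}
Lowering indices, this says $\omega\otimes\ell^\flat+\ell^\flat\otimes\omega=0$ as a symmetric bilinear form. Since $\ell^\flat\neq0$, this forces $\omega=c\,\ell^\flat$ with $2c=0$, so $\omega=0$, a contradiction. (More simply, a nonzero skew-adjoint endomorphism corresponds to a nonzero $2$-form, and every nonzero $2$-form has even rank, so rank one is impossible.) Hence $J=0$ in the Lorentzian case too.

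The main obstacle is the Lorentzian step, namely excluding the rank-one null case. It is handled by the bound that the Witt index of a Lorentzian form is $1$, together with the parity of the rank of the $2$-form $\nab_{[j}b_{i]}$. It is worth noting that this is exactly where the argument breaks down for signatures with $\min(p,q)\ge2$. There one can have a totally isotropic plane $W$ and a rank-two skew $J$ with $J^2=0$. This is consistent with the remark in the paper that the argument does not generalize to other signatures.
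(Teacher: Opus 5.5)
Your proof is correct and follows essentially the same route as the paper: skew-adjointness together with $J^2=0$ makes $\mathrm{im}(J)$ null, which gives $J=0$ immediately in the Riemannian case and forces the rank-one form $J(X)=\omega(X)\,\ell$ with $\ell$ null in the Lorentzian case. The only minor difference is how you exclude rank one. You use $\omega\otimes\ell^\flat+\ell^\flat\otimes\omega=0$, or the even rank of $2$-forms, whereas the paper sets $X=Y$ with $a(X,\ell)\neq 0$ and then uses a density argument. Both are equally valid.
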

\begin{proof}
    In the notation introduced above, the conditions of the lemma imply that $J$ is skew-adjoint and satisfies $J^2=0$, and we have to show that $J=0$.
    
    Since $J$ is skew-adjoint, $J^2=0$ implies that $a(J(X),J(X)) = -a(X,J^2(X)) = 0$ for any $X\in T_xM$. In other words, the image of $J$, denoted $\mathrm{im}(J)$, is null for $a$. In Riemannian signature, this already implies that $J = 0$; in the Lorentzian case, $\mathrm{im}(J)$ is a linear subspace contained in the null cone. This means that it must be at most $1$-dimensional and, hence, $J$ is given by $J(X) = 
    \phi(X)\,Z$ for some %reference 
    vector $Z\in T_x M\setminus\left\{0\right\}$ and linear form $\phi\in T^\ast_x M$. %We will assume w.l.o.g. that $Z\neq 0$, for otherwise the result follows immediately. 
    Now we use again the skew-adjointness of $J$. We have, for all $X,Y\in T_xM$, that $\phi(X) a(Z,Y) = -\phi(Y)a(X,Z)$. Choosing any $X$ such that $a(X,Z)\neq 0$ and setting $Y=X$, we obtain $\phi(X) = -\phi(X)$, hence $\phi(X) = 0$, so $\phi$ vanishes except at most in the set %given by $a(X,Z)=0$
    $a$-orthogonal to $Z$. However, since this set is a linear subspace %with a dimension strictly less then $\dim M$ (since $Z\neq 0$),
   of dimension $n-1=\dim T_xM-1$, its complement is dense and it  follows, by continuity, %of $J$ and hence $\phi$ 
    that $\phi=0$ and $J=0$ identically.
\end{proof}

% \begin{rem}
%     It is well known that the vanishing of $\nab b$ induces a local splitting of $a$ as the orthogonal sum of a $1$-dimensional metric and a $\left(n-1\right)$-dimensional one. A concrete construction of it goes as follows. Since $\beta$ is closed ($\D\beta=0$), there exists a function $x^1$ around each point $x_0\in M$ such that $\D x^1=\beta$. Calling $x^1_0:=x^1(x_0)$ and $\phi_t$ the flow of the vector field $b$, the map
% \[
% \Phi\colon \left\{(t,p)\in \R\times M\colon x^1(p)=x^1_0,\;\phi_t(p) \hbox{ \rm exists}\right\}\longrightarrow M,\qquad\Phi(t,p)=\phi_t(p)
% \]
% can be inverted around $x_0=\Phi(0,x_0)$. This allows us to identify some neighborhood of $x_0$ with $I\times\Sigma$, where $I$ is a manifold contained in the hypersurface $\left\{x^1=x^1_0\right\}$ of $M$, the product being equipped with the metric $\Phi^\ast a$. Now, a couple straightforward identities at any $(t,p)\in I\times\Sigma$, $\partial_t$ being the canonical vector field on $I$ and $Z\in T_p\Sigma$:
% \[
% \D\Phi_{(t,p)}(\partial_t)=b_{\phi_t(p)}=\D(\phi_t)_p(b_p),\qquad \D\Phi_{(t,p)}(Z)=\D(\phi_t)_p(Z).
% \]
% \end{rem}

\noindent It is well known that the vanishing of $\mathring\nabla b$ with $b$ nonzero induces a local splitting of $a$ as a product of a $1$-dimensional metric and a $\left(n-1\right)$-dimensional one. That is to say, we have the adapted orthogonal decomposition
\begin{align}
    a=(\D x^1)^2+\sum_{k,\ell=2}^n \bar{a}_{k\ell}(x^2,\ldots,x^n)\,\D x^k\otimes\D x^\ell,\qquad b=\partial_{x^1},
\end{align}
where we are also using the fact that $\langle b,b\rangle = 1$; a proof  can be found, e.g., in \cite[Proposition A.2.2]{Heefer:2024Thesis}. Conversely, for a metric that is locally of this form, $b = \partial_{x^1}$ is parallel, as is straightforward to check.

\begin{theor} \label{riem and lor classification}
   Assume that $n>2$. Let $L=\left(\frac{A}{\beta}\right)^2$ be an Einstein-Kropina metric %with $\dim M>2$ 
    expressed in such a way that $\langle b,b\rangle=1$ and suppose that $a$ is either Riemannian or Lorentzian. Then, the following are equivalent:
    \begin{enumerate} [label=(\roman*)]
        \item \label{riem and lor classification 1} $L$ solves the $\Lambda$-vacuum equation \eqref{lambda vacuum eq}.

        \item \label{riem and lor classification 2} $L$ is weakly weakly Landsberg.

        \item \label{riem and lor classification 3} $L$ is Berwald.

        \item \label{riem and lor classification 4} $\nab_j b^i=0$.

        \item \label{riem and lor classification 5}  %Locally, $a$ is a product metric on a line times a $\left(n-1\right)$-dim. manifold and $b$ is the corresponding parallel field:
        %\[
        %a=\D x^1\otimes\D x^1+\sum_{k,l=2}^n \hat{a}_{kl}(x^2,\ldots,x^n)\,\D x^k\otimes \D x^l,\quad b=\partial_{x^1}.
        %\]
        Locally, $a$ has the form
        \begin{align}
            a=(\D x^1)^2+ \sum_{k,\ell=2}^n \bar{a}_{k\ell}(x^2,\ldots,x^n)\,\D x^k\otimes\D x^\ell
        \end{align}
        with $b=\partial_{x^1}$ and $\bar{a}_{k\ell}$ the components of an $(n-1)$-dimensional pseudo-Riemannian metric.
    \end{enumerate}
    In this case, $L$, $a$ and $\bar{a}$ are all Ricci-flat, 
    \begin{align}
        \Lambda = 0,
    \end{align}
    and the signature of $\bar{a}$ is:
    \begin{itemize}
        \item $(+ \ldots +)$ if $a$ is Riemannian,
        \item $(-\ldots -)$ if $a$ is Lorentzian and $b$ is timelike,
        \item $(- +\ldots +)$ is $a$ is Lorentzian and $b$ is spacelike.
    \end{itemize}
\end{theor}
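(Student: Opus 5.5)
I will prove the cycle (i)$\Rightarrow$(iv)$\Rightarrow$(iii)$\Rightarrow$(ii)$\Rightarrow$(i), and then (iv)$\Leftrightarrow$(v). The final claims on Ricci-flatness and signature come at the end. Throughout, $L$ is Einstein, so by Theorem \ref{theor:Krop_Einstein_cond} $a$ is Einstein with constant $\kappa$ and $b$ is a unit Killing field.

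\emph{(i)$\Rightarrow$(iv).} Theorem \ref{general vacuum solutions} gives that $a$ is Ricci-flat, that $\nab_j b^k\nab_k b^i=0$, and that $\Lambda=0$. Since $a$ is Riemannian or Lorentzian, Lemma \ref{lem:J^2=0impliesJ=0} then yields $\nab_j b^i=0$.

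\emph{(iv)$\Rightarrow$(iii).} If $\nab b=0$, then $r_{ij}=s_{ij}=0$. The spray \eqref{eq:spray_Krop} therefore reduces to $G^i=\mathring\Gamma^i_{jk}y^jy^k$, so $N^i_j=\mathring\Gamma^i_{jk}y^k$ is linear in $y$ and $L$ is Berwald.

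\emph{(iii)$\Rightarrow$(ii).} Berwald implies Landsberg, hence $P_{ijk}=0$, $P_i=0$, and $\mathcal P=0$ by \eqref{eq:Landsb_scal_0hom}.

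\emph{(ii)$\Rightarrow$(i).} If $\mathcal P=0$, the Landsberg scalar is isotropic. Lemma \ref{lemma:main} then gives $\kappa=0$ and $\nab_j b^k\nab_k b^i=0$, and Theorem \ref{general vacuum solutions} (with $\Lambda=0$) shows that $L$ solves \eqref{lambda vacuum eq}. One caveat: in (i) the constant $\Lambda$ is fixed beforehand, so the equivalence should be read with $\Lambda$ forced to vanish, as Theorem \ref{general vacuum solutions} already shows.

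\emph{(iv)$\Leftrightarrow$(v).} This is the standard local splitting of a metric with a parallel unit vector field quoted just before the theorem, together with its easy converse.

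For the remaining claims, $L$ and $a$ are Ricci-flat by Theorem \ref{general vacuum solutions}, and $\Lambda=0$. The Ricci tensor of a product of the line $(\D x^1)^2$ with $\bar a$ is $0\oplus\overline{\mathrm{Ric}}$, so $\bar a$ is Ricci-flat as well.

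The signature of $\bar a$ follows from $a=(\D x^1)^2\oplus\bar a$, because $\langle b,b\rangle=1$ makes the $x^1$ direction positive. If $a$ is Riemannian, $\bar a$ is positive definite. If $a$ is Lorentzian and $b$ is timelike, then normalizing $\langle b,b\rangle=1$ requires the mostly minus convention $(+-\cdots-)$ for $a$; the $b$-direction carries the single $+$, so $\bar a$ has signature $(-\cdots-)$. If $b$ is spacelike, the mostly plus convention applies and $\bar a$ has signature $(-+\cdots+)$.

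The only point needing care is keeping track of the sign conventions forced by $\langle b,b\rangle=1$ in the Lorentzian case. Everything else follows from the earlier results.
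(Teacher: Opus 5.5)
Your proposal is correct and follows essentially the same route as the paper. You prove the same cycle (i)$\Rightarrow$(iv)$\Rightarrow$(iii)$\Rightarrow$(ii)$\Rightarrow$(i) using Theorem \ref{general vacuum solutions}, Lemma \ref{lem:J^2=0impliesJ=0} and Lemma \ref{lemma:main}, plus the standard splitting for (iv)$\Leftrightarrow$(v). The differences are minor: you check (iv)$\Rightarrow$(iii) directly from \eqref{eq:spray_Krop} with $r_{ij}=s_{ij}=0$ rather than citing the $(\alpha,\beta)$-metric result, and your caveat that (ii)$\Rightarrow$(i) holds only with $\Lambda=0$ is the same point the paper makes briefly with ``(i.e., $\Lambda=0$)''.
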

\begin{proof}
    The equivalence \ref{riem and lor classification 4}$\Longleftrightarrow$\ref{riem and lor classification 5} is what we have remarked above. Regarding the others, it thus suffices to prove that \ref{riem and lor classification 4}$\implies$\ref{riem and lor classification 3}$\implies$\ref{riem and lor classification 2}$\implies$\ref{riem and lor classification 1}$\implies$\ref{riem and lor classification 4}.  Once this is done, that $L$ and $a$ are Ricci-flat will follow from Lemma \ref{lemma:main}, and Ricci-flatness of $\bar a$ follows from that of $a$ by the well-known decomposition of the Ricci tensor for product metrics. The statements about the signature of $\bar a$ are immediate from the signature of $a$ and the corresponding meaning of the sign of $a(b,b)=\langle b,b\rangle$. 
    
    We now prove the implications \ref{riem and lor classification 4}$\implies$\ref{riem and lor classification 3}$\implies$\ref{riem and lor classification 2}$\implies$\ref{riem and lor classification 1}$\implies$\ref{riem and lor classification 4}, some of which are trivial. Clearly, \ref{riem and lor classification 4}$\implies$\ref{riem and lor classification 3} since it is known, e.g. \cite[Corollary 5.2.2]{Heefer:2024Thesis}, that any $(\alpha,\beta)$-metric with a parallel $b$ has as its canonical connection the Levi-Civita one of $a$ (in particular, it is Berwald). Of course, \ref{riem and lor classification 3}$\implies$\ref{riem and lor classification 2}: any Berwald metric  is Landsberg, hence weakly weakly Landsberg. For \ref{riem and lor classification 2}$\implies$\ref{riem and lor classification 1}, one applies Lemma \ref{lemma:main} with  $\mathcal{P}(x,y)=0$, obtaining that $L$ is Ricci-flat and hence a vacuum solution to Pfeifer and Wohlfarth's equation (i.e., $\Lambda=0$). Finally, for \ref{riem and lor classification 1}$\implies$\ref{riem and lor classification 4}, it follows by Theorem \ref{general vacuum solutions} that $\nab_j b^k\nab_k b^i=0$, and by Lemma \ref{lem:J^2=0impliesJ=0}, that $\nab_j b^i=0$.
\end{proof}

\noindent As an important consequence, if $n=3$ or $4$, all $\Lambda$-vacuum solutions are (locally) trivial. 

\begin{cor}\label{cor:solutions_n=3,4}
    %FIRST FORMULATION. Suppose the dimension of $M$ is $3$ or $4$ and the Einstein Kropina metric $L$ be with Lorentzian (resp., Riemannian) $a$ and $\langle b,b\rangle=1$. Then, $L$ solves \br reference vacuum eq. \er if and only if $(M,a)$ is locally isometric to Minkowski spacetime (resp., Euclidean space) and, under such identification, $b$ is a generator of unit translations.

    %SECOND FORMULATION If, moreover, the dimension of $M$ is $3$ or $4$ then $L$ is a solution if and only if $(M,a)$ is locally isometric to Minkowski spacetime or Euclidean space and, under such identification, $b$ is a generator of unit translations.

    %MY LAST PROPOSAL:
    When $\dim M$ is $3$ or $4$, the Einstein-Kropina metric $L$, with $a$ Riemannian or Lorentzian and $\langle b,b\rangle=1$, solves \eqref{lambda vacuum eq} if and only if $(M,a)$ is locally isometric to Euclidean space or Minkowski spacetime. Under this identification, $b$ must be a constant (traslational) vector field.
\end{cor}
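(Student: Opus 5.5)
The plan is to reduce everything to Theorem \ref{riem and lor classification} and then exploit the low dimension of the factor $\bar a$.

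For the forward direction, suppose $L$ solves \eqref{lambda vacuum eq}. By Theorem \ref{riem and lor classification}, locally
\[
a=(\D x^1)^2+\bar a_{k\ell}(x^2,\ldots,x^n)\,\D x^k\otimes\D x^\ell, \qquad b=\partial_{x^1},
\]
where $\bar a$ is Ricci-flat and of dimension $n-1\in\{2,3\}$. The key fact is that in dimensions $2$ and $3$ the Riemann tensor is determined algebraically by the Ricci tensor; in dimension $2$ one uses $\mathring R_{ijkl}=\tfrac{R}{2}(a_{ik}a_{jl}-a_{il}a_{jk})$, and in dimension $3$ the Weyl tensor vanishes identically. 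This holds in every signature, so $\bar a$ is flat. The Levi-Civita connection of a product is the product connection, hence $a$ is flat as well.

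A flat metric of signature $(+\ldots+)$ or Lorentzian signature is locally isometric to Euclidean space or Minkowski spacetime. In coordinates where $a_{ij}$ is constant, the Christoffel symbols vanish. The condition $\nab b=0$ from Theorem \ref{riem and lor classification}\ref{riem and lor classification 4} then says that $b$ has constant components. Together with $\langle b,b\rangle=1$, this makes $b$ a constant unit (translational) field. Up to the sign convention $a\mapsto -a$, it is spacelike or timelike according to the signature cases listed in Theorem \ref{riem and lor classification}.

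For the converse, suppose $a$ is locally flat and $b$ is constant in flat coordinates. Then $\nab b=0$, so \ref{riem and lor classification 4} holds. We also need $L$ to be Einstein-Kropina, which Theorem \ref{riem and lor classification} assumes. This follows from Theorem \ref{theor:Krop_Einstein_cond}, since $a$ is Ricci-flat and a parallel field is Killing. Theorem \ref{riem and lor classification} then gives that $L$ solves the equation, with $\Lambda=0$.

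The only step needing some care is the flatness of $\bar a$. One must check that the dimension-$2$/$3$ curvature argument is signature-independent, which it is because it is purely algebraic in the metric. One must also note that in dimension $2$ Ricci-flatness forces the scalar curvature to vanish, and hence the full curvature tensor.
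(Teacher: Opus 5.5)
Your proposal is correct and is essentially the paper's argument. Theorem \ref{riem and lor classification} gives the local splitting with $\bar a$ Ricci-flat of dimension $2$ or $3$, hence flat, so $a$ is flat. Your further remarks ($b$ is constant because it is parallel, and the converse via Theorem \ref{theor:Krop_Einstein_cond} together with (iv)$\Rightarrow$(i)) spell out what the paper's one-line proof leaves implicit.

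One small point concerns the reading of the ``if'' direction. If it assumes only that $a$ is flat, with $b$ merely a unit Killing field as the Einstein condition provides, rather than also that $b$ is constant, you need one more line. Contracting $\nab_i\nab_j b_k=\mathring R^\ell{}_{ijk}b_\ell=0$ with $b^j$ and using $b^j\nab_j b_k=0$ gives $\nab_j b^k\nab_k b^i=0$. Lemma \ref{lem:J^2=0impliesJ=0} then makes $b$ parallel, and your argument goes through.
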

\begin{proof}
    This follows immediately from Theorem \ref{riem and lor classification}, because in this case, $\bar a$ is a Ricci-flat metric in dimension $2$ or $3$, which implies it is flat and so is $a$.
\end{proof}

\noindent Lastly, since all Einstein-Kropina solutions of the $\Lambda$-vacuum equation are Ricci-flat, by Theorem \ref{general vacuum solutions}, it is natural to ask whether Ricci-flatness is, in fact, a sufficient condition to be a vacuum solution. The following result gives a partial answer to this question.

\begin{theor}\label{theor:ricci-flat_sufficient_cond}
     Let $L=\left(\frac{A}{\beta}\right)^2$ be a Kropina metric and suppose that either $a$ is Riemannian or $a$ is Lorentzian with timelike $b$. If $\mathrm{Ric}=0$, then $L$ solves Pfeifer and Wohlfarth's vacuum eq. \eqref{vacuum eq} (hence it is described by Theorem \ref{general vacuum solutions} and Theorem \ref{riem and lor classification}).
\end{theor}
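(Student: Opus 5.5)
Since $\mathrm{Ric}=0$ means $\mathrm{Ric}=0\cdot L$, the metric $L$ is Einstein (with $\lambda=0$). After rescaling so that $\langle b,b\rangle=1$, Theorem \ref{theor:Krop_Einstein_cond} gives two facts: $b$ is Killing with respect to $a$, and $a$ is Einstein with $\mathrm{Ric}=\tfrac{\kappa}{4}L$. Hence $\kappa=0$ and $a$ is Ricci-flat. For the rescaling we use $\widetilde a=a/a(b,b)$, as in the Remark after Corollary \ref{cor:det-krop}. When $a$ is Riemannian, $\widetilde a$ is still Riemannian. When $a$ is Lorentzian with timelike $b$, we have $a(b,b)<0$, so $\widetilde a=-a/|a(b,b)|$ is Lorentzian with the opposite convention and $b$ is $\widetilde a$-unit. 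So in both cases we obtain a Lorentzian or Riemannian $\widetilde a$ with $\langle b,b\rangle=1$, with $\bar a$ (in Theorem \ref{riem and lor classification}) expected to be negative definite in the timelike case.

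By Theorem \ref{general vacuum solutions}, it remains to show $\nab_j b^k\nab_k b^i=0$. I plan to prove the stronger statement $\nab_j b^i=0$. The key tool is the identity $s^{j\ell}s_{j\ell}=\kappa\langle b,b\rangle$ used in Lemma \ref{lemma:main}, which comes from \eqref{eq:identity_unit_KV_4} under the Einstein condition. Since $b$ is Killing, $\nab_j b_i=s_{ij}$ is skew. With $\kappa=0$ the identity gives
\[
s^{j\ell}s_{j\ell}=\nab^j b^\ell\,\nab_j b_\ell=0 .
\]
This is the step where the signature hypothesis enters, and it is the main point to check carefully.

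In the Riemannian case, $s^{j\ell}s_{j\ell}$ is the squared norm of the $2$-form $s$, so it vanishes only if $s=0$. In the Lorentzian case with timelike $b$, I would use $b^j s_{j\ell}=0$, which follows from $b$ being a unit Killing field (Appendix \ref{app:identities}). This means $s$ is a $2$-form on the orthogonal complement $b^\perp$. Because $b$ is timelike, $b^\perp$ is definite, so the full contraction is (up to sign) a sum of squares, and its vanishing forces $s=0$. For spacelike $b$ the complement $b^\perp$ would be Lorentzian and this argument fails, which explains why the theorem excludes that case.

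Therefore $\nab b=0$, so in particular $\nab_j b^k\nab_k b^i=0$. Theorem \ref{general vacuum solutions} (with $\Lambda=0$) then shows that $L$ solves \eqref{vacuum eq}, and Theorem \ref{riem and lor classification} applies. The only remaining things to verify are that the identity \eqref{eq:identity_unit_KV_4} indeed reads $s^{j\ell}s_{j\ell}=\mathring R_{ij}b^ib^j$ for unit Killing fields, and that the signs work out after the flip $a\mapsto -a$.
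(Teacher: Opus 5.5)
Your proof is correct and follows essentially the same route as the paper. You rescale to $\langle b,b\rangle=1$, use Theorem~\ref{theor:Krop_Einstein_cond} to get $b$ Killing and $a$ Ricci-flat, and then obtain $\nab b=0$ from $\nab^j b^i\nab_j b_i=0$ together with the definiteness of $b^\perp$. The differences are cosmetic: the paper re-derives $\nab^j b^i\nab_j b_i=0$ from $\nab^i\nab_i(b^jb_j)=0$ instead of quoting \eqref{eq:identity_unit_KV_4}, and it concludes via Theorem~\ref{riem and lor classification} rather than Theorem~\ref{general vacuum solutions}.
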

\begin{proof}
    We may assume w.l.o.g. that $\langle b,b\rangle=b_i b^i = 1$ by rescaling $a$, as explained above. Since $L$ is Ricci-flat (so Einstein), Theorem \ref{theor:Krop_Einstein_cond} guarantees that $b$ is Killing and $a$ is Ricci-flat. The lemma below shows that this implies that $b$ is parallel. Then, Theorem \ref{riem and lor classification} implies that $L$ is a solution, as desired. 
\end{proof}

\begin{lem}
    Let $a$ be a Ricci-flat pseudo-Riemannian metric and $b$ be a Killing field of constant (pseudo-)norm. If $a$ is Riemannian or $a$ is Lorentzian with timelike $b$, then $b$ is parallel with respect to $a$. 
\end{lem}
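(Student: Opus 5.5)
We use the standard Bochner-type identity for Killing fields. Since $b$ is Killing, $\nab_i\nab_j b_k = \mathring R_{\ell i j k} b^\ell$ (up to the sign fixed by \eqref{eq:affine_curvatures}), and contracting gives $\nab^j\nab_j b_k = -\mathring R_{k\ell}b^\ell$. With $a$ Ricci-flat this reduces to $\nab^j\nab_j b_k=0$.

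The key step is to apply the Laplacian to the constant function $\langle b,b\rangle=b^kb_k$:
\begin{align}
0=\tfrac12\nab^j\nab_j (b^kb_k) = \nab^j b^k\,\nab_j b_k + b^k\nab^j\nab_j b_k = \nab^j b^k\,\nab_j b_k .
\end{align}
This is presumably the identity \eqref{eq:identity_unit_KV_4} of Appendix \ref{app:identities}, namely $s^{j\ell}s_{j\ell}=\kappa\langle b,b\rangle$ with $\kappa=0$, using $\nab_j b_k = s_{kj}$ for Killing $b$. So the skew endomorphism $J=\nab b$ satisfies $\mathrm{tr}(J^{\mathsf T}J)=\sum a(J e_i,J e_i)\,\epsilon_i=0$ pointwise.

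\emph{Riemannian case.} Here $|\nab b|^2_a=0$ is a sum of squares, so $\nab b=0$ immediately.

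\emph{Lorentzian case with $b$ timelike.} This is the main obstacle, since the full contraction is indefinite. The fix is to exploit $b$. Differentiating $b^kb_k=\text{const}$ gives $b^k\nab_j b_k=0$, and the Killing property gives $b^j\nab_j b_k = -b^j\nab_k b_j=0$. In the skew-adjoint language, $a(Jb,\cdot)=0$ and $a(J\cdot,b)=0$. Hence $Jb=0$ and $J$ maps $b^\perp$ into $b^\perp$. Choose an orthonormal frame $e_0=b/\sqrt{|\langle b,b\rangle|},e_1,\dots,e_{n-1}$. Then $\nab_j b_k$ has nonzero components only among $e_1,\dots,e_{n-1}$, which span the spacelike (definite) orthogonal complement $b^\perp$. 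The contraction therefore becomes $\sum_{\alpha,\beta\ge1}(\nab_\alpha b_\beta)^2=0$, forcing $\nab b=0$.

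Note that $b$ must be nonzero here. The degenerate case $\langle b,b\rangle=0$ is excluded by the timelike assumption in the Lorentzian case. In the Riemannian case the argument needs no such assumption.
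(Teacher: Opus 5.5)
Your proof is correct and follows essentially the same route as the paper. You contract the Killing identity to get $\nab^j\nab_j b_k=0$, apply the Laplacian to the constant $b^kb_k$ to obtain $\nab^j b^k\nab_j b_k=0$, and then use $\nab_b b=0$ and $\nab_X b\perp b$ in an orthonormal frame adapted to $b$ to turn the contraction into a definite sum; the only cosmetic difference is that you handle the Riemannian case directly as a sum of squares, whereas the paper treats both signatures uniformly via the frame with $E_1=b$.
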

\begin{proof}
    Put, w.l.o.g., $\langle b,b\rangle=1$. A contraction of the identity \eqref{eq:2nd_cov_der_of_Killing_field} yields $\mathring\nabla^i\mathring\nabla_ib_k = -\mathring R^\ell{}_{k}b_\ell = 0$. Using this,
    \begin{align}
        0 = \mathring\nabla^i\mathring\nabla_i \left(b^jb_j\right) = 2\left( b^j\mathring\nabla^i\mathring\nabla_i b_j  + \mathring\nabla^i b^j\mathring\nabla_i b_j\right) = 2\mathring\nabla^j b^i\mathring\nabla_j b_i, 
    \end{align}
    where we have used the metric-compatibility of the Levi-Civita connection. Now introduce a local orthonormal frame $\{E_c\}_{c=1,\dots, n}$ with $E_1 = b$, so that $a_{ij}E^i_cE^j_d = \varepsilon_c\delta_{cd}$ with $\varepsilon_1 = 1$ and $\varepsilon_c = \pm 1$ for $c=2,\dots,n$, the sign being positive (resp., negative) in Riemannian (resp., Lorentzian) signature. Evaluating the above expression in this local orthonormal frame, we find
    \begin{align}\label{eq:sum}
        0=\mathring\nabla^j b^i\mathring\nabla_j b_i =  \sum_{c=2}^n \varepsilon_c a(\mathring\nabla_{E_c} b,\mathring\nabla_{E_c} b) = \pm\sum_{c=2}^n a(\mathring\nabla_{E_c} b,\mathring\nabla_{E_c} b).
    \end{align}
    The $c=1$ term is absent as $\mathring\nabla_{E_1} b = \mathring\nabla_b b = 0$ because of the identity \eqref{eq:identity_unit_KV_5} for unit Killing fields. Furthermore, the vectors $\mathring\nabla_{E_c} b$ are all orthogonal to $E_1=b$, because $b^i\mathring\nabla_{E_c} b_i = \tfrac{1}{2}\mathring\nabla_{E_c}(b^i b_i) = 0$ by constancy of the norm. Hence, the vectors $\mathring\nabla_{E_c} b$ lie in the span of $E_2,\dots,E_n$, which is positive  (resp., negative) definite. This means all terms in the sum \eqref{eq:sum} have the same sign and the vanishing of the sum implies the vanishing of each of the individual terms, that is, $a(\mathring\nabla_{E_c} b,\mathring\nabla_{E_c} b)=0$ for each $c=2,\dots, n$. Using again the fact that $a$ is positive (resp., negative) definite on the subspace in which the $\mathring\nabla_{E_c} b$  belong,  this implies that $\mathring\nabla_{E_c} b=0$ for each $c=2,\dots, n$. We already knew that $\mathring\nabla_{E_1} b = \mathring\nabla_b b = 0$, so this simply says that $\mathring\nabla b = 0$, i.e., $b$ is parallel.
\end{proof}
%\FV{This uses the fact that $\overline{\mathrm{Ric}}(b,b)=\sum g(\nab_{E_i} b,\nab_{E_i} b)$, where the vectors $\nab_{E_i} b$ live in a Euclidean subspace. So, if $\mathrm{Ric}=0$, all these vectors have to be $0$, yielding $\nab b=0$.} 

%\SH{Observe/recall that in 4D any ricci flat with timelike unit killing is flat.}

\subsection{An explicit example in $n=5$}
Corollary \ref{cor:solutions_n=3,4} shows that in dimensions $3$ and $4$, the %$\Lambda$-
vacuum equation has only trivial solutions (up to global topology) in the Einstein-Kropina class. To produce interesting examples of solutions, we thus have to start in dimension $5$. Theorem \ref{riem and lor classification} shows that for any Einstein-Kropina $\Lambda$-vacuum solution, we have a local splitting $a = (\D x^1)^2 + \bar a$ with $\bar a$ being a Ricci-flat $(n-1)$-dimensional metric. But conversely, such a splitting immediately implies % (since $a$ is Einstein) 
that $a$ is Ricci-flat and $b:=\partial_{x^1}$ is parallel (hence Killing), so  $L:=L_{a,b}$  is Einstein and, by the theorem, it solves the field equation. So, we have the following corollary of Theorem \ref{riem and lor classification}. 
\begin{cor}
    Let $L=\left(\frac{A}{\beta}\right)^2$ be a Kropina metric expressed such that $\langle b,b\rangle=1$ and suppose that $a$ is Riemannian or Lorentzian. Then, $L$ is a %$\Lambda$-
    vacuum solution of Einstein type if and only if, locally, $a$ is a product metric  $a = (\D x^1)^2 + \bar a$, with $b=\partial_{x^1}$ and $\bar a$ some Ricci-flat $(n-1)$-dimensional metric.
\end{cor}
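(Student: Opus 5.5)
The plan is to prove both directions by reducing to Theorem \ref{riem and lor classification} and Theorem \ref{theor:Krop_Einstein_cond}. Note first that with the normalization $\langle b,b\rangle=1$, the Einstein condition for $L$ is equivalent to $a$ being Einstein and $b$ being Killing. Also, a vacuum solution here means a solution of \eqref{lambda vacuum eq} for some $\Lambda$; by Theorem \ref{general vacuum solutions} this forces $\Lambda=0$, so it is the same as a solution of \eqref{vacuum eq}.

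For the forward direction, assume $L$ is an Einstein-type vacuum solution. Then Theorem \ref{riem and lor classification} applies directly, since $a$ is Riemannian or Lorentzian, $\langle b,b\rangle=1$ and $L$ is Einstein. The implication (i)$\Rightarrow$(v) gives local coordinates with $a=(\D x^1)^2+\bar a$, where $\bar a$ depends only on $x^2,\dots,x^n$ and $b=\partial_{x^1}$. The same theorem also states that $\bar a$ is Ricci-flat in this case. So nothing new is needed here.

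For the converse, suppose that locally $a=(\D x^1)^2+\bar a$ with $b=\partial_{x^1}$ and $\bar a$ Ricci-flat. I would check three things.
\begin{enumerate}[label=(\alph*)]
\item $b$ is parallel for $a$. The Christoffel symbols with an index equal to $1$ vanish, because no component depends on $x^1$ and there are no cross terms. Hence $\nab_j b^i=\mathring\Gamma^i_{j1}=0$. In particular $b$ is Killing, and $\langle b,b\rangle=a_{11}=1$.
\item $a$ is Ricci-flat. The Ricci tensor of a product is the direct sum of the Ricci tensors of the factors. The line factor contributes zero and $\bar a$ is Ricci-flat, so $\mathring R_{ij}=0$.
\item Conclusion. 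By Theorem \ref{theor:Krop_Einstein_cond}, $L$ is Einstein with $\mathrm{Ric}=0$. Since (iv) of Theorem \ref{riem and lor classification} holds, $L$ solves \eqref{lambda vacuum eq} with $\Lambda=0$. Alternatively, one can use the sufficiency part of Theorem \ref{general vacuum solutions}: $a$ is Ricci-flat, $b$ is Killing, and $\nab_j b^k\nab_k b^i=0$ holds trivially because $\nab b=0$.
\end{enumerate}

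None of these steps is a real obstacle. The only points needing care are that $n>2$ is required for Theorem \ref{riem and lor classification}; for $n=2$ one uses Corollary \ref{cor:unitKV_2D} and the remark after it instead. The other is the elementary product computation in the converse. Since everything happens on a coordinate neighborhood and both the field equation and the Einstein condition are pointwise, the local statement is all that is needed.
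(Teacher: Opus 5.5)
Your proposal is correct and follows essentially the same route as the paper. The forward direction reads off the local splitting and the Ricci-flatness of $\bar a$ from Theorem~\ref{riem and lor classification}; for the converse you check that the splitting makes $b$ parallel and $a$ Ricci-flat, so that $L$ is Einstein by Theorem~\ref{theor:Krop_Einstein_cond} and a solution by Theorem~\ref{riem and lor classification}. Your separate treatment of $n=2$ via Corollary~\ref{cor:unitKV_2D} is a small extra that the paper leaves implicit.
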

\noindent This result tells us exactly how to construct new examples of vacuum solutions. 

\begin{ex}
    Let $\bar a$ be the Euclidean Schwarzschild metric (see e.g. \cite{Hawking:1976jb} and references therin) in geometrized units ($c=G=1$):
    \begin{align}\label{eq:_Eucl_SS_metric}
        \bar a =  \left(1-\frac{2 M}{ r}\right)\D t^2 + \left(1-\frac{2 M}{ r}\right)^{-1}\D r^2 + r^2\left(\D \theta^2 +\sin^2\theta\,\D\phi^2\right) \qquad (r>2M).
    \end{align}
    This metric is positive definite and Ricci-flat; equivalently, $-\bar a$ is negative definite and Ricci-flat. According to the above corollary, the pair
    \begin{align}
        a &= \D x^2 + (-\bar a) = \D x^2 - \left(1-\frac{2 M}{ r}\right)\D t^2 - \left(1-\frac{2 M}{ r}\right)^{-1}\D r^2 - r^2\left(\D \theta^2+\sin^2\theta\,\D\phi^2\right),\\
        b &= \partial_x
    \end{align}
    defines a (Ricci-flat) Einstein-Kropina metric $L = A^2/\beta^2$ that solves the $\Lambda$-vacuum equation with $\Lambda=0$. Since $a$ is Lorentzian and $b$ is timelike, $L$ has Lorentzian signature within the timelike cones, by Proposition \ref{prop:signature_Kropina_metric}.
\end{ex}

\section{Discussion}

In this work, we studied Einstein-Kropina metrics and their role as solutions to Pfeifer and Wohlfarth's vacuum equation in Finsler gravity, including a possibly nonvanishing cosmological constant (or function). Our first goal was to extend the characterization of Einstein-Kropina metrics obtained in the positive definite setting by Zhang, Shen and others to arbitrary signatures. We showed that the same result continues to hold: a Kropina metric $L=(A/\beta)^2$ with (w.l.o.g.) %$\langle b,b\rangle=1$
$a(b,b)=1$ is Einstein if and only if the underlying pseudo-Riemannian metric $a$ is Einstein and the vector field $b$ is Killing. This allows for the construction of examples of Einstein-Kropina metrics from pseudo-Riemannian Einstein manifolds admitting a unit Killing field. Using this characterization, we constructed various explicit Einstein-Kropina metrics in both Riemannian and Lorentzian signatures, and we showed that they exist in arbitrary signature $(p,q)$ with $p+q=n\geq 5$.\\

Our second goal was to determine which Einstein-Kropina metrics solve the $\Lambda$-vacuum equation \eqref{lambda vacuum eq}. The resulting classification turns out to be remarkably restrictive. We showed that any Einstein-Kropina solution of the $\Lambda$-vacuum equation must in fact be Ricci-flat and weakly weakly Landsberg (if $a$ is Riemannian or Lorentzian, it must even be Berwald). The other surprise was that the cosmological constant necessarily vanishes: in the present framework, Einstein-Kropina metrics are incompatible with the presence of a nonvanishing $\Lambda$. In this sense, the theory dynamically enforces $\Lambda=0$ within the Einstein-Kropina class (in stark contrast to Einstein metrics in general relativity). An important question that arises naturally is whether there exist (properly Finslerian) solutions to the gravitational field equation with $\Lambda\neq 0$ at all: currently, the answer is unknown.\\

In the Riemannian and Lorentzian cases, the constraints become even stronger. The condition $\mathring\nabla_j b^k \mathring\nabla_k b^i=0$ forces the Killing vector field $b$ to be parallel, implying that the underlying metric $a$ locally splits as a product of a one-dimensional factor and a Ricci-flat metric of one dimension lower. Consequently, Einstein-Kropina vacuum solutions are locally determined by Ricci-flat metrics of dimension $n-1$. In particular, in dimensions $n=3$ and $n=4$, this implies that all solutions are flat. In $n=2$, no nontrivial Einstein-Kropina metrics exist to begin with, independently of the field equation. In this way, nontrivial examples only occur for $n\geq5$.\\

Several questions remain open. One concerns the existence of nontrivial Einstein-Kropina metrics in dimension $4$. %Preliminary results suggest that this case is particularly constrained, and 
Taking into account the constraints, it remains an interesting problem to determine whether such metrics exist at all. Another open question concerns the possibility of (necessarily Ricci-flat) Einstein-Kropina solutions to the %$\Lambda$-
vacuum equation in more general signatures, where the condition $\mathring\nabla_j b^k \mathring\nabla_k b^i=0$ does not appear to imply that $b$ is parallel. Interestingly, such cases would lead to examples of Kropina metrics that are weakly weakly Landsberg but not Berwald, and are therefore directly relevant to the Landsberg unicorn problem. Finally, as mentioned above, it would be natural to investigate solutions to the $\Lambda$-gravity equation in more general classes of pseudo-Finsler metrics, beyond the Einstein-Kropina setting.

\section*{Acknowledgements}
We would like to acknowledge Prof. Nicoleta Voicu for valuable correspondence, and students Marta Amaro Calatayud and Francesco Botta for their insights on topics closely related to this work.  

FFV was partially supported by the project PID2024-156031NB-I00 funded by MICINN/AEI and FEDER.

\appendix

\section{List of identities}\label{app:identities}

Below, some standard identities are listed for a vector field $b$ with certain properties. For the definition of the symbols and other conventions, we refer to the main text.\\

\noindent \textbf{Constant norm}\\
If $b$ has constant (pseudo-)norm, i.e. $\langle b,b\rangle \equiv a(b,b) = constant$, then
\begin{align}\label{eq:identity_unit_norm}
    b^i\mathring\nabla_jb_i = 0.
\end{align}

\noindent \textbf{Killing vector}\\
If $b$ is a Killing vector, then
\begin{align}
    \mathring\nabla_ib_j + \mathring\nabla_jb_i &= 0,\\
    \mathring\nabla_i\mathring\nabla_jb_k &= \mathring R^\ell{}_{ijk}b_\ell.\label{eq:2nd_cov_der_of_Killing_field}
\end{align}

\noindent \textbf{Killing vector with constant norm}\\
%If $b$ is a Killing vector with constant norm, i.e. $\langle b,b\rangle \equiv a(b,b) = constant$,
If both of the above conditions hold, then
\begin{align}
    \mathring\nabla_ib_j + \mathring\nabla_jb_i &= 0,\\
    \mathring\nabla_i\mathring\nabla_jb_k &= \mathring R^\ell{}_{ijk}b_\ell, \label{eq:identity_unit_KV_6} \\
    b^i\mathring\nabla_jb_i = b^i\mathring\nabla_ib_j &= 0,\label{eq:identity_unit_KV_5} \\
    b^is_{ik} = b^is_{ki} &= 0, \label{eq:identity_unit_KV_1}\\
    b^k\mathring\nabla_is_{jk} &= -s_{kj}s_i{}^k,   \label{eq:identity_unit_KV_2} \\
    \mathring\nabla^ks_{ik} &= -\mathring R_{ki}b^k, \label{eq:identity_unit_KV_3}\\
     s_{ij}s^{ij} &= \mathring R_{ij} b^i b^j.   \label{eq:identity_unit_KV_4} 
\end{align}

\section{Proof of the Einstein condition}\label{app:proof_Einstein_cond}

We sketch the proof of Theorem \ref{theor:Krop_Einstein_cond}, repeated below for clarity. Apart from several small modifications (especially in the 2D case), it is very similar to the proof given by Zhang and Shen \cite{ZHANG201380} in the positive definite setting.
\renewcommand{\theprop}{\ref{theor:Krop_Einstein_cond}}
\begin{theor}[Einstein condition for Kropina metrics]
    A Kropina metric $L=\left(\frac{A}{\beta}\right)^2$ with $n\geq 2$ and written such that $\langle b,b\rangle=1$ is Einstein if and only if $a$ is Einstein and $b$ is Killing. In this case, if $\mathring R_{ij} = \kappa\,a_{ij}$ then $\mathrm{Ric} = \tfrac{\kappa}{4}L$.
\end{theor}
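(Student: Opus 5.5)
We propose to follow the strategy of Zhang and Shen, working throughout with the 2-homogeneous quantities already introduced. The core is an explicit formula for $\mathrm{Ric}$ of $L=A^2/\beta^2$ with $\langle b,b\rangle=1$, in terms of $a$, $r_{ij}$, $s_{ij}$ and their $\nab$-derivatives. We start from the spray \eqref{eq:spray_Krop}: with $\langle b,b\rangle=1$ it is $G^i=\mathring\Gamma^i_{jk}y^jy^k+H^i$, where $H^i$ is a polynomial in $y$ divided by powers of $\beta$ and $A$. We then use the standard formula expressing the Ricci scalar of $G^i$ as $\mathring{\mathrm{Ric}}=\mathring R_{ij}y^iy^j$ plus terms in $\nab H$, $\bar\partial H$ and quadratic expressions in $H$ (the analogue of \eqref{eq:relation_curv_finsler_riemm} for a deformed spray). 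Since $\langle b,b\rangle$ is constant, we may use $b^i r_{ij}+b^i s_{ij}=0$, which follows from \eqref{eq:identity_unit_norm}. After multiplying by a suitable power of $\beta$ and $A$, we get an identity
\begin{align*}
\beta^{m}A^{p}\,(\mathrm{Ric}-\lambda L)=\sum_k \Xi_k(x,y)\,A^k ,
\end{align*}
in which each $\Xi_k$ is a polynomial in $y$.

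\textbf{Sufficiency.} This direction is computational. If $b$ is Killing, then $r_{ij}=0$ and the spray reduces to $G^i=\mathring\Gamma^i_{jk}y^jy^k-\frac{A}{\beta}s^i{}_jy^j$. We substitute the unit-Killing identities \eqref{eq:identity_unit_KV_6}--\eqref{eq:identity_unit_KV_4} into the Ricci formula. Then $\nab s$ becomes curvature terms contracted with $b$, and $s_i{}^\ell s_{j\ell}$ terms cancel against curvature terms through \eqref{eq:identity_unit_KV_2}. What should remain is $\mathrm{Ric}=\tfrac14\frac{A^2}{\beta^2}\,\mathring R_{ij}b^ib^j+(\text{terms proportional to }\mathring R_{ij}y^iy^j-\kappa A)$. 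For Einstein $a$ this gives $\mathrm{Ric}=\tfrac{\kappa}{4}L$. We would verify this reduction with xAct, as the paper does for the other formulas.

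\textbf{Necessity.} Assume $\mathrm{Ric}=\lambda(x)L$. In the polynomial identity above we isolate the terms that are not divisible by $A$ (equivalently, by $\beta$). The key algebraic point is that, for $n\ge3$, $A$ is an irreducible quadratic polynomial in $y$, whatever its signature. Also, $A$ and $\beta$ are coprime because $b$ is non-null, which is guaranteed by Corollary \ref{cor:det-krop}. So divisibility arguments in the polynomial ring $\R[y]$ remain valid in every signature; Zhang--Shen used positivity only to know that $A$ is irreducible. Comparing the parts not divisible by $A$ gives first that $r_{ij}y^iy^j$ is divisible by $A$, i.e. $r_{ij}=c(x)a_{ij}$. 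Contracting with $b^ib^j$ and using $b^ir_{ij}=-b^is_{ij}$ gives $c=-s_{ij}b^ib^j=0$, so $b$ is Killing. With $r=0$, the reduction from the sufficiency part turns the identity into $\mathring R_{ij}y^iy^j-\kappa' A=0$, where $\kappa'=4\lambda$ up to the $\beta$-terms; comparing coefficients then shows $a$ is Einstein with $\kappa=4\lambda$.

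\textbf{Main obstacle: $n=2$.} Here $A$ may be reducible: in Lorentzian signature $A=(\ell_1)(\ell_2)$ is a product of two null linear forms. The irreducibility argument therefore fails, and this is where the paper's ``slight modification'' is needed. Our first attempt would be to argue directly with the null factors. In that case $\beta$ is coprime to both $\ell_1$ and $\ell_2$ (since $b$ is non-null), so we can run the divisibility argument separately with respect to $\ell_1$ and $\ell_2$. Each comparison forces $r_{ij}y^iy^j$ to vanish on both null lines, and hence again $r_{ij}=c\,a_{ij}$. In dimension 2 every metric is Einstein, so only the Killing conclusion is needed. The delicate points are the bookkeeping of this coefficient comparison and the complex-factor case in definite signature, which is handled the same way over $\mathbb C$.
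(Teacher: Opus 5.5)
Your plan is essentially the paper's proof. It computes the Ricci scalar explicitly, then runs unique-factorization arguments based on the irreducibility of $A$ for $n\ge 3$ in any signature, upgrades conformal Killing to Killing via $\langle b,b\rangle=1$, and treats the real null factors of $A$ separately in 2D Lorentzian signature. There your direct observation that $\beta$ cannot be proportional to a null factor because $b$ is non-null replaces the paper's contraction argument giving $n=1$. You should make explicit what the computation must deliver for necessity, namely $A^2\beta^2\,\mathrm{Ric}=3(n-1)\beta^4(r_{ij}y^iy^j)^2+A\,p$ with $p$ polynomial in $y$: it is the fact that the part not divisible by $A$ is a nonzero multiple (since $n\ge 2$) of $\beta^4(r_{ij}y^iy^j)^2$ that yields $A\mid r_{ij}y^iy^j$, and likewise the mod-$A$ reduction of $\beta^2\mathring R_{ij}y^iy^j-A\beta\,\mathring R_{ij}y^ib^j+A^2(\tfrac14\mathring R_{ij}b^ib^j-\lambda)=0$ that gives $a$ Einstein with $\kappa=4\lambda$.
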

\begin{proof}
Assume w.l.o.g. that $L=A^2/\beta^2$ with $\langle b,b\rangle = 1$. Then it can be shown by explicit computation that\footnote{Note that $A,\beta\neq 0$ on the domain of definition $\mathcal A$ of the Kropina metric. That $\beta \neq 0$ is clear from the definition of the Kropina metric; the condition $A\neq 0$ follows from the fact that the fundamental tensor would degenerate at $A=0$, by \eqref{eq:fund_tensor_Krop}.}
\begin{align}\label{eq:_Ricci_Krop_unit}
    \text{Ric} = \frac{1}{A^2\beta^2}\left(3(n-1)\beta^4\left(r_{ij}y^i y^j\right)^2 + A p\right),
\end{align}
where $p:\mathcal A\subset TM \to \R$ is some function that is in particular polynomial in $y$. 

For the first implication, suppose that $L$ is Einstein. We will show that $a$ must be Einstein and $b$ Killing. The Einstein condition $\text{Ric}(x,y) = \lambda(x)\,A(x,y)^4/\beta(x,y)^2$ can be written, multiplying both sides by $A^2 \beta^2$, as
\begin{align}\label{eq:Einstein_cond_Krop}
    3(n-1)\beta^4\left(r_{ij}y^i y^j\right)^2 = A (\lambda A^5-p) 
\end{align}

In what follows, we will use the standard result that the ring of multivariate polynomials over any field is a unique factorization domain (UFD), which means in particular that if a product of polynomials is divisible by some irreducible polynomial, then one of the factors must be divisible by said irreducible polynomial. We distinguish two cases, $n\geq 3$ and $n=2$.\\

\noindent\textbf{The case $\bm{n\geq 3}$}\\
We start by observing that if $n \geq 3$, then $A$ is an irreducible polynomial in $y$. For if it were not, then we could write $A = ql$ for two $y$-linear functions $q = q_i(x)y^i$ and $l = l_i(x)y^i$, and taking two vertical derivatives would yield $2a_{ij} = q_il_j + l_iq_j$, showing that $a$ has at most rank 2, which would contradict nondegeneracy if\footnote{In the positive definite setting, $2a_{ij} = q_il_j + l_iq_j$ would still provide a contradiction  %contradict nondegeneracy 
if $n=2$, which is why a separate argument is not necessary in that case. However, in $n=2$, %there do exist 
Lorentzian metrics % which 
can be factored as a product of two linear forms; in that case, there is not yet a contradiction. An example is the Minkowski metric, with quadratic form %$\D s^2 = (\D x^1 + \D x^0)(\D x^1 - \D x^0)$.
$ds^2=\left(y^1 + y^0\right)\left(y^1-y^0\right)$: it can be put as $ q_il_j + l_iq_j$. %That is why in the indefinite case we need to discuss the $n=2$ case separately.
} $n\geq 3$.

Since the RHS of \eqref{eq:Einstein_cond_Krop} is divisible by $A$, the LHS must be as well, and by the above reasoning, this means that either $\beta$ or $r_{ij}y^i y^j$ must be divisible by $A$. Since $\beta$ is clearly not divisible by $A$, it follows that there exists some $\rho:M\to\R$ such that
\begin{align}
   r_{ij}(x)\,y^i y^j = \rho(x) A(x,y).
\end{align}
Taking two vertical derivatives yields $r_{ij} = \rho a_{ij}$, which says that $b$ is conformally Killing. Since $\langle b,b\rangle=1$, this furthermore implies that $b$ must be Killing, i.e., $r_{ij} = 0$ and $\rho = 0$ (this follows by contracting the conformal Killing equation with $b^ib^j$ and applying \eqref{eq:identity_unit_norm}). Substituting the latter condition into \eqref{eq:Einstein_cond_Krop}, multiplying both sides by $A^2$, and employing the identities \eqref{eq:identity_unit_KV_1} and \eqref{eq:identity_unit_KV_2}, the Einstein condition reduces to 
\begin{align}\label{eq:red_Einstein_cond}
    \beta^2 \mathring{R}_{ij}y^iy^j + A\left[A\left(\tfrac{1}{4}s_{ij}s^{ij} - \lambda \right) + \beta y^i\nab^ks_{ik}\right] = 0.
\end{align}
By a similar argument to the one above, it follows from the irreducibility of $A$ that $\mathring{R}_{ij} = \kappa\,a_{ij}$ for some $\kappa:M\to\R$. Hence, $a$ is Einstein with Einstein coefficient $\kappa$, which must be constant according to the Schur theorem. Next, note that \eqref{eq:red_Einstein_cond} can also be written as
\begin{align}\label{eq:red_Einstein_con_2}
    A^2\left(\tfrac{1}{4}s_{ij}s^{ij} - \lambda\right)+ \beta\left[\beta \mathring{R}_{ij}y^iy^j
    + A y^i\nab^ks_{ik}\right] = 0.
\end{align}
Since $\beta$ is also irreducible, it follows from this that $\tfrac{1}{4}s_{ij}s^{ij} - \lambda$ must be divisible by $\beta$. (Note that $A$ cannot be divisible by $\beta$ because %that would render $a$ of at most rank 2 and hence degenerate.) 
 it is irreducible.)However, since $\tfrac{1}{4}s_{ij}s^{ij} - \lambda$ is a polynomial of degree $0$ in $y$, this is only possible when it vanishes identically. It follows, using \eqref{eq:identity_unit_KV_4}, that 
\begin{align}
    \lambda = \tfrac{1}{4}s_{ij}s^{ij} = \tfrac{1}{4}\mathring{R}_{ij}b^ib^j = \tfrac{1}{4}\kappa \langle b,b\rangle = \tfrac{1}{4}\kappa.
\end{align}
Thus we have shown that $a$ is Einstein, $b$ Killing, and the Einstein constants of $L$ and $a$ are related by $\lambda =  \tfrac{1}{4}\kappa$, proving the first implication of the theorem in the case $n\geq 3$.\\

\noindent\textbf{The case $n=2$}\\
In 2D indefinite (hence Lorentzian) signature, the argument above by itself does not guarantee that $A$ is irreducible. If $A$ were irreducible, the proof given above would generalize verbatim to this case. So we may assume w.l.o.g. that $A$ is \textit{not} irreducible, i.e. $A = ql$ for two $y-$linear functions $q = q_i(x)y^i$ and $l = l_i(x)y^i$, and it suffices to derive a contradiction. Note that $q$ and $l$ \textit{are} irreducible, being linear polynomials. Hence the Einstein condition \eqref{eq:Einstein_cond_Krop} now implies that if $r_{ij}y^iy^j$ is not divisible by $A$ then $\beta$ must be divisible by $q$ or $l$. W.l.o.g we assume $\beta$ to be divisible by $q$, which says that $q_i = \gamma b_i$ for some $\gamma\in\R$. But this means that $A = \gamma \beta l$. For the sake of readability we will absorb the $\gamma$ into the definition of $l$ from here onwards, i.e. $A = \beta l$. Taking two vertical derivatives yields $2a_{ij} = b_il_j + l_ib_j$. Contracting with $b^ib^j$ and with $a^{ij}$, respectively, yields
\begin{align}
    b_il^i = 1,\qquad b_il^i = n. 
\end{align}
Together these imply that $n=1$, contradicting our assumption that $n=2$, as desired. \\

\noindent\textbf{The converse implication}\\
For the other implication, assume that $a$ is Einstein with Einstein constant $\kappa$ and $b$ is Killing. Then, employing \eqref{eq:identity_unit_KV_3} and the fact, derived above, that $s_{ij}s^{ij} = \kappa$, the LHS of the Einstein condition \eqref{eq:red_Einstein_cond} becomes
\begin{align}
    \beta^2 \kappa\,a_{ij}y^i y^j + A^2\left(\tfrac{1}{4}\kappa - \lambda \right) + A \beta y^i(-\kappa b_i) = \kappa A\beta^2   + A^2\left(\tfrac{1}{4}\kappa - \lambda \right) -\kappa A \beta^2  = A^2\left(\tfrac{1}{4}\kappa - \lambda \right),
\end{align}
which vanishes for $\lambda = \tfrac{1}{4}\kappa$. Hence, $L$ is Einstein with Einstein constant $\lambda = \tfrac{1}{4}\kappa$. This completes the proof.

\end{proof}

\section{Proof of Lemma \ref{lem:unitKV_const_curv}}\label{app:unitKV_const_curv}

Here we sketch the proof of Lemma \ref{lem:unitKV_const_curv}, repeated below for clarity:
{%keep within accolades because this numbering of the lemma is different from the rest
\renewcommand{\theprop}{\ref{lem:unitKV_const_curv}}
\begin{lem}
    Any Riemannian or Lorentzian constant curvature space admitting a unit Killing vector is either flat or locally isometric to:
\begin{itemize}
    \item an odd-dimensional round sphere $S^n$ with $n\geq 3$;
    \item an odd-dimensional $AdS_n$ with $n\geq 3$.
\end{itemize}
\end{lem}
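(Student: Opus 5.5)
The plan is to work locally with a constant curvature metric $a$, $\mathring R_{kij\ell}=K(a_{ki}a_{j\ell}-a_{kj}a_{i\ell})$ with the sign conventions of \eqref{eq:affine_curvatures}, and a Killing field $b$ with $\langle b,b\rangle=1$. We may assume $K\neq0$, since otherwise the space is flat and there is nothing to prove. We use the identities of Appendix \ref{app:identities} for unit Killing fields. The first key identity comes from \eqref{eq:identity_unit_KV_6} applied along $b$. Writing $J^i{}_j=\nab_j b^i$ (skew-adjoint, with $J(b)=0$ and $\operatorname{im}J\perp b$ by \eqref{eq:identity_unit_KV_5}), we differentiate $b^i\nab_i b_k=0$ covariantly. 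Equivalently, we contract \eqref{eq:identity_unit_KV_2}, $b^k\nab_i s_{jk}=-s_{kj}s_i{}^k$, and insert $\nab_i s_{jk}=\nab_i\nab_k b_j$ expressed via the curvature. This yields
\begin{align}
    J^2 = -K\left(\mathrm{id}-\beta\otimes b\right),
\end{align}
where $\beta\otimes b$ denotes the projection $X\mapsto a(b,X)\,b$. In words, on the $a$-orthogonal complement $b^\perp$ (which is nondegenerate since $\langle b,b\rangle=1$) the endomorphism $J$ satisfies $J^2=-K\,\mathrm{id}$.

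The second step is the linear algebra. Rescale so that $K=\pm1$. If $K>0$, then $J/\sqrt K$ is a complex structure on $b^\perp$, so $\dim b^\perp=n-1$ is even and $n$ is odd. It remains to identify the signature. Skew-adjointness gives $a(JX,JX)=-a(X,J^2X)=K\,a(X,X)$, so $J$ is (up to scale) an isometry of $b^\perp$ that is also complex. In the Riemannian case this is consistent, and we get $S^n$ with $n$ odd; $n\geq3$ because $n=1$ is excluded when $n\geq 2$. In the Lorentzian case with $K>0$, we must examine where the timelike direction sits. If $b$ is spacelike (sign $+1$ in mostly-plus convention), then $b^\perp$ has signature $(n-2,1)$. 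A complex structure compatible with $a$ in the sense $a(JX,JY)=K\,a(X,Y)$ forces the signature of $b^\perp$ to have both indices even, which is impossible with a single negative direction. So $b$ must be timelike. After flipping $a\mapsto -a$ to restore $\langle b,b\rangle=1$, this flip also reverses the sign of the curvature, in accordance with the remark on conventions. The result is $AdS_n$ in mostly minus convention with $b^\perp$ negative definite, and again $n$ is odd.

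If instead $K<0$, then $J^2=|K|\,\mathrm{id}$ on $b^\perp$, and $a(JX,JX)=K\,a(X,X)$ makes $J$ an anti-isometry. Then $b^\perp$ must have split signature $(m,m)$: the $\pm\sqrt{|K|}$ eigenspaces are null and of equal dimension. For Riemannian $a$ this is impossible. For Lorentzian $a$, the complement $b^\perp$ has at most one negative direction, so $m\leq1$. That gives $n=3$, and the case is again $AdS_3$ or equivalently the $S^3$-type signature flip; we check that this reduces to the already listed $AdS_3$ after the convention change.

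Finally, local uniqueness of constant curvature spaces (fixed $K$ and signature) identifies $a$ locally with $S^n$ or $AdS_n$, and the examples in the paper show that both occur.

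The main obstacle is the first step: deriving $J^2=-K(\mathrm{id}-\beta\otimes b)$ cleanly with the correct index placement and sign conventions. The second difficulty is keeping track of signature and sign conventions in the Lorentzian case analysis, particularly the $K<0$, $n=3$ subcase and the convention flip.
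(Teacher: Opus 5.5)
Your argument is correct, and it takes a different and tighter route than the paper. The paper proves the lemma through six separate claims. Positivity of $\nab_k b_j\nab^k b^j=\kappa\langle b,b\rangle$, which holds for any Einstein metric, rules out Riemannian $\kappa<0$ and Lorentzian $\kappa>0$ with timelike $b$. The identity you derive, $-J^2=K(\langle b,b\rangle\,\id-b\otimes b^\flat)$, is then used only through determinants, $(\det B)^2=(-K\langle b,b\rangle)^{n-1}$. That gives the parity of $n$ but no signature information. As a result, de Sitter with spacelike $b$ and even-dimensional $AdS_n$ have to be excluded separately, using ambient embedding coordinates, Killing fields $Bx$ with $B\in\mathfrak{so}(n-1,2)$, and analytic continuation; the de Sitter computation is omitted.

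You instead use that $J|_{b^\perp}$ is skew-adjoint. After scaling, it is therefore a metric-compatible complex structure when $K\langle b,b\rangle>0$, and an anti-isometric involution with totally null eigenspaces when $K\langle b,b\rangle<0$. The resulting signature constraints (both indices even, respectively split signature) settle every case at a single point, with no embedding or continuation. They also give a sharper conclusion: in the anti-de Sitter case $b$ must be timelike unless $n=3$, where spacelike unit Killing fields genuinely occur (e.g.\ left-invariant fields on $SL(2,\R)$). In exchange, the paper's route has the advantage that claims 1 and 2 hold for arbitrary Einstein metrics, not only constant curvature ones.

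A few presentational fixes:
\begin{itemize}
\item The sign bookkeeping you flag as the main difficulty disappears once you note how things transform under $a\mapsto -a$. The endomorphism $J$ is unchanged, while $K$ and $\langle b,b\rangle$ both change sign. So $J^2|_{b^\perp}=-K\langle b,b\rangle\,\id$ is convention-free, and only the sign of $K\langle b,b\rangle$ and the signature of $\langle b,b\rangle\,a|_{b^\perp}$ matter.
\item In the $K<0$ Lorentzian step, ``at most one negative direction'' is true only in the mostly plus convention. Say instead that $b^\perp$ has Witt index at most $1$, and $0$ when it is definite.
\item The surviving $n=3$, $K<0$ case is simply $AdS_3$ with its standard mostly plus metric and a spacelike $b$. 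This fits the statement as given here directly; the main-text wording ``in mostly minus convention'' must be read up to an overall sign for it. So drop the remark about an ``$S^3$-type signature flip,'' which has no meaning here.
\item Rescaling to $K=\pm1$ is unnecessary, since you then work with $J/\sqrt{|K|}$ anyway.
\end{itemize}
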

}
\begin{proof}
     Here we fix our signature conventions as $(+\ldots +)$ (Riemannian) or $(-+\ldots+)$ (Lorentzian); together with our other conventions (see \S \ref{sec:conventions}), this ensures that the sphere and de Sitter space have positive Einstein constant and sectional curvature. The lemma follows from the following claims.
    \begin{enumerate}
        \item If $g$ is a Riemannian Einstein metric with a negative Einstein constant, then $g$ does not admit a unit KV.\\

        \textit{Proof.} Assuming we have a unit Killing vector field, the identity \eqref{eq:2nd_cov_der_of_Killing_field} leads to $b^j\nab_k\nab_jb^k = \mathring{R}_{j\ell}b^jb^\ell = \kappa \langle b,b\rangle$, so if $\kappa<0$ then the LHS must be $<0$. On the other hand, using the identities in Appendix \ref{app:identities}, the LHS can be rewritten as $b^j\nab_k\nab_jb^k = \nab_kb_j\nab^kb^j$, which is nonnegative since this is just twice the norm of the 2-form $\tfrac{1}{2}\nab_kb_j \D x^k\wedge\D x^j$. Hence $\kappa\geq 0$.
        \item If $g$ is a Lorentzian Einstein metric with positive Einstein constant, then $g$ does not admit a timelike unit KV;\\
        
        \textit{Proof.} As above, we again have $\nab_kb_j\nab^kb^j = \kappa \langle b,b\rangle$. Assuming $b$ is timelike, $\langle b,b\rangle<0$, it suffices to show that the  LHS is nonnegative. Choose an orthonormal frame $\{e_k\}_{k=0}^{n-1}$ adapted to $b$, i.e. $e_0 = b$. Then using the identities in Appendix \ref{app:identities} one finds that in this frame, $\nab_0b_0 = \nab_0b_k = \nab_kb_0 = 0$ and hence $\nab_kb_j\nab^kb^j =  \sum_{k,j=1}^{n-1}  (\nab_kb_j)^2\geq 0$.

        \item If $g$ is an even-dimensional Riemannian metric of positive constant curvature, then $g$ does not admit a unit KV;

        \textit{Proof.} Contracting the identity \eqref{eq:2nd_cov_der_of_Killing_field} with $b^j$, combining it with the identity $\mathring{R}_{ijk\ell} = K(g_{ik}g_{j\ell}-g_{i\ell}g_{jk})$ where $K$ is the constant sectional curvature, and writing it terms of the endomorphism $J$ introduced at the beginning of \S \ref{section: classification vacuum solutions} yields
        \begin{align}
            -(J^2)^k{}_i = K(\langle b,b\rangle \delta^k_i - b^k b_i).
        \end{align}
        If we regard this as an equation between linear maps and restrict it to $b^\perp$, we have $-J^2|_{b^\perp} = K \langle b,b\rangle\,\id_{b^\perp}$. Since $J^2$ maps $b$ to $0$, and $b^\perp$ to itself, $J^2$ being a square implies that $J^2|_{b^\perp}$ must also be a square, i.e. $J^2|_{b^\perp} = \br B^2 \er = -K \langle b,b\rangle\,\id_{b^\perp}$ for some endomorphism $B$ of $b^\perp$. Taking determinants, $0\leq (\det B)^2 = (-K\langle b,b\rangle)^{\dim M-1}$. Hence, on an even-dimensional Riemannian manifold, $K>0$ yields a contradiction.
        
        \item If $g$ is an even-dimensional Lorentzian metric of positive (resp. negative) constant curvature, then $g$ does not admit a spacelike (resp. timelike) unit KV. \\

        \textit{Proof.} As above, we have again the inequality $0\leq  (-K\langle b,b\rangle)^{\dim M-1}$. In this case, the combination of $\langle b,b\rangle>0$ (resp $\langle b,b\rangle<0)$, a positive (resp. negative) $K$, and even $\dim M$ leads to a contradiction.        
    
    \end{enumerate}
    Claims 1-4 leave only several possibilities. In Riemannian signature (claims 1 and 3), unless the metric is flat, the manifold must be odd-dimensional with positive curvature, hence it must be locally isometric to an odd-dimensional sphere, where $\dim M=1$ is excluded since $S^1$ is flat. In Lorentzian signature (claims 2 and 4), if the metric is not flat, the following situations are possible. First, if the Einstein constant is positive (hence $M$ is locally de Sitter), $b$ must be spacelike and $\dim M$ odd. Claim 5 below shows that this is impossible. Second, if the Einstein constant is negative (hence $M$ is locally $AdS$), then either $\dim M$ is odd (as desired), or $\dim M$ is even. Claim 6 below shows that the latter is also impossible. This leaves only the allowed cases stated in the lemma, completing the proof.

    \begin{enumerate}
        \item[5.] $dS_n$ ($n\geq 2$) does not (even locally) admit a spacelike unit Killing vector field;\\
        
        \textit{Proof.} This may be proven using embedding coordinates and writing down the explicit expression for a generic Killing vector, which will be a linear combination of rotation and boost generators, and then showing that it is not possible for such a linear combination to have constant positive norm. In the case of $dS$ (in contrast to $AdS$ below), this is a straightforward computation and we omit it here.
        
        \item[6.] $AdS_n$ ($n\geq 2$) with $n$ even does not (even locally) admit a unit norm Killing vector field.\\

        \textit{Proof.} The case $n=2$ is already excluded by the proof of Corollary \ref{cor:unitKV_2D}, so we may assume that $n\geq 3$. A Killing vector on an open subset of $AdS_n$ can be written, in embedding coordinates, as $b = (B x)^i\partial_i$ for some matrix $B$ in the Lie algebra $\mathfrak{so}(n-1,2)$, hence satisfying $B^T\eta = -\eta B$, where $\eta = \text{diag}(1,\dots,1,-1,-1)$. The norm of $b$ is given by the quadratic form $\langle b,b\rangle = (B x)^T\eta B x = x^T(B^T\eta B)x$, which is assumed to be constant on an open subset of $AdS_n$. Now, $b = (B x)^i\partial_i$ clearly extends to a Killing vector on all of $AdS_n$, and by analytic continuation, the norm $\langle b,b\rangle$ must be constant on all of $AdS_n$. %However, it follows e.g. from Hilbert's Nullstellensatz that the only quadratic forms that are constant on the defining quadric $x^T\eta x=-1$ are scalar multiples of the defining quadratic form. \\ 
        However, constancy of $\langle b,b\rangle$ on $AdS_n$ implies that $0 = \D (\langle b,b\rangle)(v) = 2 x^TB^T\eta B v$ for any $v$ in the tangent space to $AdS_n$ at $x$, i.e. whenever $x^T\eta v = 0$. In other words, $(\eta x)^\perp\subset (B^T\eta Bx)^\perp$, hence (taking another orthogonal complement), span($B^T\eta Bx$)$\subset$ span($\eta x$), so $B^T\eta Bx = \alpha(x)\eta x$ for $x\in AdS_n$. But then for any $y = rx$ with $0\neq r\in\R$, we have $B^T\eta By = rB^T\eta Bx = r\alpha(x)\eta x = \psi(y)\eta y$ with $\psi(y) = \alpha(y/r)$. We will show that $\psi$ must be constant. All timelike vectors of the ambient space are of the $rx$ with $x\in AdS_n$. Hence if $v,w$ are two linearly independent timelike vectors such that their sum is also timelike, then the requirement that $y\mapsto \psi(y)\eta y$ is linear, yields $\psi(v) = \psi(w)$. Now, any timelike vector $v$ has a timelike open neighborhood $W$ such that $v+w$ is timelike for any $w \in W$; hence $\psi$ is constant on $W$. But then the linear map $B^T\eta B$ agrees with a scalar multiple of the identity on the open set $W$; by linearity, it must therefore equal that scalar multiple of the identity on the whole ambient space. Hence $B^T\eta B = \psi \eta$ and then it follows that, with $x\in AdS$, $\langle b,b\rangle = x^TB^T\eta Bx = \psi x^T \eta x = -\psi$. Taking determinants, we obtain $\det B^2 = (-\langle b,b\rangle)^{n+1}$. On the other hand, $B^T\eta = -\eta B$ implies that $\det B = (-1)^{n+1}\det B$, which for even $n$ implies that $\det B = 0$, hence $\langle b,b\rangle = 0$.
    \end{enumerate}
    
\end{proof}

\bibliographystyle{utphys}
\bibliography{bib}

@article{ZHANG201380,
title = "{On Einstein-Kropina metrics}",
journal = {Differential Geometry and its Applications},
volume = {31},
number = {1},
pages = {80-92},
year = {2013},
issn = {0926-2245},
doi = {https://doi.org/10.1016/j.difgeo.2012.10.011},
url = {https://www.sciencedirect.com/science/article/pii/S0926224512000988},
author = {Xiaoling Zhang and Yi-Bing Shen}
}

@Book{ChernShen2005,
  author    = {Shiing-Shen Chern and Zhongmin Shen},
  title     = {Riemann-Finsler geometry},
  publisher = {World Scientific},
  year      = {May 2005},
  series    = {Nankai Tracts in Mathematics Vol. 6},
}

@article{villasenor:2025,
author = "F. Villase\~nor, Fidel",
title = "{Schur theorem for the Ricci curvature of any weakly Landsberg Finsler metric}",
journal = {Israel Journal of Mathematics},
pages = {1-25},
year = {2025},
doi = {https://doi.org/10.1007/
s11856-025-2797-z},

}

@phdthesis{villasenor:2024,
  title        = "{Lorentz-Finsler geometry and Einstein equations}",
  author       = "F. Villase\~nor, Fidel",
  year         = "April 2024",
  address      = "Spain",
  note         = {Available at \url{https://digibug.ugr.es/handle/10481/94842}},
  school       = "University of Granada",
  type         = {PhD thesis}
}

@inproceedings{BaoRobles,
  title        = "{Ricci and flag curvatures in Finsler geometry}",
  author       = {Bao, David and Robles, Colleen},
  year         = 2004,
  booktitle    = {A sampler of Riemann-Finsler geometry},
  publisher    = {Cambridge University Press},
  series       = {Mathematical Sciences Research Institute
Publications},
  volume       = 50,
  pages        = {197--259},
  editor       = "{D. Bao et al.}",
  doi          = {10.1017/9781009701280.006}
}

@article{Pfeiferetal2025,
  author    = {Pfeifer, Christian and Voicu, Nicoleta and Friedl-Sz{\'a}sz, Annam{\'a}ria and Popovici-Popescu, Elena},
  title     = {From kinetic gases to an exponentially expanding universe: the Finsler-Friedmann equation},
  journal   = {Journal of Cosmology and Astroparticle Physics},
  year      = {2025},
  volume    = {2025},
  number    = {10},
  pages     = {050},
  doi       = {10.1088/1475-7516/2025/10/050},
  publisher = {IOP Publishing}
}

@article{Voicu_spacetime_cond_2023,
AUTHOR = {Voicu, Nicoleta and Friedl-Sz{\'a}sz, Annam{\'a}ria and Popovici-Popescu, Elena and Pfeifer, Christian},
TITLE = {The {F}insler Spacetime Condition for $(\alpha,\beta)$-Metrics and Their Isometries},
JOURNAL = {Universe},
VOLUME = {9},
YEAR = {2023},
NUMBER = {4},
ARTICLE-NUMBER = {198},
URL = {https://www.mdpi.com/2218-1997/9/4/198},
ISSN = {2218-1997},
DOI = {10.3390/universe9040198}
}

@article{Cheng2017,
  author    = {Cheng, XinYue and Shen, ZhongMin},
  title     = {{Einstein Finsler metrics and Killing vector fields on Riemannian manifolds}},
  journal   = {Science China Mathematics},
  year      = {2017},
  volume    = {60},
  number    = {1},
  pages     = {83--98},
  doi       = {10.1007/s11425-016-0303-6},
  url       = {https://doi.org/10.1007/s11425-016-0303-6},
  issn      = {1869-1862}
}

@article{Hawking:1976jb,
    author = "Hawking, S. W.",
    title = "{Gravitational instantons}",
    reportNumber = "Print-77-0294 (CAMBRIDGE)",
    doi = "10.1016/0375-9601(77)90386-3",
    journal = "Phys. Lett. A",
    volume = "60",
    pages = "81",
    year = "1977"
}

@article{Einstein_warped_Shen,
author = {Chen, Bin and Shen, Zhongmin and Zhao, Lili},
title = {Constructions of {Einstein Finsler} metrics by warped product},
journal = {International Journal of Mathematics},
volume = {29},
number = {11},
pages = {1850081},
year = {2018},
doi = {10.1142/S0129167X18500817}
}

@article{TANG201883,
title = {Some remarks on {Einstein-Randers} metrics},
journal = {Differential Geometry and its Applications},
volume = {58},
pages = {83-102},
year = {2018},
issn = {0926-2245},
doi = {https://doi.org/10.1016/j.difgeo.2018.01.002},
url = {https://www.sciencedirect.com/science/article/pii/S0926224518300329},
author = {Xiaoyun Tang and Changtao Yu}
}

@article{Sasaki-Einstein,
    author = {James Sparks},
    title = "{Sasaki-Einstein manifolds}",
    journal = "Surveys in Differential Geometry",
    volume = {16},
    pages = {265-324},
    year = {2011},
    doi = {https://doi.org/10.48550/arXiv.1004.2461},
    url = {https://arxiv.org/abs/1004.2461}
}

@incollection{Sasaki-Einstein-AdS/CFT,
  author    = {Gauntlett, Jerome P. and Martelli, Dario and Sparks, James},
  title     = {{S}asaki-{E}instein Geometry, {GK} Geometry and the {AdS/CFT} Correspondence},
  booktitle = {Half a Century of Supergravity},
  editor    = {Ceresole, A. and Dall'Agata, G.},
  year      = {2026},
  publisher = {Cambridge University Press},
  note      = {Invited contribution},
  eprint    = {2503.01950},
  archivePrefix = {arXiv},
  primaryClass  = {hep-th},
  doi       = {10.48550/arXiv.2503.01950}
}

@article{Sasaki-Einstein_on_S3xS2,
    author = "Gauntlett, Jerome P. and Martelli, Dario and Sparks, James and Waldram, Daniel",
    title = "{Sasaki-Einstein metrics on {$S^2 \times S^3$}}",
    eprint = "hep-th/0403002",
    archivePrefix = "arXiv",
    doi = "10.4310/ATMP.2004.v8.n4.a3",
    journal = "Adv. Theor. Math. Phys.",
    volume = "8",
    number = "4",
    pages = "711--734",
    year = "2004"
}

@article{Gauntlett:2004hh,
    author = "Gauntlett, Jerome P. and Martelli, Dario and Sparks, James F. and Waldram, Daniel",
    title = "{A new infinite class of {S}asaki-{E}instein manifolds}",
    eprint = "hep-th/0403038",
    archivePrefix = "arXiv",
    doi = "10.4310/ATMP.2004.v8.n6.a3",
    journal = "Adv. Theor. Math. Phys.",
    volume = "8",
    number = "6",
    pages = "987--1000",
    year = "2004"
}

@Article{axioms8010006,
AUTHOR = {Perrone, Domenico},
TITLE = {Contact Semi-{R}iemannian Structures in {CR} Geometry: Some Aspects},
JOURNAL = {Axioms},
VOLUME = {8},
YEAR = {2019},
NUMBER = {1},
ARTICLE-NUMBER = {6},
URL = {https://www.mdpi.com/2075-1680/8/1/6},
DOI = {10.3390/axioms8010006}
}

@Article{Lambda-vacuum-paper,
AUTHOR = {F. Villase\~nor, Fidel and Heefer, Sjors and Fuster, Andrea},
TITLE = {Finsler gravity with a cosmological `constant'},
note    = {In preparation},
}

@misc{MartinGarcia:xAct,
  author       = {Mart\'in-Garc\'ia, Jos\'e M.},
  title        = {{xAct:} Efficient tensor computer algebra for {Mathematica}},
  howpublished = {\url{http://xact.es/}},
  year         = {2002--2018},
  note         = {Software package}
}

@phdthesis{Heefer:2024Thesis,
    author = "Heefer, Sjors",
    title = "{Finsler geometry, spacetime \& Gravity -- From metrizability of Berwald spaces to exact vacuum solutions in Finsler gravity}",
    eprint = "2404.09858",
    School = "Eindhoven University of Technology",
    archivePrefix = "arXiv",
    primaryClass = "gr-qc",
    year = "2024"
}

@article{Heefer_2023_Finsler_grav_waves,
doi = {10.1088/1361-6382/acecce},
year = {2023},
publisher = {IOP Publishing},
volume = {40},
_number = {18},
pages = {184002},
author = {S. Heefer and Andrea Fuster},
title = {Finsler gravitational waves of $(\alpha,\beta)$-type and their observational signature},
journal = {Class. Quantum Gravity},
	eprint         = "2302.08334",
	archivePrefix  = "arXiv"
}

@article{Pfeifer:2011xi,
	author         = {Pfeifer, C. and Wohlfarth, M. N. R.},
	title          = "{Finsler geometric extension of Einstein gravity}",
	journal        = "Phys. Rev. D",
	volume         = "85",
	pages          = "064009",
	year           = "2012",
	eprint         = "1112.5641",
	archivePrefix  = "arXiv",
	_primaryClass   = "gr-qc",
    doi = "https://doi.org/10.1103/PhysRevD.85.064009"
}

@article{Hohmann_2019,
  title = {Finsler gravity action from variational completion},
  author = {Hohmann, Manuel and Pfeifer, Christian and Voicu, Nicoleta},
  journal = {Phys. Rev. D},
  volume = {100},
  _issue = {6},
  pages = {064035},
  numpages = {18},
  year = {2019},
  _month = {Sep},
  publisher = {American Physical Society},
  doi = {10.1103/PhysRevD.100.064035},
	eprint         = "1812.11161",
	archivePrefix  = "arXiv"
}

@phdthesis{Robles_2003, 
series={Retrospective Theses and Dissertations, 1919-2007}, 
title="{Einstein metrics of Randers type}", 
url={https://open.library.ubc.ca/collections/ubctheses/831/items/1.0079500}, DOI={http://dx.doi.org/10.14288/1.0079500}, 
school={University of British Columbia}, 
author={Robles, Colleen}, 
year={2003}, 
collection={Retrospective Theses and Dissertations, 1919-2007}}

@article{YU2010290,
title = {On {E}instein $m$-th root metrics},
journal = {DiffEerential Geometry and its Applications},
volume = {28},
number = {3},
pages = {290-294},
year = {2010},
issn = {0926-2245},
doi = {https://doi.org/10.1016/j.difgeo.2009.10.011},
url = {https://www.sciencedirect.com/science/article/pii/S092622450900120X},
author = {Yaoyong Yu and Ying You}
}

@article{ZhangXia2014,
  author  = {Zhang, X. and Xia, Q.},
  title   = "{On Einstein Matsumoto metrics}",
  journal = {Science China Mathematics},
  volume  = {57},
  pages   = {1517--1524},
  year    = {2014},
  doi     = {10.1007/s11425-014-4788-0}
}

@book{besse1987einstein,
  author    = {Arthur L. Besse},
  title     = {Einstein manifolds},
  series    = {Ergebnisse der Mathematik und ihrer Grenzgebiete. 3. Folge},
  volume    = {10},
  publisher = {Springer-Verlag},
  year      = {1987},
  doi       = {10.1007/978-3-540-74311-8}
}

@incollection{Bao2007,
  author    = {David Bao},
  title     = {On two curvature-driven problems in {R}iemann-{F}insler geometry},
  booktitle = "{Finsler geometry, Sapporo 2005 --- In memory of Makoto Matsumoto}",
  series    = {Advanced Studies in Pure Mathematics},
  volume    = {48},
  pages     = {19--71},
  year      = {2007},
  publisher = {Mathematical Society of Japan},
  doi       = {10.2969/aspm/04810019}
}

@article{ZShen2014,
  author    = {Shen, Zhongmin and Yang, Guojun},
  title     = {On a class of weakly {E}instein {F}insler metrics},
  journal   = {Israel Journal of Mathematics},
  year      = {2014},
  volume    = {199},
  pages     = {773--790},
  doi       = {10.1007/s11856-013-0060-5},
  publisher = {Springer}
}

\end{document}